\documentclass{article}
\usepackage{graphicx} 
\usepackage{authblk}  
\usepackage{subcaption} 
\usepackage{booktabs} 
\usepackage{colortbl}
\usepackage{tgpagella}

\def\notes{0}

\usepackage{amsmath,amssymb,amsthm,verbatim,relsize,mathrsfs,hyperref,soul}
\usepackage{algorithm}
\usepackage{mathtools}
\usepackage[noend]{algpseudocode}
\hypersetup{colorlinks = true,linkcolor = violet, anchorcolor = violet, citecolor = violet, filecolor = red,urlcolor = red}
\usepackage [english]{babel}
\usepackage{blindtext}
\usepackage{framed}
\usepackage{fullpage}
\usepackage[shortlabels]{enumitem}
\usepackage{graphicx}
\usepackage{xcolor}
\usepackage{subcaption}
\usepackage{subfiles}
\usepackage{dsfont}
\usepackage{extarrows}
\usepackage{bbm}
\usepackage[normalem]{ulem}
\usepackage{todonotes}
\usepackage{subcaption}
\usepackage{graphicx}
\usepackage{tcolorbox}
\usetikzlibrary{shapes.geometric}
\tcbuselibrary{skins}
\usepackage[numbers]{natbib}  

\usepackage{setspace}

\usepackage{cleveref}
\usepackage{hyperref}

\graphicspath{ {./figures/} }

\makeatletter
\newcommand\footnoteref[1]{\protected@xdef\@thefnmark{\ref{#1}}\@footnotemark}
\makeatother

\newcommand{\ignore}[1]{}

\ifnum\notes=1
    \newcommand{\mynote}[3]{\marginpar{\tiny\sf \color{#2} {#1}: {#3}}}
    \newcommand{\inlinenotes}[3]{{\color{#2} [{\bf {#1}:} {\sf {#3}}]}}
    \newcommand{\strikeout}[2]{{\color{#1}  \sout{\color{black} #2}}}
\else 
    \newcommand{\mynote}[3]{}
    \newcommand{\inlinenotes}[3]{}
    
    \newcommand{\strikeout}[2]{}
\fi

\newcommand{\asnote}[1]{\mynote{A}{purple}{#1}}

\newcommand{\shnote}[1]{\mynote{S}{blue}{#1}}

\newcommand{\jnote}[1]{\mynote{J}{teal}{#1}}
\newcommand{\jsout}[1]{\strikeout{teal}{#1}}
\newcommand{\jadd}[1]{\textcolor{teal}{#1}}

\newcommand{\R}{\mathbb{R}}

\newcommand{\eps}{\varepsilon}

\newcommand{\cA}{\mathcal{A}}
\newcommand{\cG}{\mathcal{G}}

\newcommand{\paren}[1]{\left( #1 \right)}

\newcommand{\defeq}{\vcentcolon=}

\algrenewcommand\algorithmicrequire{\textbf{Input:}}
\algrenewcommand\algorithmicensure{\textbf{Output:}}

\newtheorem{theorem}{Theorem}[section]

\newtheorem{definition}[theorem]{Definition}
\newtheorem{lemma}[theorem]{Lemma}

\newcommand{\Sec}[1]{\hyperref[sec:#1]{Section~\ref*{sec:#1}}} 
\newcommand{\Eqn}[1]{\hyperref[eq:#1]{(\ref*{eq:#1})}} 
\newcommand{\Fig}[1]{\hyperref[fig:#1]{Fig.\,\ref*{fig:#1}}} 
\newcommand{\Tab}[1]{\hyperref[tab:#1]{Tab.\,\ref*{tab:#1}}} 
\newcommand{\Thm}[1]{\hyperref[thm:#1]{Theorem\,\ref*{thm:#1}}} 
\newcommand{\Fact}[1]{\hyperref[fact:#1]{Fact\,\ref*{fact:#1}}} 
\newcommand{\Lem}[1]{\hyperref[lem:#1]{Lemma\,\ref*{lem:#1}}} 
\newcommand{\Lems}[2]{\hyperref[lem:#1]{Lemmas\,\ref*{lem:#1}} and~\hyperref[lem:#2]{\ref*{lem:#2}}} 
\newcommand{\Prop}[1]{\hyperref[prop:#1]{Prop.\,\ref*{prop:#1}}} 
\newcommand{\Cor}[1]{\hyperref[cor:#1]{Corollary~\ref*{cor:#1}}} 
\newcommand{\Conj}[1]{\hyperref[conj:#1]{Conjecture~\ref*{conj:#1}}} 
\newcommand{\Def}[1]{\hyperref[def:#1]{Definition~\ref*{def:#1}}} 
\newcommand{\Alg}[1]{\hyperref[alg:#1]{Algorithm~\ref*{alg:#1}}} 
\newcommand{\Proc}[1]{\hyperref[proc:#1]{Procedure~\ref*{proc:#1}}} 
\newcommand{\Step}[1]{\hyperref[step:#1]{Step~\ref*{step:#1}}} 
\newcommand{\Steps}[2]{\hyperref[step:#1]{Steps~\ref*{step:#1}} and~\hyperref[step:#2]{\ref*{step:#2}}} 
\newcommand{\Stepss}[3]{\hyperref[step:#1]{Steps~\ref*{step:#1}},~\hyperref[step:#2]{\ref*{step:#2}}, and~\hyperref[step:#3]{\ref*{step:#3}}} 
\newcommand{\Ex}[1]{\hyperref[ex:#1]{Ex.~\ref*{ex:#1}}} 
\newcommand{\Clm}[1]{\hyperref[clm:#1]{Claim~\ref*{clm:#1}}} 
\newcommand{\Inv}[1]{\hyperref[inv:#1]{Invariant~\ref*{inv:#1}}} 
\newcommand{\Rem}[1]{\hyperref[rem:#1]{Remark~\ref*{rem:#1}}} 
\newcommand{\Obs}[1]{\hyperref[obs:#1]{Observation~\ref*{obs:#1}}} 

\newtcolorbox{findingbox}[1]{
    colback=white,              
    colframe=black,             
    fonttitle=\bfseries\color{white},  
    colbacktitle=black,         
    title={Finding #1},         
    arc=3mm,                    
    boxrule=0.5pt,             
    left=5pt,right=5pt,top=5pt,bottom=5pt  
}

\title{Differentially Private Modeling of Disease Transmission within Human Contact Networks}
\date{}

\author[1]{Shlomi Hod\thanks{Equal contribution.}}
\author[1]{Debanuj Nayak$^{*}$}
\author[2,3]{Jason R. Gantenberg}
\author[1]{Iden Kalemaj}
\author[2,4]{Thomas A. Trikalinos}
\author[1]{Adam Smith}

\affil[1]{Department of Computer Science, Boston University}
\affil[2]{Department of Health Services, Policy \& Practice, School of
Public Health, Brown University}
\affil[3]{Department of Epidemiology, School of Public Health, Brown University}
\affil[4]{Department of Biostatistics, School of Public Health, Brown University}

\begin{document}

\maketitle

\vspace{-3em}


        
        





\begin{abstract}
    Epidemiologic studies of infectious diseases often rely on models of contact networks to capture the complex interactions that govern disease spread, and ongoing projects aim to vastly increase the scale at which such data can be collected. 
    However, contact networks may include sensitive information, such as sexual relationships or drug use behavior. 
    Protecting individual privacy while maintaining the scientific usefulness of the data is crucial.
    We propose a privacy-preserving pipeline for disease spread simulation studies based on a sensitive network that integrates differential privacy (DP) with statistical network models such as stochastic block models (SBMs) and exponential random graph models (ERGMs).
    Our pipeline comprises three steps: (1) compute network summary statistics using \emph{node-level} DP (which corresponds to protecting individuals' contributions); 
    (2) fit a statistical model, such as an ERGM, using these summaries, which allows generating synthetic networks reflecting the structure of the original network; and (3) simulate disease spread on the synthetic networks using an agent-based model. 
    We evaluate the effectiveness of our approach using a simple Susceptible-Infected-Susceptible (SIS) disease model under multiple configurations. We compare both numerical results, such as simulated disease incidence and prevalence, as well as qualitative conclusions such as intervention effect size, 
    on networks generated with and without differential privacy constraints. 
    Our experiments are based on egocentric sexual network data from the ARTNet study (a survey about HIV-related behaviors). Our results show that the noise added for privacy is small relative to the other sources of error (sampling and model misspecification, for example). This suggests that, in principle, curators of such sensitive data can provide valuable epidemiologic insights while protecting privacy. 
\end{abstract}

\newpage
\setcounter{tocdepth}{2}
\tableofcontents
\newpage

\section{Introduction}
\label{sec:intro}

Simulation models of infectious disease transmission are important tools for developing causal explanations \cite{Epstein2008-why,Ackley2021-dynamical} and public health responses to both endemic and epidemic pathogens \cite{Lessler2016-mechanistic,Lofgren2014-opinion}. For many pathogens, the structures of human contacts act as the substrate through which the pathogen propagates, governing transmission dynamics \cite{Keeling2005,Keeling2005-implications} and epidemic potential \cite{Althouse2020-superspreading}. For this reason, epidemiologic models (e.g., network agent-based models \cite{Jenness2017EpiModelAR}) often incorporate detailed information about human contact networks as a means to simulate realistic transmission scenarios. 

Information on individuals and their relationships (connections) is naturally modeled as a a graph (hereon, \emph{network graph} or \emph{network}), where nodes represent individuals and edges the relationships between them~\cite{Jenness2017EpiModelAR, Morris2004-network,Rathkopf2018-network}. In infectious disease modeling, edges represent relationships relevant to transmission.
For instance, edges in a  model of HIV or sexually transmitted disease may represent sexual \cite{goodreau2012drives,jenness2016impact,Aral2008-understanding} or injection drug use partnerships \cite{Rutherford2016-control}, whereas in a model for influenza edges may represent sufficient conditions for transmission based on physical proximity \cite{mapps2023,mappsWorkshop2023}. 

The detailed network data poses risks to privacy \cite{Ji2017Graph}. Study participants who provide data on disease status or illegal behavior risk re-identification, with potentially grave social or legal consequences \cite{Lazer2009-computational}. This risk is not limited to the release of individual-level information \cite{Ji2017Graph,Backstrom2027Wherefore,Narayanan2009De} and may be incurred even when researchers present data summaries, including network summary statistics \cite{Ellers2019-embeddings}. Such summaries could (indirectly) reveal information about an individual's presence in the dataset. Simply knowing an individual's position in a network---for example, a person that represents the sole connection between two otherwise isolated groups of individuals \cite{Rathkopf2018-network}---could reveal information about this person's identity. Given the extent to which network structures govern transmission dynamics, we suspect that even summary outputs from infectious disease simulations could reveal information that increases the risk of privacy violation, possibly in combination with external information or other summaries derived from the same data. That is, stochastic epidemic simulations may not be inherently privacy-preserving.

\emph{Differential privacy}~\cite{DworkMNS16} (DP) is a widely studied and deployed formal privacy guarantee for data analysis. We say that an algorithm or a statical analysis is differentially private if the probability distribution of its output looks nearly indistinguishable for any two input datasets that differ only in the data of a single individual. In other words, a DP algorithm is robust to changing an individual record in the dataset. A common building block in constructing a differentially private version of a statistical query, such as count and median, is to add calibrated noise that masks the contribution of an individual data point while still providing sufficiently accurate results.

In the context of databases representing human contact networks, where nodes represent individuals, the change of one person's data amounts to changing their connections in the network, and possibly all of them. This concept is captured by \emph{node differential privacy} (\Cref{def:node-dp}) (or, more concisely node-DP)~\cite{HayLMJ09} where the indistinguishability guarantee applies to two networks which can differ on one node (individual) and all its edges (connections). The current work focuses on node-DP given that we aim to protect all of the information pertaining to one individual in a human contact network.

In order to support the responsible collection, use, and sharing of these growing network datasets for epidemic simulations, we present and evaluate in this paper a pipeline to analyze such networks through node-DP algorithms. In our pipeline, illustrated in \Cref{fig:pipeline}, a differentially private algorithm produces network summaries that are released and subsequently used to generate synthetic networks in which epidemics are simulated. The input data is a network, possibly with additional attributes for nodes such as age and race. The final outputs typically include statistics on these simulations, such as the effect of an intervention on a disease's prevalence.

Our core research question in this work is \textbf{whether a differentially private synthetic network, which provides rigorous privacy constraints, can yield accurate estimation of epidemic simulation outcomes.} That is, whether DP synthetic networks can match the accuracy of both simulations on the original network and non-private synthetic networks commonly used in epidemic research.

\subsection{Summary of Contributions}

\paragraph{(1) A privacy-preserving pipeline for epidemic simulation.} We design and implement a pipeline (\Cref{sec:pipeline}) that enables epidemiologists to perform disease spread simulations on sensitive network data while satisfying node-level differential privacy guarantees.

\paragraph{(2) A comprehensive evaluation framework.} We designed an experimental framework (\Cref{sec:experiment}) to systematically assess the pipeline across multiple dimensions, including network modeling choices, population-level and subgroup-level epidemic statistics, and various sources of variance. This enables studying the effect of DP noise in comparison to other error sources such as modeling, simulation, and network sampling.

\paragraph{(3) Empirical findings on sources of error and variance.} Through extensive experiments on realistic network data, we demonstrate that (\Cref{sec:findings}):
\begin{enumerate}
    \item Network model misspecification, when present, dominates all other sources of bias and is unrelated to differential privacy (\Cref{sec:find1});
    \item Introducing differential privacy has a limited impact on epidemic analysis accuracy for reasonable parameter choices (\Cref{sec:find2});
    \item Accurate epidemic statistics are maintained at the granular subgroup level when network models are correctly specified (\Cref{sec:find3});
    \item Variance introduced by differential privacy is smaller than variance from the network sampling and epidemic simulation steps already present in standard epidemiological practice (\Cref{sec:find4}).
\end{enumerate}

These findings demonstrate the feasibility of privacy-preserving epidemic modeling using differentially private synthetic networks, without substantially compromising the accuracy of simulation analysis. In \Cref{sec:discussion}, we discuss the broader implications of these findings, along with the limitations and directions for future work. Lastly, we present related work in \Cref{sec:related-work}. Our code, including the complete pipeline implementation and evaluation framework, is publicly available.\footnote{\url{https://github.com/shlomihod/epidp}}


\begin{figure}[t]
\includegraphics[width=\linewidth]{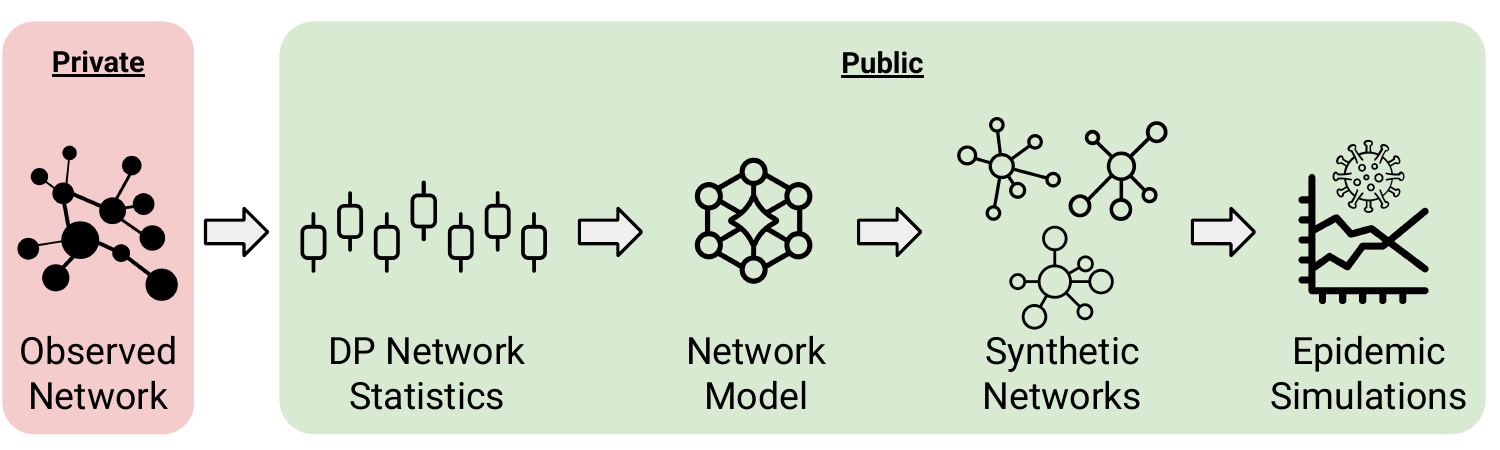}
\centering
\caption{An overview of our pipeline for differentially private (DP) simulation of infectious disease transmission. We begin by estimating network statistics using empirical network data. Next, we render these statistics private by adding noise and use them to fit network models (e.g., exponential random graph models). Finally, we run epidemic simulations over networks simulated based on these fitted models. These epidemic simulations inherit the privacy protections of the noised network statistics.}
\label{fig:pipeline}
\end{figure}

\begin{figure}[tb]
\includegraphics[width=0.7\linewidth]{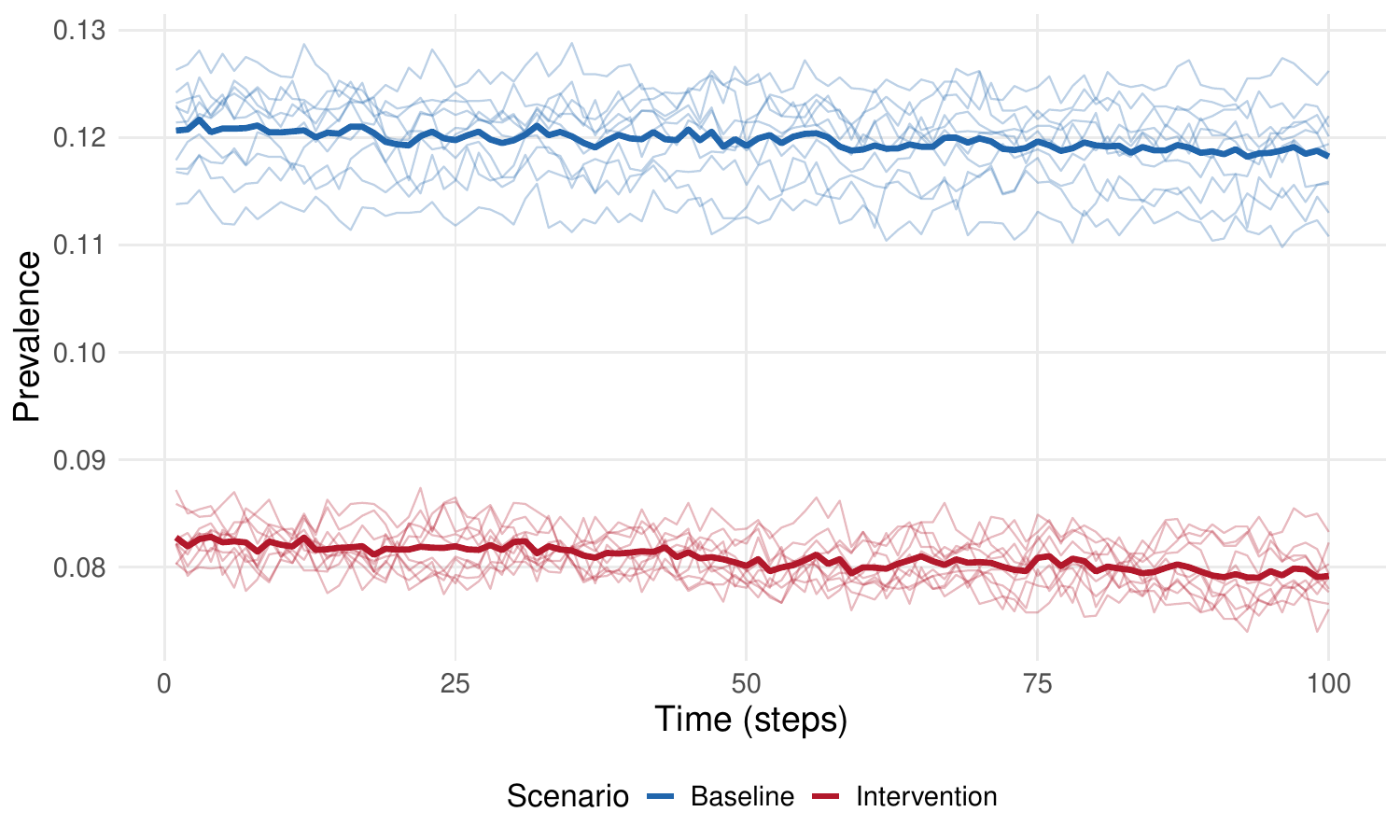}
\centering
\caption{Epidemic dynamics comparing baseline and intervention scenarios under the ``high prevalence'' condition. The agent-based model simulates transmission over a synthetic network generated from an ERGM fitted to an observed network estimated based on ARTNet data, without differential privacy (refer to \Cref{sec:setup} for details). Individual simulations ($n=10$) shown with transparency; solid lines indicate mean prevalence.}
\label{fig:epicurve}
\end{figure}

\section{Pipeline Walkthrough and Background}
\label{sec:pipeline}

This section provides an overview of our pipeline and its components (\Cref{fig:pipeline}). The pipeline is designed to align with common practices in infectious disease modeling and ultimately includes one additional step that provides privacy protection via DP.

The pipeline takes as input a human contact network, the \emph{observed network}, in which nodes represent individuals---possibly attached to demographic attributes (e.g., age and race)---and edges represent relationships between nodes. The definition of a ``relationship'' will vary and could indicate friendships, sexual partnerships, or other modes of social contact relevant for a given research question \cite{Morris2004-network}. The pipeline consists of three steps. 

Differentially private algorithms compute key network statistics (1st step)---such as mixing matrices and degree distributions---which are then used to parameterize statistical network models (2nd step), such as stochastic block models (SBMs) and exponential random graph models (ERGMs). These fitted models generate synthetic networks (3rd step) that preserve essential structural properties while providing differential privacy protection. We then simulate disease transmission on these synthetic networks and conduct multifaceted measurements (4th step).

In the rest of this section, we describe in detail the three components of the pipeline  together. Although our pipeline is general, for readability, we situate it within the context of the experiment we conduct in this work. We work backward in the description of the pipeline, starting with its purpose: conducting epidemic simulations, We first discuss briefly the epidemic simulation step (\Cref{sec:epidemic-simulation}). Subsequently, we present the network modeling step, and in particular, the  statistical models employed for synthetic network generation (\Cref{sec:net-gen-models}). Finally we deep dive into the use differential privacy to release the network statistics utilized to parameterize these models (\Cref{sec:dp-stats}).

\subsection{Epidemic Simulation}
\label{sec:epidemic-simulation}

We focus here on a network agent-based model (ABM)~\cite{Jenness2017EpiModelAR,Hunter2008-ergm} in which agents represent men who have sex with men (MSM) and epidemics are simulated over synthetic networks generated by statistical network models. These network ABMs are a natural place to start because they rely directly on network statistics estimated from empirical data, though one could apply our pipeline to an ABM in which contacts were generated algorithmically and use the DP network statistics instead as targets to calibrate these algorithms.
Our epidemic simulation is based loosely on gonorrhea transmission within sexual networks of MSM. We model a susceptible-infectious-susceptible (SIS) pathogen in which susceptible individuals who are infected become susceptible again once they recover. We simulate two treatment scenarios and two prevalence levels. The two treatment scenarios consist of a ``baseline case'' in which individuals remain untreated for gonorrhea, recovering at the same rate, and an ``intervention case'' in which individuals may be tested and treated, recovering at different rates based on agents' treatment status. 
These stylized scenarios aim to understand the effect of node-DP on epidemic simulations; they mimic a study in which ABMs are used to simulate counterfactual policies and estimate their effects on population-level measures of disease. 
We simulate both the ``baseline'' and ``intervention'' scenarios at two prevalence levels, in which the baseline case model was calibrated to achieve ``high'' ($\approx 12\%$) and ``
low ($\approx 2\%$) stable prevalence of the pathogen.

We defer a detailed presentation of the epidemic simulation design to the experimental design section (\Cref{sec:simulation-technical}). \Cref{fig:epicurve} illustrates the dynamics from multiple simulations conducted on the same network under the high-prevalence condition, using our empirical data.

\subsection{Statistical (Generative) Models of Networks}
\label{sec:net-gen-models}

Although our pipeline is general enough to handle many types of generative network models, in this work we focus on \emph{stochastic block models} (SBM) \cite{holland1983stochastic} and \emph{exponential random graph models} (ERGMs) \cite{hunter2006inference, Krivitsky2014separable}. We choose to utilize these models because they are common within the literature of disease spread simulations \cite{jenness2016effectiveness, jenness2016impact, goodreau2012concurrent, goodreau2012drives, jenness2017incidence, goodreau2017sources,nelson2020modeling,jenness2021dynamic}.

Stochastic block models (SBMs) partition nodes into communities, where the probability of an edge between two vertices depends only on their group memberships. In the simplest case of a network $G$ with $K$ groups, the model is parameterized by a $K \times K$ matrix $f_\mathrm{SBM}(G) \defeq M$, called the mixing matrix, where $M_{ij}$ gives the number of edges between vertices in groups $i$ and $j$.\footnote{Alternatively, Stochastic Block Models (SBMs) are defined using a $K \times K$ probability matrix $P$ where the $P_{ij}$ gives the probability of an edge between a pair of vertices in groups $i$ and $j$. Since the group membership of each node is publicly known, the number of possible vertex pairs for any two choices of groups $i$ and $j$, denoted by $n_{\{i,j\}}$ is also known. It is $n_i \cdot n_j$ for $i \neq j$ and ${n_i \choose 2}$ for $i = j$ where $n_i$ is the number of vertices in group $i$. This allows us relate the two alternative definitions as $P_{ij} = M_{ij}/ n_{\{i,j\}}$ and conclude that the two defintions are equivalent.} This allows modeling of assortative structures (higher within-group connectivity) or disassortative patterns depending on the block structure. The mixing matrix $M$ is the minimal sufficient statistic for an SBM. \Cref{fig:graph-mixing-matrix-example} provides an example of a graph and its mixing matrix.

\begin{figure}[t]
    \centering
\definecolor{tab10blue}{HTML}{0072B2}
\definecolor{tab10orange}{HTML}{D55E00}
\definecolor{tab10magenta}{HTML}{009E73}
    
    \begin{subfigure}[b]{0.45\textwidth}
        \centering
        \begin{tikzpicture}[scale=1, transform shape]
            \node[draw, circle, fill=tab10blue, text=white, minimum size=0.8cm] (A) at (0,2) {A};
            \node[draw, circle, fill=tab10blue, text=white, minimum size=0.8cm] (B) at (-1,1) {B};
            \node[draw, circle, fill=tab10blue, text=white, minimum size=0.8cm] (C) at (1,1) {C};
            \node[draw, rectangle, fill=tab10orange, text=white, minimum size=0.8cm] (D) at (-2,0) {D};
            \node[draw, rectangle, fill=tab10orange, text=white, minimum size=0.8cm] (E) at (2,0) {E};
            \node[draw, diamond, fill=tab10magenta, text=white, minimum size=0.8cm] (F) at (-1,-1) {F};
            \node[draw, diamond, fill=tab10magenta, text=white, minimum size=0.8cm] (G) at (1,-1) {G};
            \draw (A) -- (B);
            \draw (A) -- (C);
            \draw (B) -- (D);
            \draw (C) -- (E);
            \draw (D) -- (E);
            \draw (E) -- (G);
            \draw (D) -- (F);
            \draw (F) -- (G);
        \end{tikzpicture}
        \caption{Example Network with Attributes}
        \label{fig:graph}
    \end{subfigure}
    \hfill
    \begin{subfigure}[b]{0.45\textwidth}
        \centering
        \[
        \begin{array}{c|ccc}
            & \cellcolor{tab10blue}\textcolor{white}{\textbf{Circle}} & \cellcolor{tab10orange}\textcolor{white}{\textbf{Square}} & \cellcolor{tab10magenta}\textcolor{white}{\textbf{Diamond}} \\
            \hline
            \cellcolor{tab10blue}\textcolor{white}{\textbf{Circle}}  & 2 & 2 & 0 \\
            \cellcolor{tab10orange}\textcolor{white}{\textbf{Square}}   & 2 & 1 & 2 \\
            \cellcolor{tab10magenta}\textcolor{white}{\textbf{Diamond}} & 0 & 2 & 1
        \end{array}
        \]
        \caption{Mixing Matrix}
        \label{fig:matrix}
    \end{subfigure}
    \caption{Network representation and its corresponding mixing matrix.} 
    \label{fig:graph-mixing-matrix-example}
\end{figure}

Exponential random graph models (ERGMs) encode network patterns through sufficient statistics and are more expressive than SBMs \cite{Hunter2008-ergm,Krivitsky2023-ergm,Keeling2005}. In particular, an ERGM which only uses the mixing matrix as its sufficient statistic is equivalent to an SBM. Consider a set of descriptive statistics given by a function $f_\mathrm{ERGM}(G)$ taking values in $\mathbb{R}^d$. For example,  $f$ might return a pair of values (so $d=2$) representing the number of edges and the number of triangles in the graph. An ERGM with parameter vector $\theta\in \mathbb{R}^d$ induces the  probability distribution \[\Pr\paren{G\mid \theta}  \propto \exp\paren{\theta^T f_\mathrm{ERGM}(G)} \, . \]

While SBMs capture only edge-based community structure, ERGMs can incorporate a broader set of structural dependencies, including degree distribution and homophily. 

In our experiments, we consider SBMs based on mixing matrix of a single node attribute, \emph{age} ($f_\mathrm{SBM}$). For ERGMs, we use various types of sufficient statistics ($f_\mathrm{ERGM}$): number of edges, number of nodes with a specific degree, nodematch (number of edges between nodes sharing the same attribute value), and nodefactor (number of edges incident to nodes with each attribute value). See \Cref{app:net-stats-defs} for formal definitions of these network statistics.
SBMs and ERGMs based on these statistics are commonly used in epidemiological modeling \cite{jenness2016effectiveness, jenness2016impact, goodreau2012concurrent, goodreau2012drives, jenness2017incidence, goodreau2017sources,nelson2020modeling,jenness2021dynamic}.

\subsection{Releasing Differentially Private Statistics}
\label{sec:dp-stats}

Differential privacy (DP) \cite{DworkMNS16} is a rigorous mathematical framework for releasing aggregate statistics from a dataset while protecting the privacy of individuals whose data is contained within the dataset. At its core, DP is a stability criterion\jsout{;}\jadd{:} it ensures that the output of an algorithm remains essentially the same even if one person’s data is arbitrarily substituted for another's. The philosophical argument for this protection relies on a comparison with the notion of non-participation. Consider a scenario where a person, say Alice's data is not present in a dataset; in this case, intuitively, Alice's privacy is protected as no statistic derived from the dataset can reveal information specifically about her. If an analysis is differentially private, the output produced when Alice contributes data is indistinguishable from the output produced when Alice does not contribute her data. This indistinguishability---an observer cannot distinguish Alice's presence from her absence (or substitution)---captures the notion of privacy that differential privacy provides. 

The level of privacy protection that differential privacy provides is parameterized by a quantity called the \emph{privacy budget} denoted by $\eps$ (epsilon) which takes values in $[0, \infty)$. This parameter controls the maximum level of distinguishability allowed between two datasets which are the same except for, one which includes Alice and the other, where someone else is included instead of Alice. Smaller the value of $\eps$, stronger the privacy protection. 

There is a vast toolbox of algorithms satisfying differential privacy developed through a large body of work over the last 20 years. This research has led to the creation of numerous differentially private algorithms for various types of statistics, each backed by utility guarantees and implemented in software libraries \cite{Holohan2019DiffprivlibTI,Gaboardi2020APF,Yousefpour2021Opacus,Berghel2022Tumult}. Consequently, differential privacy has moved from theory to significant real-world adoption, primarily by governments and major technology companies. Most notably related to this work are the 2020 U.S. Census release \cite{Abowd2018TheUC}, Israel's Live Birth Registry release \cite{Hod2025Differentially}, and Google's COVID-19 Community Mobility Reports \cite{aktay2022google}.

\subsubsection{Differential Privacy for Network Data}

There are two natural adaptations of differential privacy for network datasets: \emph{edge differential privacy} (\Cref{def:edge-dp}) and \emph{node differential privacy} (\Cref{def:node-dp}) (or, more concisely, \emph{edge-DP} and \emph{node-DP})~\cite{HayLMJ09}. For edge-DP, first investigated by Nissim et al.~\cite{NissimRS07}, the indistinguishability requirement applies to any two networks that differ in one edge. In contrast, for node-DP (formulated by \cite{HayLMJ09} and first studied by three concurrent works~\cite{BlockiBDS13,KasiviswanathanNRS13,ChenZ13}), the indistinguishability requirement applies to any two networks that differ in one node and all its adjacent edges. Node-DP is more suitable for datasets representing networks of people because, in this context, it protects each individual (node) and all their connections (adjacent edges).

The current work focuses on node-privacy given that we aim to protect all of the connectivity information pertaining to one individual in a human contact network. Although node-DP is more challenging to satisfy than edge-DP, since it is a more stringent requirement, we find that existing frameworks for node-DP noise addition are sufficient in terms of accuracy for the pipeline we design and study.

\subsubsection{Differential Privacy via Noise Addition}

A blueprint for achieving differential privacy is to introduce a controlled measurement error to a statistic of interest, namely, adding symmetric distributed random noise to the statistic's true value. The \textit{Laplace mechanism} (\Cref{thm:laplace}) is the classic approach to do so, though we note that other algorithms exist. The Laplace mechanism adds calibrated random noise to the true value $f(G)$, where the noise is sampled from a Laplace distribution with scale parameter $\text{GS}(f)/\eps$. Recall that $\eps$ is the privacy budget the tune the level of protection; $\text{GS}(f)$ denotes the \textit{global sensitivity} of the statistic $f$ (\Cref{def:GS})

The global sensitivity of a function $f$, in the context of node-DP, is defined as the maximum change in $f$'s output over all pairs of neighboring graphs---that is, over all possible graphs that differ by a single node (along with its incident edges). Formally, this is $\text{GS}(f) = \max_{G, G'} |f(G) - f(G')|$ where $G$ and $G'$ are neighboring graphs.
Intuitively, adding noise at the scale of the global sensitivity masks the contribution of any single node to the output of the function $f$.

The privacy parameter $\eps$ precisely tunes this masking, introducing a fundamental \textit{privacy-utility tradeoff}. Smaller $\eps$ makes it harder for an adversary to reliably learn whether a particular individual is present in the network or to gain new information attributable to their presence, regardless of what auxiliary information they possess, but this comes at the cost of adding more noise to $f(G)$ and reducing accuracy. From a theoretical perspective, values of $\eps \leq 1$ are generally considered to provide strong privacy protection. At the extremes: $\eps = 0$ yields perfect privacy but completely uninformative outputs, while $\eps = \infty$ releases the true statistic with no privacy protection. Larger $\eps$ improves the utility (accuracy) of the released statistic by reducing noise, but weakens privacy guarantees. Selecting an appropriate value of $\eps$ requires balancing these competing concerns based on the specific application and privacy requirements.

\subsubsection{Addressing Excessive Global Sensitivity}

For all statistics we consider in this work, the global sensitivity is unacceptably high. To illustrate this challenge, consider the mixing matrix entries as a concrete example.
The global sensitivity of an entry in the mixing matrix is $n-1$ where $n$ is the number of nodes in the network: the largest possible change in a mixing matrix entry due to a replacement of a node is from edge count 0 to $n-1$ (or vice versa).
Consider two groups where group $i$ is small (containing, say, a handful of nodes) and group $j$ is large. If these groups are initially disconnected ($M_{ij} = 0$), adding a single node to group $i$ that connects to all nodes in group $j$ can change the edge count $M_{ij}$ substantially. (The maximal change $n-1$ occurs when a group $i$ consists of a single node and a group $j$ of all the rest of nodes; but even with small groups of constant size, the sensitivity can be a large constant.)

This presents a fundamental obstacle: adding noise calibrated to a sensitivity of $n-1$ would completely obscure the true values of quantities that themselves lie in $[0,n-1]$, wiping out all information and rendering the released statistics that might be useless for downstream analysis.

To counter this effect, we preprocess the network before computing the network statistics by bounding node degrees. Specifically, we cap all node degrees at some value, called \emph{truncated degree} and denoted with $\Delta$, by carefully pruning the graph.\footnote{To cap degrees, we use the algorithm of \cite{DayLL16}, which has stability properties that are important for the final privacy guarantee.}
Returning to the mixing matrix example: with this degree bound in place, adding or removing a single node can contribute at most $\Delta$ edges between groups, rather than potentially connecting to all nodes in a large group. For a network of size $n$, this reduces the global sensitivity from $n-1$ to $\Delta$, fundamentally limiting the influence that adding or removing a single node can have on the statistic. When $\Delta \ll n$---a reasonable assumption for many real-world networks, such as human contact networks where individual degrees are typically bounded---this sensitivity reduction is dramatic, enabling substantially more accurate private releases.

Formally, this approach is an instance of the Lipschitz extension framework \cite{KasiviswanathanNRS13,ChenZ13}. Our algorithms take the truncated degree $\Delta$ as an additional parameter and satisfy the usual notion of differential privacy regardless of its value, but are only guaranteed to give accurate answers on graphs where all nodes have degree at most $\Delta$. We explore the role that $\Delta$ plays in the accuracy of epidemic simulations.

While this paper does not propose new differentially private algorithms, our pipeline demonstrates the value of developing algorithms tailored to specific epidemiological models. We provide a detailed description of the differentially private algorithms used in this work, together with all necessary formal definitions, in \Cref{app:dp-algs}.

\section{Experimental Design}
\label{sec:experiment}

\subsection{Sources of Error \& Experimental Conditions}
\label{sec:source-error}

Our differentially private pipeline pipeline introduces errors in its different steps.
To evaluate the accuracy of the simulations produced by the pipeline, we decompose the total error into distinct sources and types (biases and stochastic variation) and design experiments to measure each component independently.

\subsubsection{Sources of Error in the Pipeline}
\label{sec:sources-of-error}

\paragraph{Model Misspecification Error.}
Model misspecification occurs when the statistical model cannot fully capture the structural complexity of the observed network. This error arises from the choice of the model and its network statistics $f$, and is independent of differential privacy. When the model's summary statistics are insufficient, the resulting synthetic networks will systematically differ from the observed network. For example, a stochastic block model fitted only on age mixing matrices cannot capture concurrent partnerships or race homophily.

\paragraph{Transformation Error (Degree Truncation).}
Our DP algorithm requires projecting the input network into a degree-capped version controlled by the truncated degree parameter $\Delta$, a hyperparameter of the network statistic release algorithm that does not control the privacy level. When $\Delta$ is smaller than the true maximum degree, this preprocessing distorts network structure by removing edges from high-degree nodes. This error introduced before noise addition and is most pronounced when $\Delta$ is substantially smaller than the observed maximum degree.

\paragraph{Noise Error (Differential Privacy).}
To achieve differential privacy, calibrated noise is added to the graph statistics based on the privacy budget $\eps$. As discussed earlier, this introduces the privacy-utility tradeoff: smaller $\eps$ provides stronger privacy but more noise, while larger $\eps$ improves accuracy but weakens privacy.

\paragraph{Network Sampling Variance.}
Even when a statistical model is correctly specified and fitted, sampling different synthetic networks from the same model parameters introduces stochastic variation in network structure and subsequent epidemic outcomes.

\paragraph{Epidemic Simulation Variance.}
Disease spread is inherently stochastic. Running multiple epidemic simulations on the same network produces variation in epidemic outcomes due to the random nature of transmission events and disease progression.

\subsubsection{Analyst Perspective}
\label{sec:analyst-prespective}

Thanks to our experimental design (\Cref{fig:experimental-design}), discussed below, we are able to measure each source of error. However, an analyst who executes the pipeline (\Cref{fig:pipeline}) cannot assess all of them. In particular, an analyst who receives the differentially private summary statistics cannot characterize how the randomness introduced by differential privacy propagates through to impact the epidemic simulation---though they do know the distribution of noise added to those statistics. From the analyst's perspective, the pipeline begins at the point where they choose a model and fit it to the summary statistics. They can observe variation arising from sampling multiple networks from the fitted model and running simulations on those networks, but this represents only part of the true variation.

\subsubsection{Experimental Conditions}

We establish several experimental conditions to isolate each error source.

The \textbf{Observed Network} (\Cref{sec:observed-network}) serves as ground truth, generated using an ERGM with ARTNet parameters. This represents the ideal case with no modeling, transformation, or privacy error.

The \textbf{Without DP} condition represents standard epidemiological practice, where models (SBM or ERGM) are fitted directly on statistics from the observed network. This contains only model misspecification error (it presents), eliminating transformation and noise errors.

The \textbf{With DP} conditions introduce privacy protection with $\eps \in \{0.5, 1, 5, 10, \infty\}$ and $\Delta \in \{2, 3, 4, 5\}$. We generate five independent DP releases per configuration (model family, $\eps$, $\Delta$). These conditions contain all three error sources. Notably, the case $\eps = \infty$ applies degree truncation without noise, isolating transformation error alone.

\subsubsection{Comparisons}

We test two model families to understand misspecification effects. The stochastic block model (SBM) uses only the age mixing matrix, creating intentional misspecification---it cannot capture concurrent partnerships, race homophily, or detailed degree distributions. The ERGM specification matches the terms used to generate the observed network, creating minimal misspecification by design.

Comparing \textbf{Observed Network to Without DP} isolates model misspecification error and establishes reference variability from network sampling and epidemic simulation.

Comparing \textbf{Without DP to With DP} isolates the combined error from differential privacy (transformation plus noise). The special case $\eps = \infty$ is particularly informative, revealing transformation error alone by applying degree truncation without noise.

To quantify variance contributions, we execute each step---DP release, network sampling, and epidemic simulation---multiple times. This allows us to measure not only systematic biases but also the relative magnitude of variance from DP release variability, network sampling variance, and epidemic simulation variance.

\subsection{Our Experimental Setup}
\label{sec:setup}

\subsubsection{Observed Network}
\label{sec:observed-network}

The observed network (``ground truth'') used in our pipeline contains 10,000 nodes and is generated from an ERGM with target statistics estimated from the ARTNet study \cite{weiss2020egocentric}, specifically the subset of respondents residing in Atlanta. This dataset represents a large convenience sample of MSM who responded to a survey that included questions about their individual demographic characteristics, the characteristics of their sexual partners, and the frequency of several sexual behaviors. These egocentric network data allow researchers to derive target statistics that describe the demographics, degree distribution, concurrency, and other attributes of a sexual network, making it valuable for epidemiological modeling. In the literature, it is used for studying the transmission of HIV and other sexually transmitted infections (STIs) \cite{Jenness2022-role,Onwubiko2025-optimizing,Maloney2020-projected}.

We opt for using an observed network generated from this data for three primary reasons. First, the ARTNet survey provides detailed cross-sectional information on sexual partnerships among MSM in the United States. These egocentric network data are well-suited for specifying the network models we rely upon in our simulation scenarios. Second, real-world large-scale human contact networks suitable for this research are not publicly available---a limitation that fundamentally motivates this work. Such networks contain highly sensitive personal information, and sharing them without adequate privacy protections risks substantial privacy harms to individuals. Finally, because we are interested in part in understanding the error introduced by model misspecification relative to that introduced by DP, we require control over all factors governing the network structure. Taking as our ``ground truth'' synthetic networks generated from a known ERGM allows us to isolate and quantify the impact of modeling assumptions systematically, which would be impossible with networks of unknown or partially specified generative processes. 

\subsubsection{Simulation of Disease Spread}
\label{sec:simulation-technical}

\paragraph{Epidemic Models}
\label{sec:epidemic-types}

Throughout our experiments, we simulate the spread of an \textit{SIS} (susceptible-infected-susceptible) pathogen using a network ABM. In contrast to compartmental models, which model aggregate transition rates between subgroups of the population within which contacts between individuals are assumed to occur at random (\textit{i.e.}, a ``random mixing'' assumption that leads to ``mass action'' dynamics \cite{Anderson2000-mathematical}), ABMs represent individuals as discrete entities \cite{Epstein1999-agentbased,Macal2016-everything} . These discrete entities may possess their own demographic, disease-related, and/or other attributes, some of which may change over time. In this way, an ABM models a virtual population of individuals and their interactions such that epidemic dynamics emerge in a ``bottom-up'' fashion \cite{Epstein1999-agentbased,Macal2016-everything}. Some ABMs model interaction between agents algorithmically \cite{Macal2016-everything}. Others, such as the network ABMs we use, rely upon a statistical representation of the whole network to generate persistent or transient links between individual agents so as to preserve network-level properties important to the phenomenon under study \cite{Jenness2017EpiModelAR}.

\paragraph{Simulation Protocol}
\label{sec:simulation-protocol}

To simulate an $SIS$ epidemic, the the ABM proceeds in discrete time steps, each representing a single week. Let $G$ be the underlying network of contacts in the population. At any given time step, the following actions take place:

\begin{itemize}
    \item \textbf{Infect step}: Each infected person $u \in I$ in the population can infect other susceptible people $v \in S$ in its neighborhood (i.e., $u$ and $v$  share an edge $\{u,v\} \in E(G)$) with probability $p_{\mathrm{inf}}$ called the \textit{probability of transmission}, independent of time and other nodes in the network. Each pair of agents is assumed to engage in a single contact during each time step.

    \item \textbf{Recover step}: Each infected person in the population has a chance to recover from the disease independently with probability $p_{\mathrm{recov}}$ called the \textit{recovery rate} and to become susceptible again.
\end{itemize}

Upon initializing the model we randomly infect a fraction of nodes in $G$. This initial fraction is a parameter of our simulation called the \emph{initial prevalence}. We then allow the pathogen to spread in the population for 500 time steps, also called the \emph{burn-in} period, bringing the epidemic to a stable equilibrium in the population. We then run the simulation for 100 more time steps and collect the model's epidemic output during this time period, which we refer to as the \emph{analytic window}. After each time step, we compute the following epidemic statistics:

\begin{itemize}
    \item \textbf{Prevalence}: Fraction of people who are infected at the current time step.

    \item \textbf{Incidence rate}: The number of people newly infected during the current time step divided by the total number of susceptible people at the end of the previous time step.
\end{itemize}

We average these epidemic statistics over the analytic window within each simulation run, eliciting the absolute measures of interest for each realization of the model.

\paragraph{Interventions}
\label{sec:interventions}

Often, researchers conducting epidemic simulations are interested in the expected effect of a given intervention on disease transmission or other outcomes. Rather than absolute measures of prevalence or incidence, these studies may focus on relative measures that compare the intervention scenario with a ``baseline scenario''. The description we provide in the prior section constitutes our baseline scenario. We also simulate an intervention scenario called \emph{test and treat}\footnote{\url{https://github.com/shlomihod/epidp/blob/main/R/z_scenario_test_and_treat.R} which is based on \url{https://github.com/EpiModel/EpiModel-Gallery/tree/main/2018-08-TestAndTreatIntervention}} in which individuals may receive diagnostic testing and, if diagnosed, receive treatment that increases their rate of recovery from the infection. In this scenario, the intervention is implemented at the beginning of each simulation run, during which we allow the epidemic to evolve for 500 steps and then collect model output over the subsequent 100 time steps, as before. Therefore, the simulations under the \emph{test and treat} scenario also reach an approximately stable equilibrium by the time the analytic window begins (Figure \ref{fig:epicurve}). The intervention scenario introduces three new parameters to the simulations:
\begin{itemize}
    \item \emph{test rate}, the probability with which eligible individuals in the population are chosen to be tested if they are infected.

    \item \emph{test duration}, once diagnosed, how long until they are eligible to tested again.

    \item Diagnosed individuals have an increased probability of recovery within each time step, called the \emph {recovery rate with treatment}.
\end{itemize}

When the intervention is active, the following actions take place at each time step. Each eligible individual $u$ from the population $V$ is selected independently with probability \emph{test rate}, and those selected are tested. If $u$ is infected the individual is put on treatment for \emph{test duration} time steps. During this time $u$ is not eligible to be tested again, and their rate of recovery increases to \emph {recovery rate with treatment} from the normal \textit{recovery rate}. If an agent on treatment does not recover within the \textit{test duration} window, they revert to "undiagnosed" status and to the normal \textit{recovery rate}.

To quantify the effect of the intervention, we calculate the \textbf{Prevalence Ratio}, defined as the ratio of disease prevalence under intervention to disease prevalence under the baseline scenario and calculated for each (network, simulation) pair as the average prevalence during the analytic window in the \textit{test-and-treat} scenario over the average prevalence during the analytic window in the \textit{baseline scenario}. We also calculate the \textbf{Incidence Rate Ratio} as the comparison of average incidence rates in an analogous manager (results shown in Appendix \ref{app:results}).

\subsubsection{Putting it all Together}
\label{sec:all-together}

\begin{table}[t]
\centering
\caption{Epidemic simulation parameters.}
\label{tab:simulation-parameters}
\begin{tabular}{@{}lll@{}}
\toprule
\textbf{Parameter} & \textbf{Symbol} & \textbf{Value} \\
\midrule
\multicolumn{3}{@{}l}{\textit{General Parameters}} \\
\midrule
Time step duration & -- & 1 week \\
Initial prevalence & -- & 0.2 \\
Burn-in period & -- & 500 \\
Analytic window & -- & 100 \\
\midrule
\multicolumn{3}{@{}l}{\textit{Settings}} \\
\midrule
High transmission \small{(target $\approx 12\%$ prevalence)} & $p_{\mathrm{inf}}$ & 0.75 \\
Low transmission \small{(target $\approx 2\%$ prevalence)} & $p_{\mathrm{inf}}$ & 0.05 \\
\midrule
\multicolumn{3}{@{}l}{\textit{Baseline Scenario Parameters}} \\
\midrule
Recovery rate & $p_{\mathrm{recov}}$ & 0.1 \\
\midrule
\multicolumn{3}{@{}l}{\textit{Intervention Scenario Parameters}} \\
\midrule
Test rate & -- & 0.1 \\
Test duration & -- & 2 \\
Recovery rate (treated) & -- & 0.5 \\
\bottomrule
\end{tabular}
\end{table}

\begin{figure}[t]
\includegraphics[width=\linewidth]{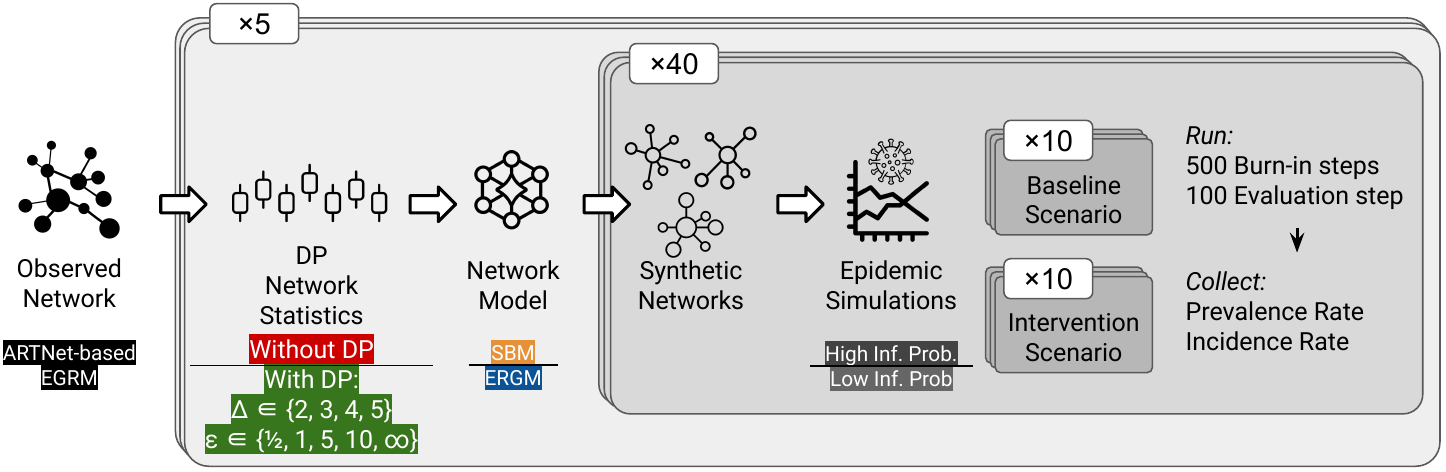}
\centering
\caption{An illustration of the experimental design to evaluate our pipeline as described in \Cref{sec:all-together}. The reader may find it helpful to contrast this diagram with \Cref{fig:pipeline}, which illustrates the pipeline in use.}
\label{fig:experimental-design}
\end{figure}

We conduct four experiment configurations based on two factors: the disease transmission probability and the graph model used for synthetic network generation.

For disease transmission, we consider a high prevalence scenario with infection probability $p_{\text{infection}} = 0.75$ and a low prevalence scenario with $p_{\text{infection}} = 0.05$. In both scenarios, the initial prevalence was set to $0.2$, and the recovery rate is $0.1$. As discussed in \Cref{sec:simulation-technical}, the intervention scenario has additional parameters. Test rate is set to $0.1$, test duration is set to $2$, and recovery rate with treatment is set to $0.5$. Simulation parameters are detailed in \Cref{tab:simulation-parameters}.

For network models, we use either stochastic block models (SBM) or exponential family random graph models (ERGM), following standard epidemiological practice. These two factors yield four experiment configurations: high prevalence with SBM, high prevalence with ERGM, low prevalence with SBM, and low prevalence with ERGM.

Within each experiment configuration, we assess the impact of differential privacy on the accuracy of the epidemic simulation by varying two parameters: the privacy parameter $\eps \in \{1/2, 1, 5, 10, \infty\}$ and the truncated degree parameter $\Delta \in \{2, 3, 4, 5\}$. For each combination of $\eps$ and $\Delta$, we produce 5 DP releases of the network statistics (as detailed in \Cref{sec:net-gen-models} and \Cref{app:net-stats-defs}). For each such DP release, we fit either a SBM or an ERGM on the private graph statistics computed---using different statistics for each model type---and then sample 40 synthetic networks from the fitted model. For each such synthetic network, we run two simulation scenarios, one with and without intervention. For each scenario, we run 10 disease spread simulations that first run for 500 burn-in time steps and then 100 time steps that are used to collect data for the experiment. In each such run, we store the prevalence and incidence rate at the population level and at the subgroup levels for both attributes \emph{age} and \emph{race}.

As baselines for comparison, we also run the complete experimental procedure described above on two settings: the true observed network and the no-DP condition (no degree truncation and using exact network statistics). For both baselines, we follow the same protocol of sampling 40 synthetic networks (for the no-DP case), running both intervention and non-intervention scenarios, and conducting 10 disease spread simulations per scenario with the same burn-in and data collection periods.

The experiments were run on Boston University Shared Computing Cluster (SCC)\footnote{\url{https://www.bu.edu/tech/support/research/computing-resources/scc}} for approximately $17,000$ CPU hours.

\section{Findings}
\label{sec:findings}

In this section, we describe our findings regarding the utility evaluation of our pipeline (\Cref{sec:pipeline}) based on our experimental design (\Cref{sec:experiment}). Recall that our objective is to study the impact of applying differential privacy to epidemiological simulations, so our findings focus on systematic comparisons with and without privacy protection. Here we present the results for prevalence only, as the incidence rate results follow a similar pattern and are presented in \Cref{app:results}.

In this section, we describe our findings regarding the utility evaluation of our pipeline (\Cref{sec:pipeline}) based on our experimental design (\Cref{sec:experiment}). Recall that our objective is to study the impact of introducing of differntial privacy to an epidemgoical simulation, so the findings centers around carefully comparing simulations with and withut diferential privacy. Heree present results for prevalence only, as the incidence rate results follow a similar pattern and are presented in \Cref{app:results}.

\subsection{Network Model Misspecification}
\label{sec:find1}

\begin{findingbox}{1}
If present, network model misspecification is the largest source of bias, which is unrelated to differential privacy. 
\end{findingbox}

\begin{figure}[h!]
    \centering
    \begin{subfigure}[b]{\textwidth}
        \centering
        \includegraphics[width=\textwidth]{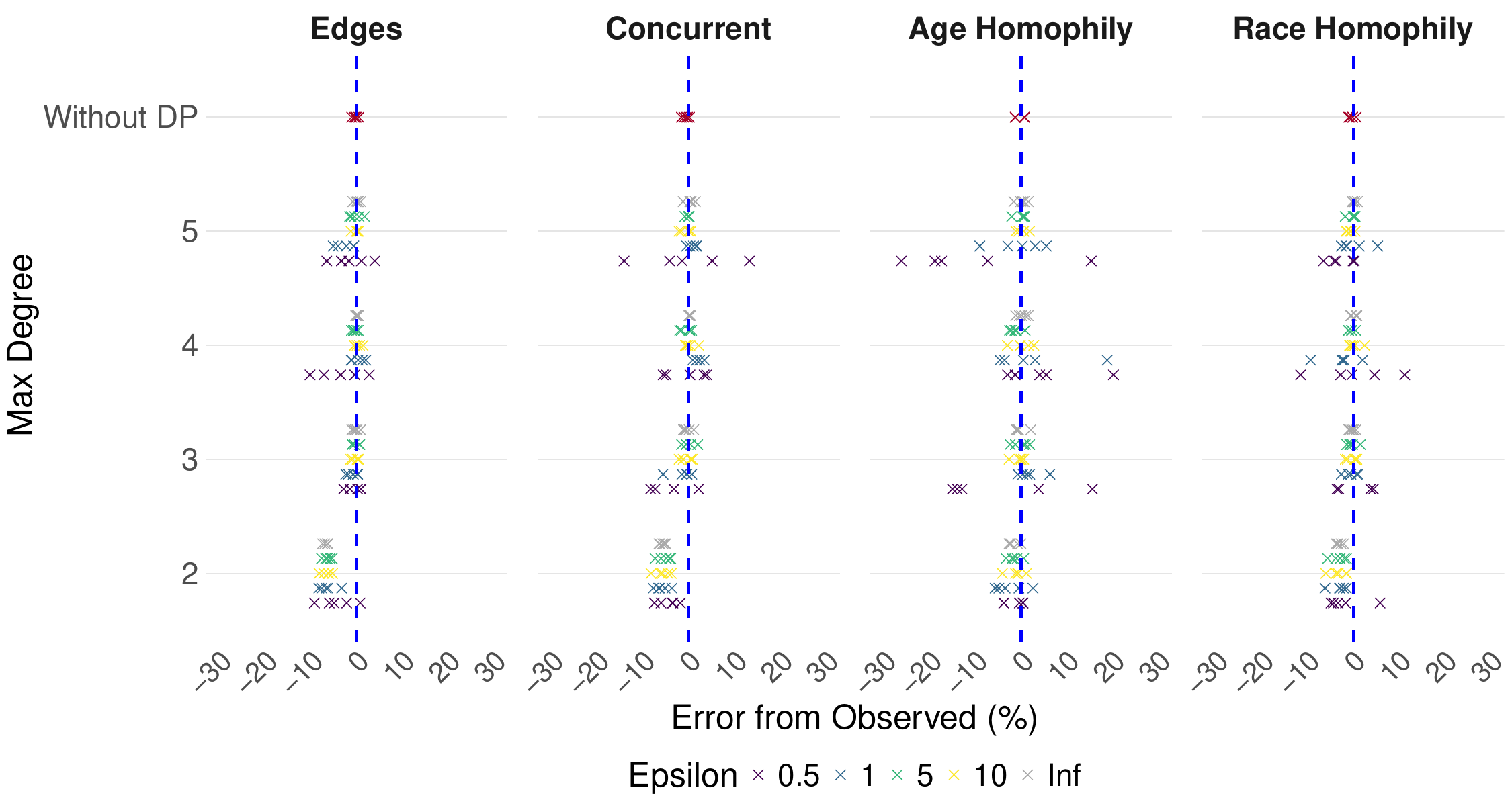}
        \caption{ERGM}
        \label{fig:quality_metrics_ergm}
    \end{subfigure}
    \hfill
    \begin{subfigure}[b]{\textwidth}
        \centering
        \includegraphics[width=\textwidth]{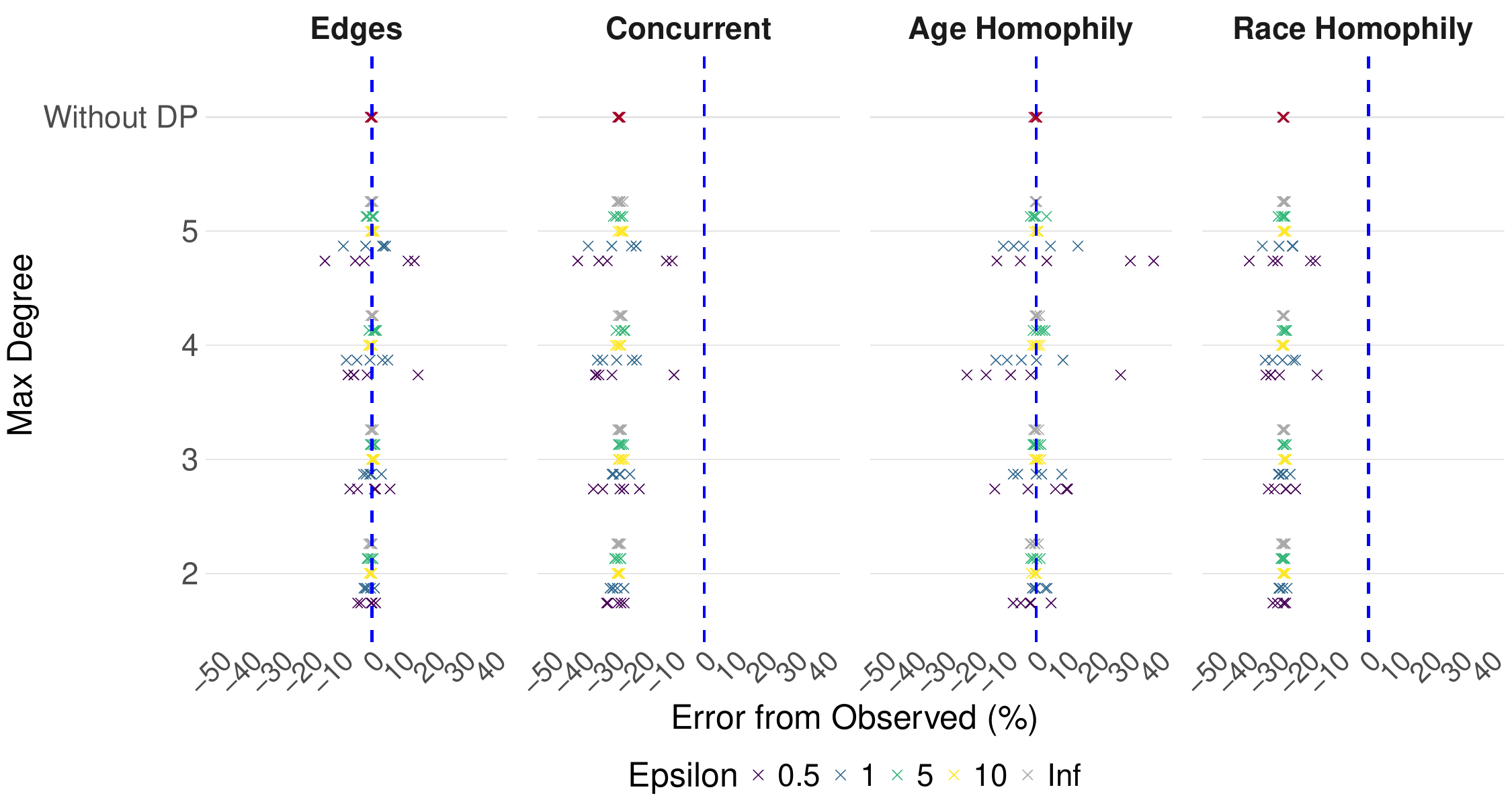}
        \caption{SBM}
        \label{fig:quality_metrics_bm}
    \end{subfigure}
    \caption{Edge-level network statistics presented on a relative scale, showing percentage difference from the observed network baseline (\textcolor{blue}{dashed line}). Each point represents the relative deviation of synthetic network statistics compared to the observed network value. Both plots refers to the network used in the low prevalence configuration.} 
    \label{fig:quality_metrics}
\end{figure}

Our pipeline samples synthetic networks from a statistical model trained on network statistics calculated from the observed network. If the model is misspecified, that is---cannot capture the complexity of the observed network---the synthetic networks may have different structural properties and produce inaccurate epidemic statistics.
To evaluate  this effect, we compare the observed network with synthetic networks generated from models trained \emph{without differential privacy} on two criteria: (1) network statistics; and (2) epidemic statistics. We test two types of models: (1) ERGMs that match the specification (terms)  used to generate the observed network from the ARTNet study; and (2) misspecified SBMs  capture only the mixing matrix age.

Overall, we observe that \textbf{specifying the model correctly is crucial for an unbiased estimation of graph statistics and epidemic simulation}.

Networks sampled from ERGMs without differential privacy accurately capture on average the structure of the observed network, as measured with network statistics (see \Cref{fig:quality_metrics_ergm}).

Similarly, ERGMs trained without differential privacy provide an accurate estimation, compared to simulations on the observed network, of the prevalence ratio (\Cref{fig:prevalence_ratio_ergm_high,fig:prevalence_ratio_ergm_low}), and the prevalences per scenario, baseline (\Cref{fig:prevalence_baseline_ergm_high,fig:prevalence_baseline_ergm_low}) and intervention (\Cref{fig:prevalence_intervention_ergm_high,fig:prevalence_intervention_ergm_low}).

In contrast, the SBM, which is only fitted on the \emph{age mixing matrix}, fails to capture structural features that are not specified. \Cref{fig:quality_metrics_bm} shows that both edge count and age homophily, which can be directly derived from the mixing matrix, do not present bias, while concurrent and race homophily demonstrate a large gap. This translates into modest discrepancies in epidemic statistics. In the prevalence ratio metric we see a max absolute gap of 0.06 between the observed and without-DP results (\Cref{fig:prevalence_ratio_bm_high,fig:prevalence_ratio_bm_low}). Similarly, the maximum absolute gap for the
per-scenario prevalence metric is 0.04 (\Cref{fig:prevalence_baseline_bm_high,fig:prevalence_baseline_bm_low}; \Cref{fig:prevalence_intervention_bm_high,fig:prevalence_intervention_bm_low}).

We  also observe similar patterns in degree statistics, and we discuss this in \Cref{app:quality-metrics-degree}.

\subsection{Bias in Epidemic Analysis}
\label{sec:find2}

\begin{findingbox}{2}
Differential privacy generally does not introduce new bias to absolute and relative epidemic analyses.
\end{findingbox}

\begin{figure}[h!]
    \centering

    \begin{subfigure}[b]{0.48\textwidth}
    \centering
    \includegraphics[width=\textwidth]{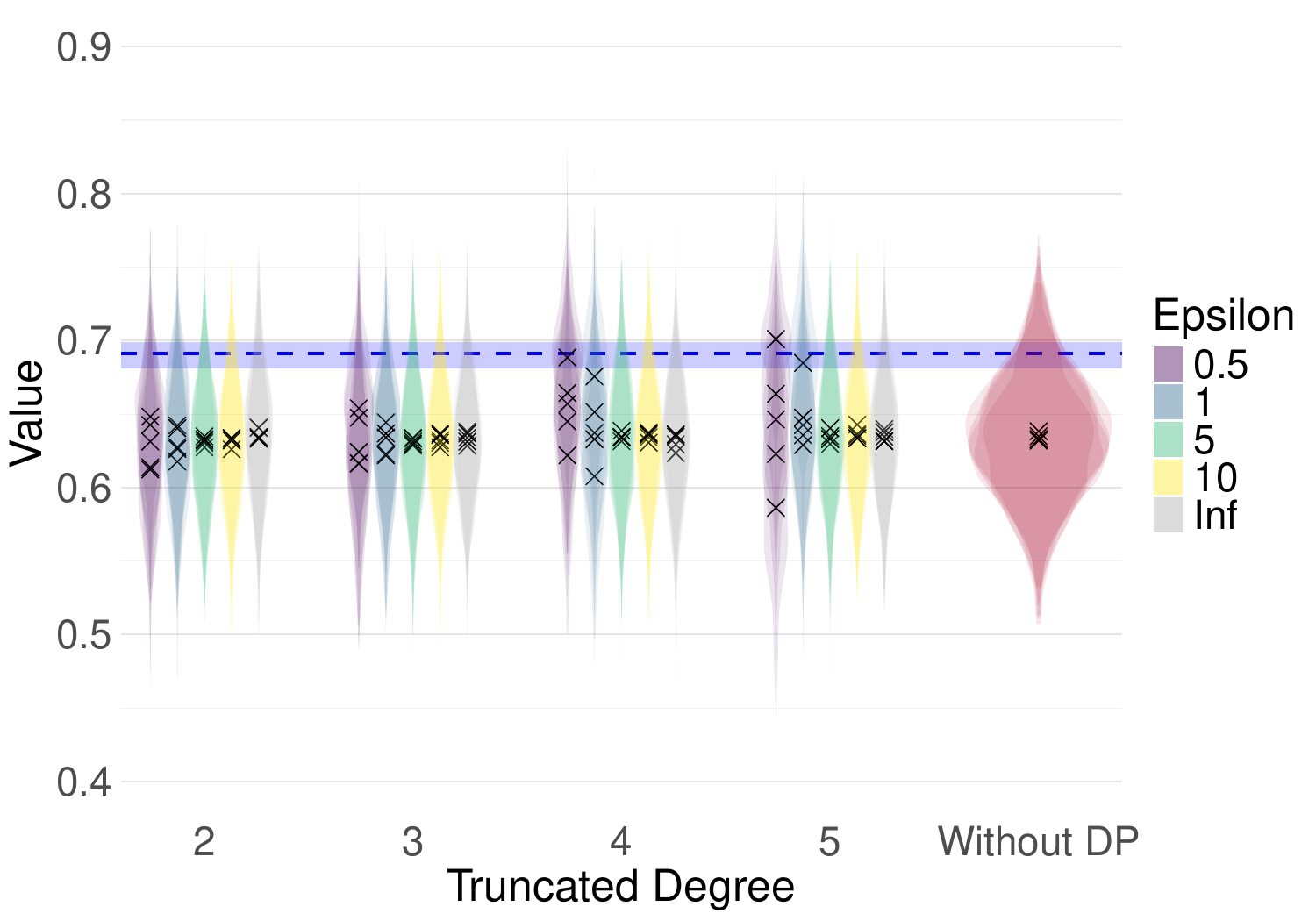}
    \caption{SBM - High Prevalence}
    \label{fig:prevalence_ratio_bm_high}
    \end{subfigure}
    \hfill
     \begin{subfigure}[b]{0.48\textwidth}
    \centering
    \includegraphics[width=\textwidth]{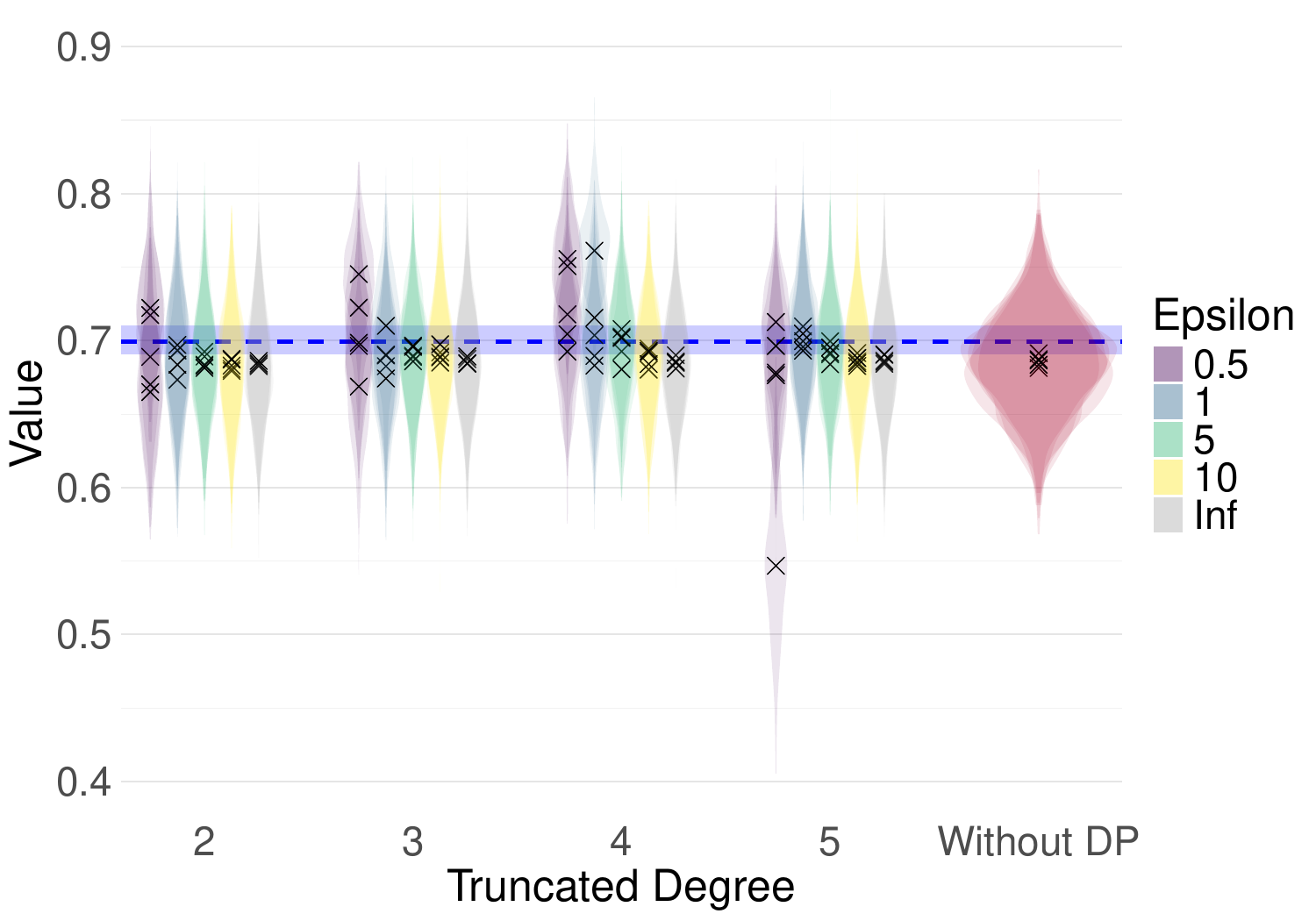}
    \caption{ERGM - High Prevalence}
    \label{fig:prevalence_ratio_ergm_high}
    \end{subfigure}

    \vspace{0.5cm}
    
    \begin{subfigure}[b]{0.48\textwidth}
        \centering
        \includegraphics[width=\textwidth]{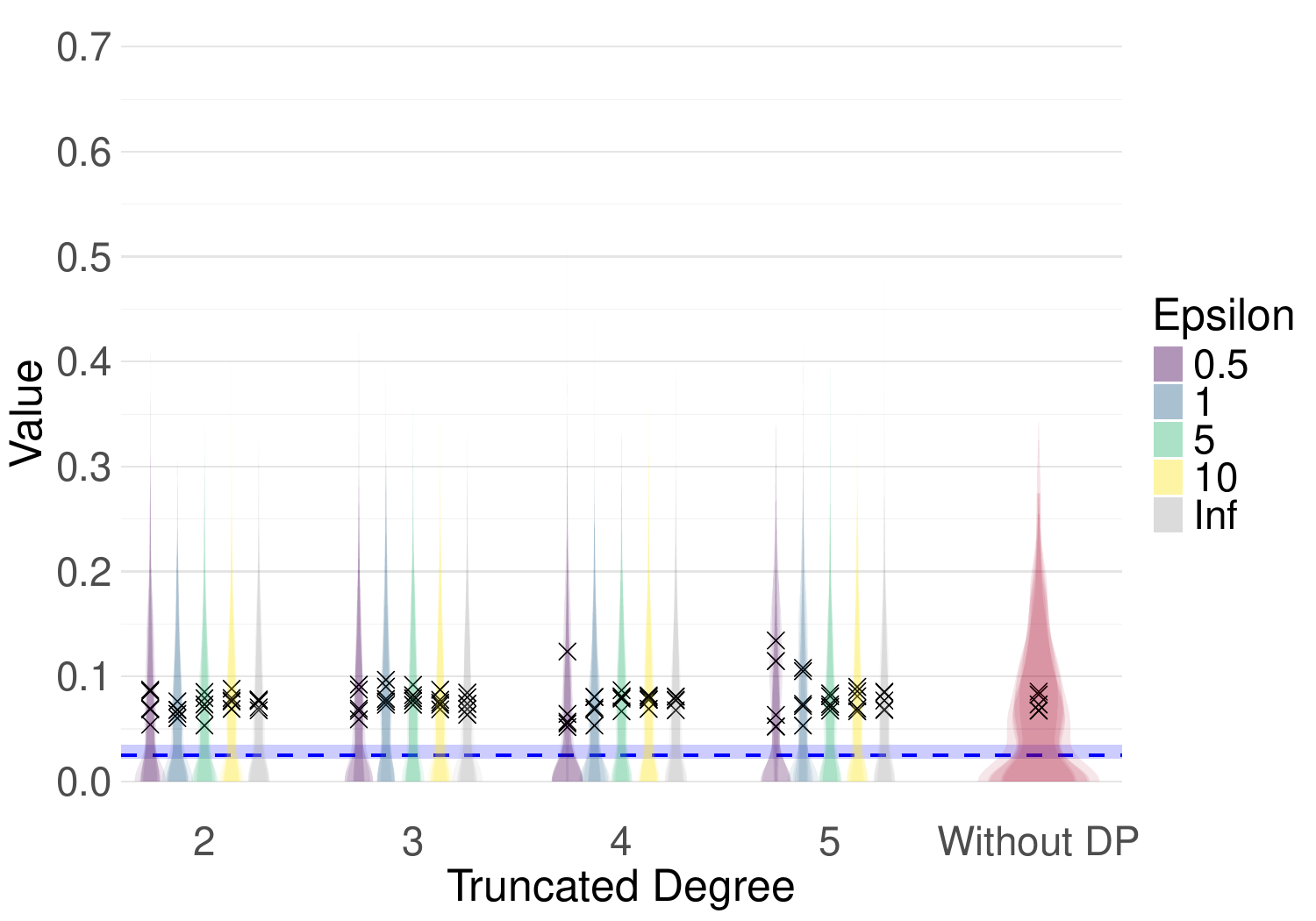}
        \caption{SBM - Low Prevalence}            
        \label{fig:prevalence_ratio_bm_low}
    \end{subfigure}
    \hfill
    \begin{subfigure}[b]{0.48\textwidth}
        \centering
        \includegraphics[width=\textwidth]{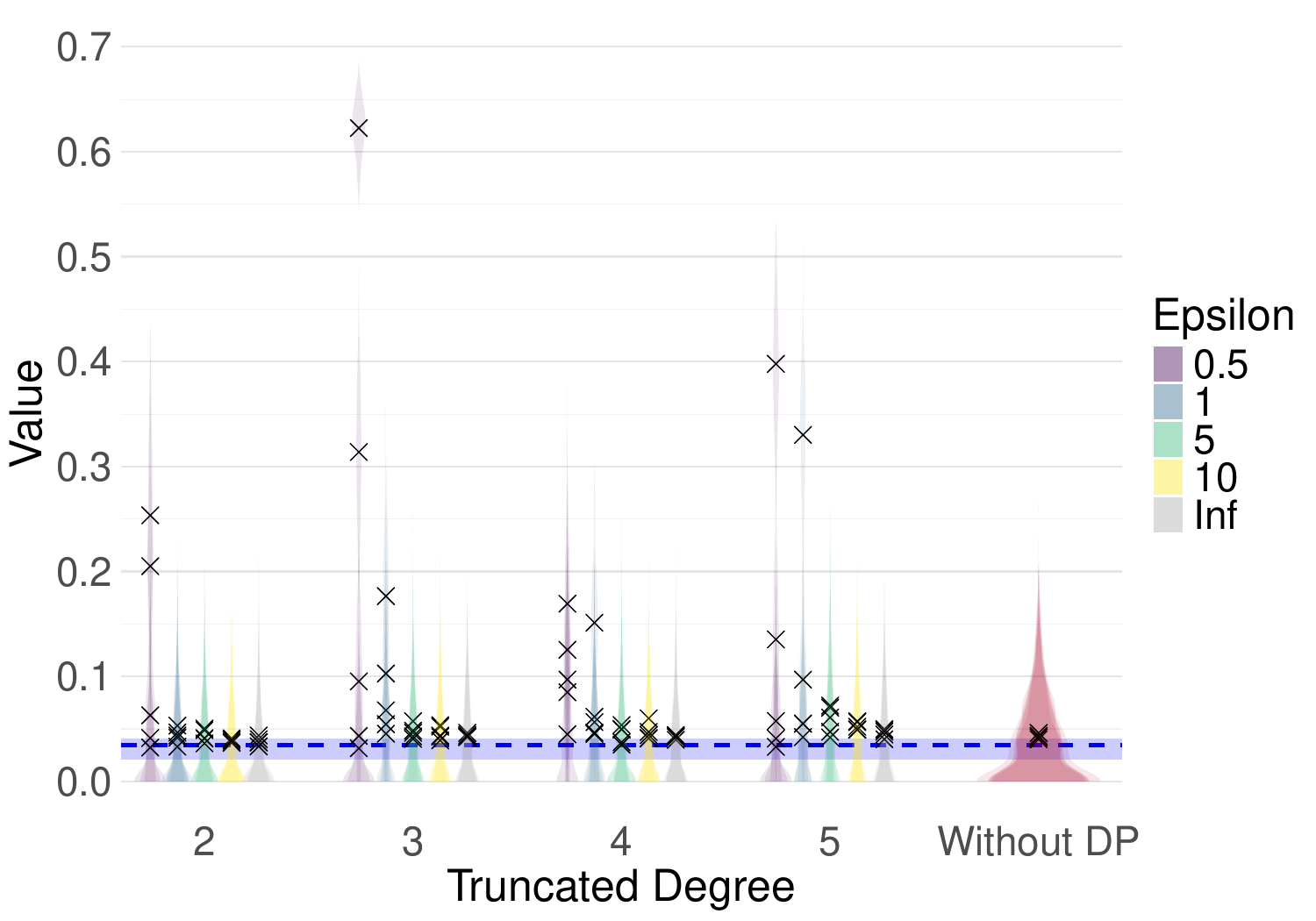}
        \caption{ERGM - Low Prevalence}
        \label{fig:prevalence_ratio_ergm_low}
    \end{subfigure}

    \caption{Prevalence ratios comparing intervention effects across different modeling approaches and prevalence conditions.}
    \label{fig:prevalence_ratio}
\end{figure}


      
    
  

Recall that our algorithm to compute network statistics with differential privacy: (1) preprocessing the the input network into a degree-capped network (controlled by truncated degree $\Delta$); and (2) calculating a network statistic with added calibrated noise (controlled by privacy budget $\eps$). To study the effect of additional error that is introduced due to differential privacy, either by degree truncation or noise addition (and not due to modeling choice), we compare synthetic networks trained with differential privacy and without differential privacy, with respect to the parameters $\Delta$ and $\eps$.

If the DP's parameter truncated degree $\Delta$ is larger than 2,
except for a few cases where the privacy budget $\eps$ is equal to or less than 1, there is no meaningful bias introduced in either ERGM or BM.
This observation holds across all of our metrics: network statistics (\Cref{fig:quality_metrics}), prevalence ratio (\Cref{fig:prevalence_ratio}), and per-scenario prevalence (\Cref{fig:prevalence_baseline,fig:prevalence_intervention}). This is apparent  when comparing the cluster of \emph{Without DP} against the various other \emph{with differential privacy} (per $\Delta$ and $\eps$).

There are some parameter settings where some bias in reported degree statistics is noticeable, but this does not translate into bias in epidemic simulation. 
When the truncated degree $\Delta$ is equal to 2, ERGMs show bias in network statistics, as the degree-capping preprocessing distorts the observed network which has a true maximum degree equal to 3, leading to loss of edges (see node degree distribution in \Cref{fig:node_metrics}). Nevertheless, the simulations are unaffected.
(In SBMs, neither graph nor epidemic statistics differ across values of truncated degree $\Delta$.)
In a few cases, at low values of $\eps$, we also observe a small bias due to clipping of noisy values that are out of range; see \Cref{app:releasing-scalar-stats}.

\subsection{Per-Group Epidemic Analysis }
\label{sec:find3}

\begin{findingbox}{3} 
Epidemic analysis is accurate also at the granular level per group, when the model is correctly specified.
\end{findingbox}

In \Cref{sec:find1} and \Cref{sec:find2}, our focus was on prevalence computed from epidemic simulations at the population level, not specific to any subgroup i.e. people belonging to a particular age or race group. However, accuracy at the population level does not imply accuracy at the level of demographic subgroups.

In our experiments, we observe that overall, \textbf{the sources of error at the granular sub-group level follow the same trend as that at the population level.} However, there are some additional observations that are worth discussing.

\begin{figure}[h!]
    \centering
    \begin{subfigure}[b]{0.48\textwidth}
        \centering
        \includegraphics[width=\textwidth]{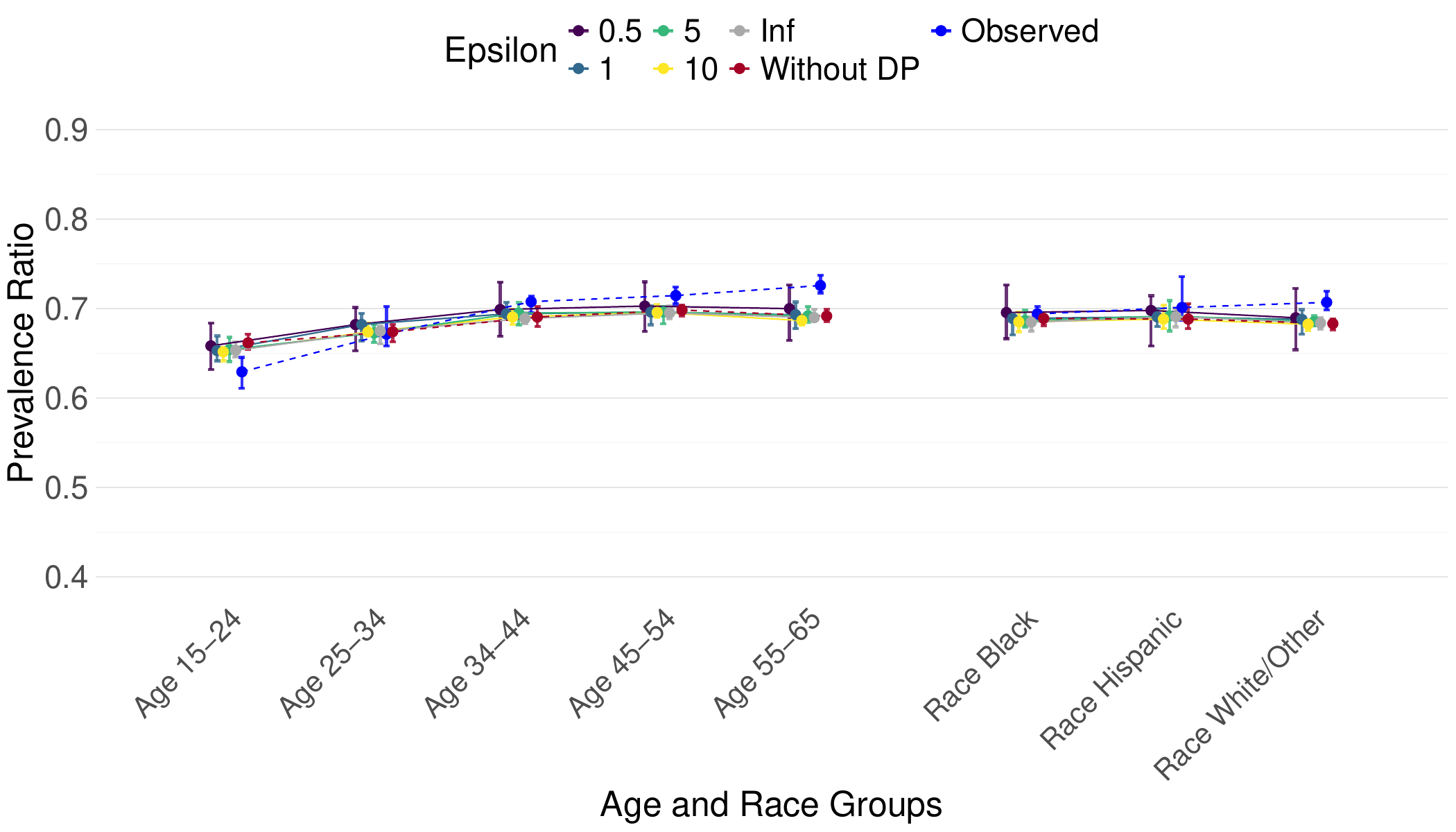}
        \caption{Truncated Degree 2}
        \label{fig:granular_ergm_high_truncated_2}
    \end{subfigure}
    \hfill
    \begin{subfigure}[b]{0.48\textwidth}
        \centering
        \includegraphics[width=\textwidth]{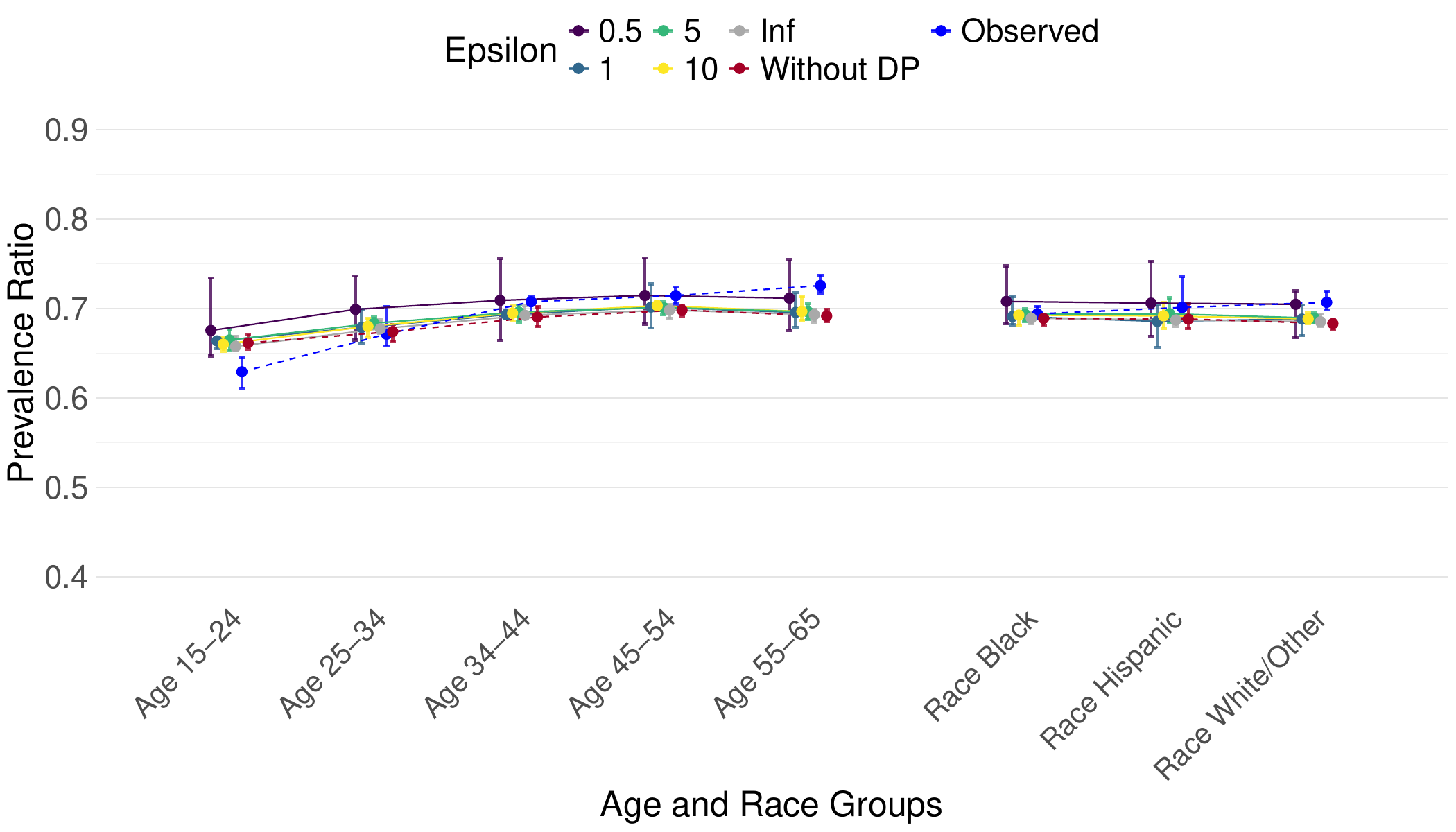}
        \caption{Truncated Degree 3}
    \end{subfigure}
    
    \vspace{0.5cm}
    
    \begin{subfigure}[b]{0.48\textwidth}
        \centering
        \includegraphics[width=\textwidth]{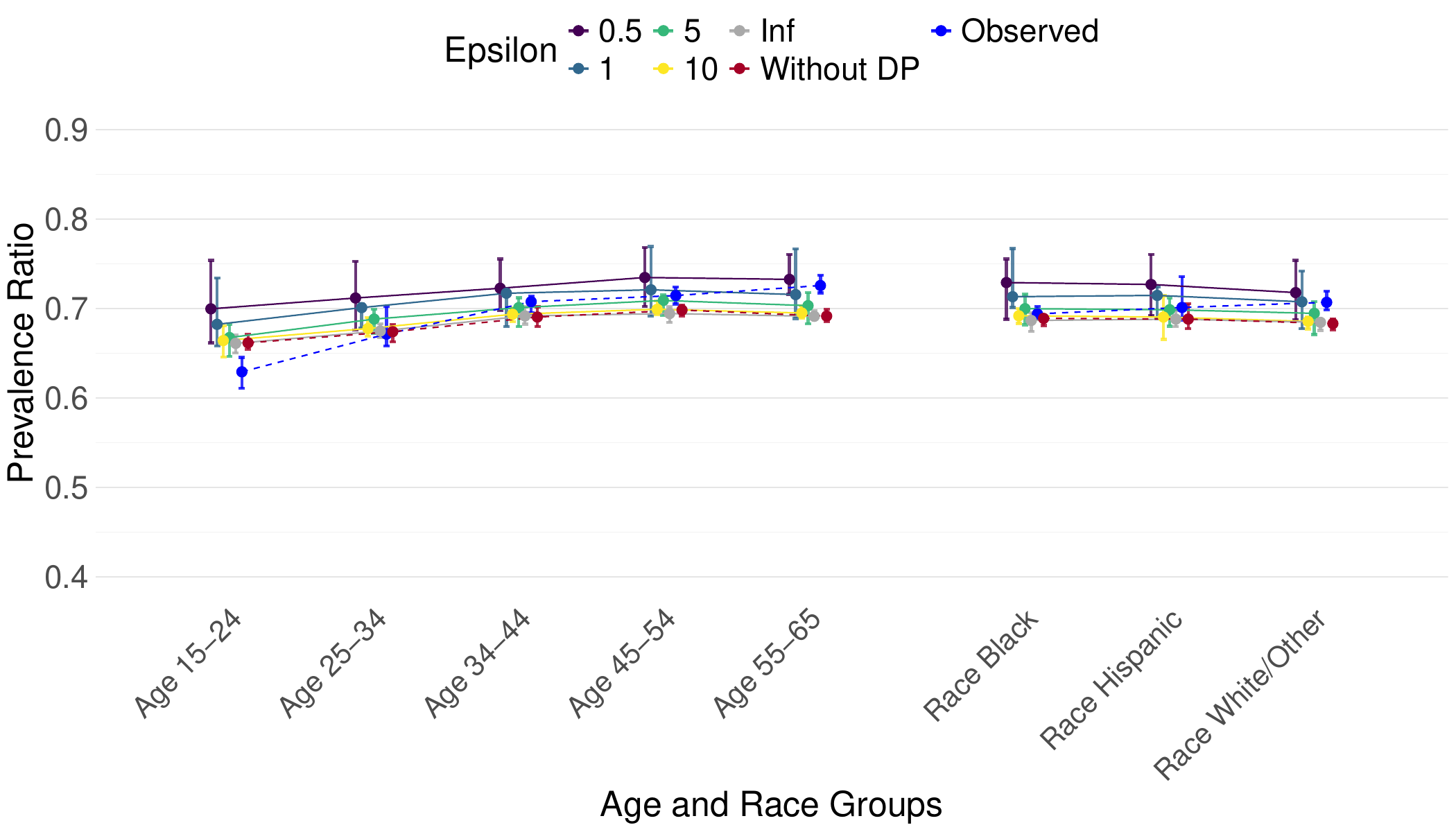}
        \caption{Truncated Degree 4}
    \end{subfigure}
    \hfill
    \begin{subfigure}[b]{0.48\textwidth}
        \centering
        \includegraphics[width=\textwidth]{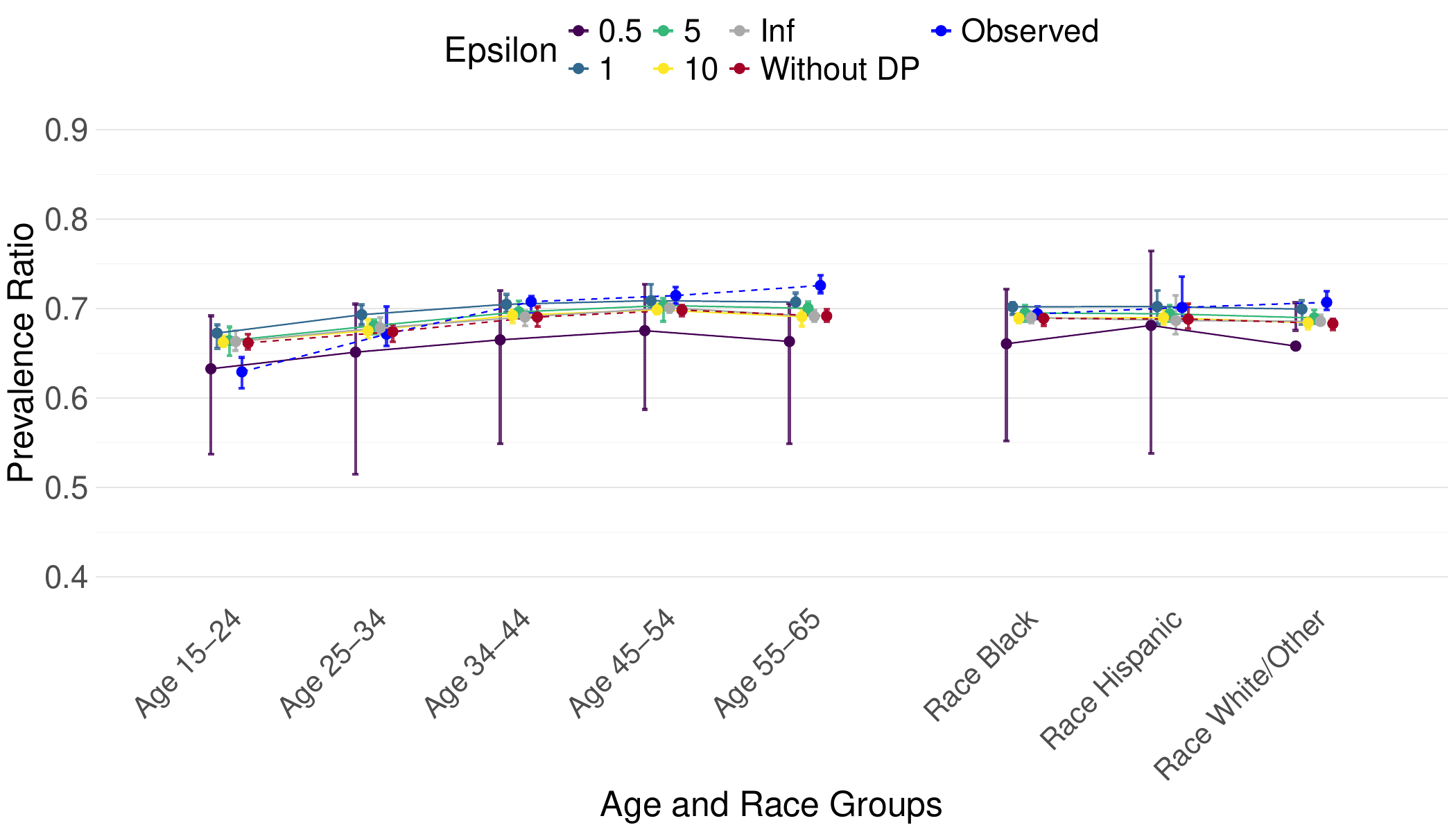}
        \caption{Truncated Degree 5}
    \end{subfigure}
    \caption{Granular analysis by age and race for ERGM high prevalence condition across different maximum degree constraints. }
    \label{fig:granular_ergm_high}
\end{figure}

\subsubsection{Effect of Modeling}
\label{sec:granular_model_misspecification}

To evaluate the effect of modeling at the subgroup level we compare the per-group epidemic statistics derived from observed network simulation against those from synthetic networks trained \emph{without differential privacy}. Similar to \Cref{sec:find1}, we look at two models (1) ERGMs that match the specification (terms) used to generate the observed network; and (2) SBMs that are misspecified and capture only the mixing matrix of the age attribute.

\paragraph{\textbf{Model Misspecification.}} 
To study the effect of model misspecification, one can focus on any one of the four subplots (here we use (a)) in Figures \ref{fig:granular_ergm_high}, \ref{fig:granular_ergm_low}, \ref{fig:granular_bm_high} and \ref{fig:granular_bm_low}. The four different subplots will become relevant when we discuss bias due to differential privacy in \Cref{sec:granular_dp_bias}. \Cref{fig:granular_ergm_high_truncated_2} and \Cref{fig:granular_ergm_low_truncated_2} show that the synthetic networks sampled from the ERGM trained \emph{without differential privacy} (\textcolor{red}{red dots}) accurately estimate the prevalence ratio of the \emph{observed network} (\textcolor{blue}{blue dots}) for each age and race group with relatively low bias for the high and low prevalence case respectively. The absolute difference is at most 0.05 and 0.025 units respectively for the high prevalence (\Cref{fig:granular_ergm_high}) and low prevalence (\Cref{fig:granular_ergm_low}) cases. On the other hand in \Cref{fig:granular_bm_high} and \Cref{fig:granular_bm_low}, we clearly see a large noticeable bias for all age and race groups. In the high prevalence case (\Cref{fig:granular_bm_high}), the red and blue dots differ by at least 0.05 units and in low prevalence case (\Cref{fig:granular_bm_low}) the red and blue dots differ by at least 0.03 units. This shows that \textbf{when model misspecification exists, it is the largest source of error at the subgroup level}. This is similar to what we observe at the population level in \Cref{sec:find1}.

\paragraph{\textbf{Correctly specified model can also suffer leads to avergaing effects.}}
However, one interesting difference from the population level trend is that even when one uses a correctly specified model (such as the ERGM in our experiments which matches the model used to create the observed network), some small error persists for the different age and race groups as observed in \Cref{fig:granular_ergm_high_truncated_2} and \Cref{fig:granular_ergm_low_truncated_2}. In \Cref{fig:epsilon_max_degree_3}, we plot the prevalence ratios of the various subgroups under different experimental settings. For this discussion, we focus on subplots (a) and (b) where in \Cref{fig:epsilon_max_degree_3} where the model used to train the networsk in ERGM. More specifically we focus on the ``Observed'' and ``No DP'' columns. In the observed network, there is a larger variation in prevalence ratio across the different age and race groups compared to the no DP case, where the prevalence ratios are much more tightly packed close to each other. Some of the subgroups are underestimated whereas others are overestimated when modelled without differential privacy. This suggests that \textbf{ modeling introduces an averaging effect that brings the prevalence ratio of all the different subgroups closer to an average value.} This explains why we see a small amount of bias for the various subgroups in the no model misspecification case. However, note that this is true only seen when model misspecification does not exist. The presence of bias due to model misspecfication can overshadow this effect as can be seeb in subplots (c) and (d) in \Cref{fig:epsilon_max_degree_3}.

\subsubsection{Bias due to Differential Privacy}
\label{sec:granular_dp_bias}

To study the additional bias due to differential privacy at the subgroup level, we compare the prevalence ratio captured from synthetic networks trained using a model (ERGM or SBM) with differential privacy using different values of the privacy parameter $\eps$ against the prevalence ratio captured when the same model is trained without differential privacy for each age and race group. 

In Figures \ref{fig:granular_ergm_high}, \ref{fig:granular_ergm_low}, \ref{fig:granular_bm_high} and \ref{fig:granular_bm_low} the sampled networks from models trained with differential privacy are marked with differently colored dots for each value of $\eps$ and the networks sampled from models trained without differential privacy marked by red dots.

Similar to \Cref{sec:find2} we see that across two different modeling choices ERGM or SBM, different prevalence levels, high or low prevalence, different algorithmic choices of truncated degree used in the differentially private algorithm and across age and race groups, there is very little additional bias introduced due to differential privacy, except in a few cases where $\eps$ is less than equal to 1.

To better understand the effect of the privacy budget $\eps$ on the additional bias due to DP, we focus on a specific case of truncated degree 3 in \Cref{fig:epsilon_max_degree_3}. We see that there is no meaningful change in the additional bias introduced by DP as $\eps$ decreases for each age and race group except in the case of ERGM with low prevalence. 

\begin{figure}[h!]
    \centering
    \begin{subfigure}[b]{0.48\textwidth}
        \centering
        \includegraphics[width=\textwidth]{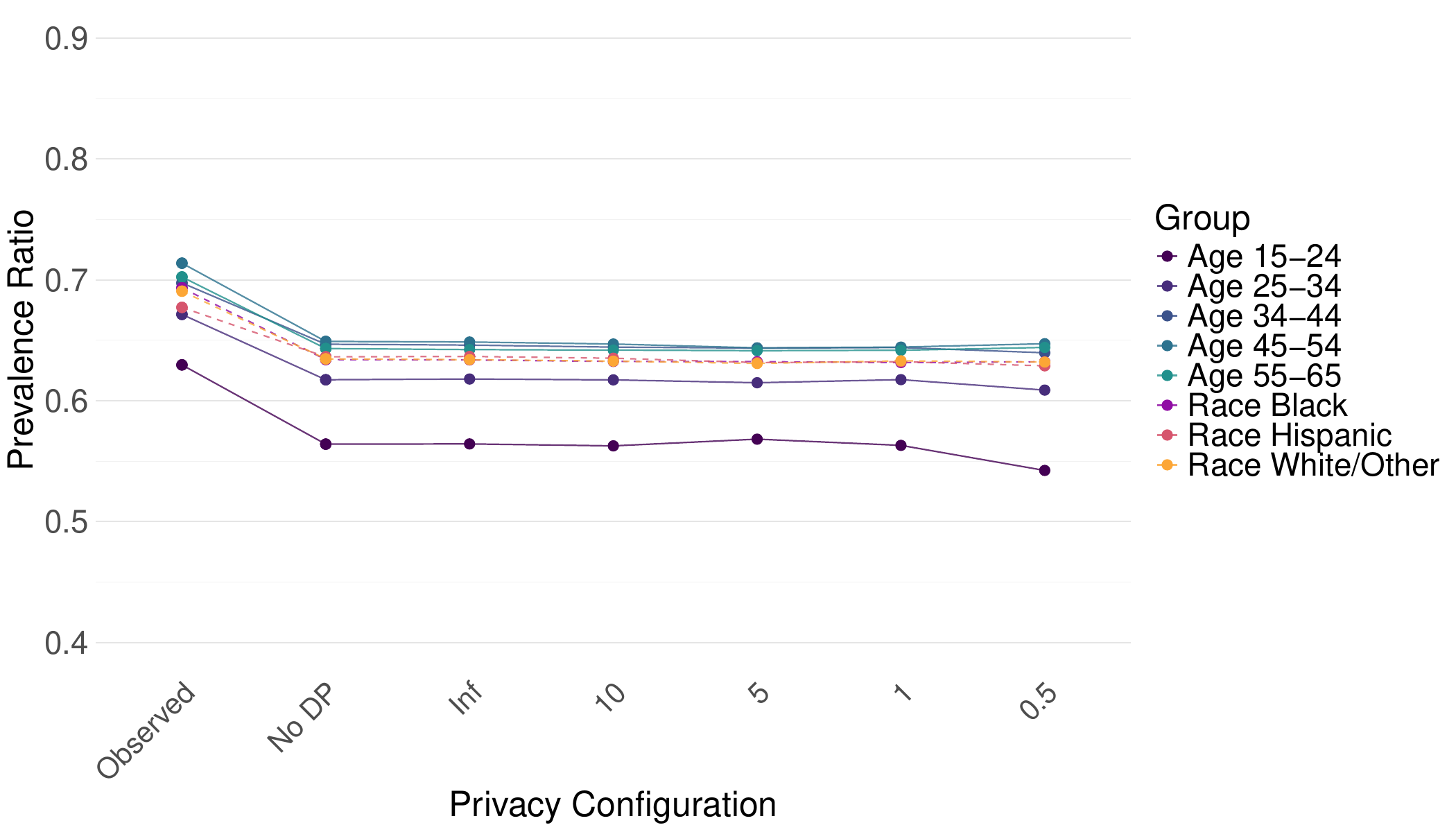}
        \caption{SBM - High Prevalence}
    \end{subfigure}
    \hfill
    \begin{subfigure}[b]{0.48\textwidth}
        \centering
        \includegraphics[width=\textwidth]{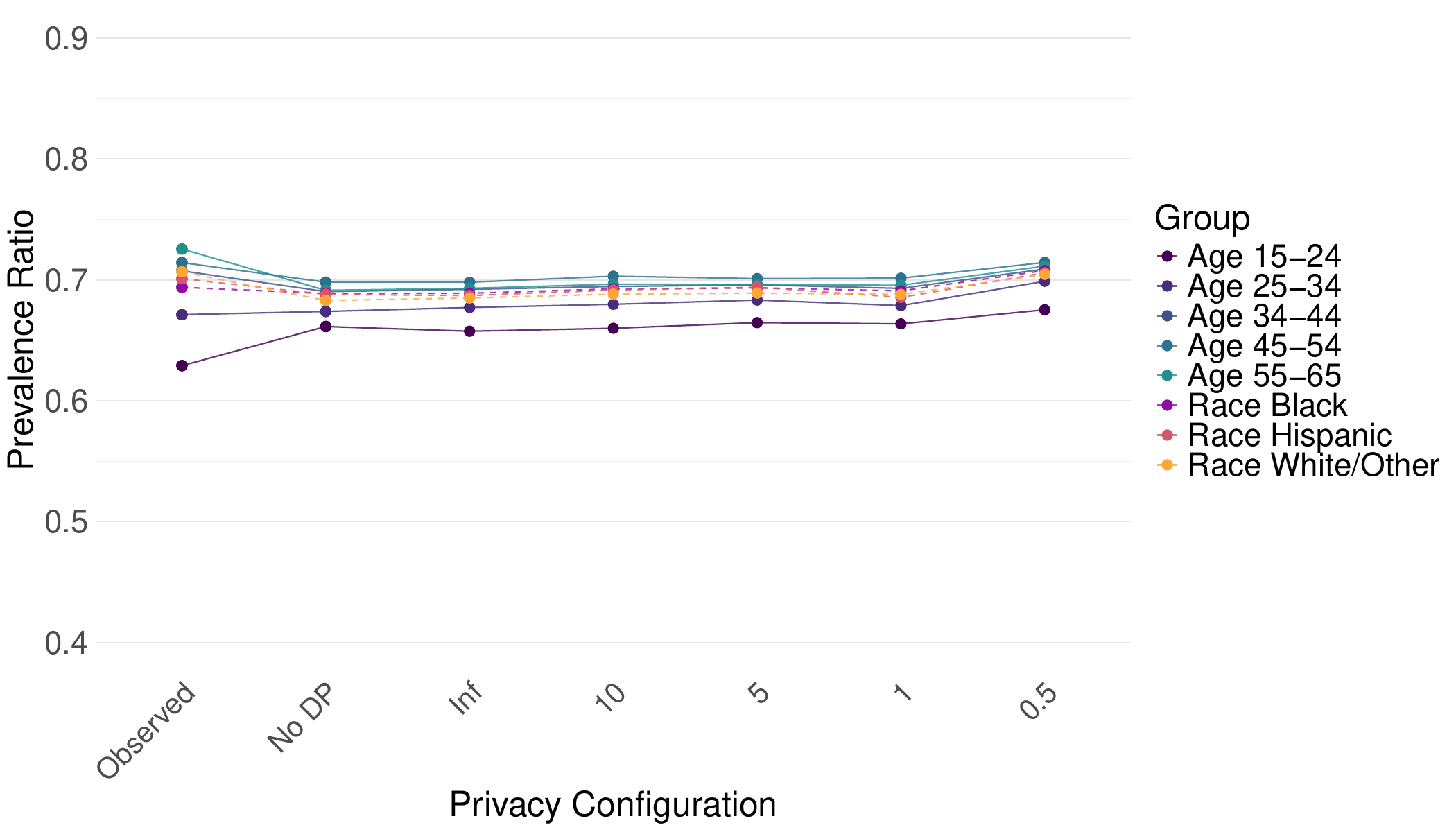}
        \caption{ERGM - High Prevalence}
    \end{subfigure}
    
    \vspace{0.5cm}
    
    \begin{subfigure}[b]{0.48\textwidth}
        \centering
        \includegraphics[width=\textwidth]{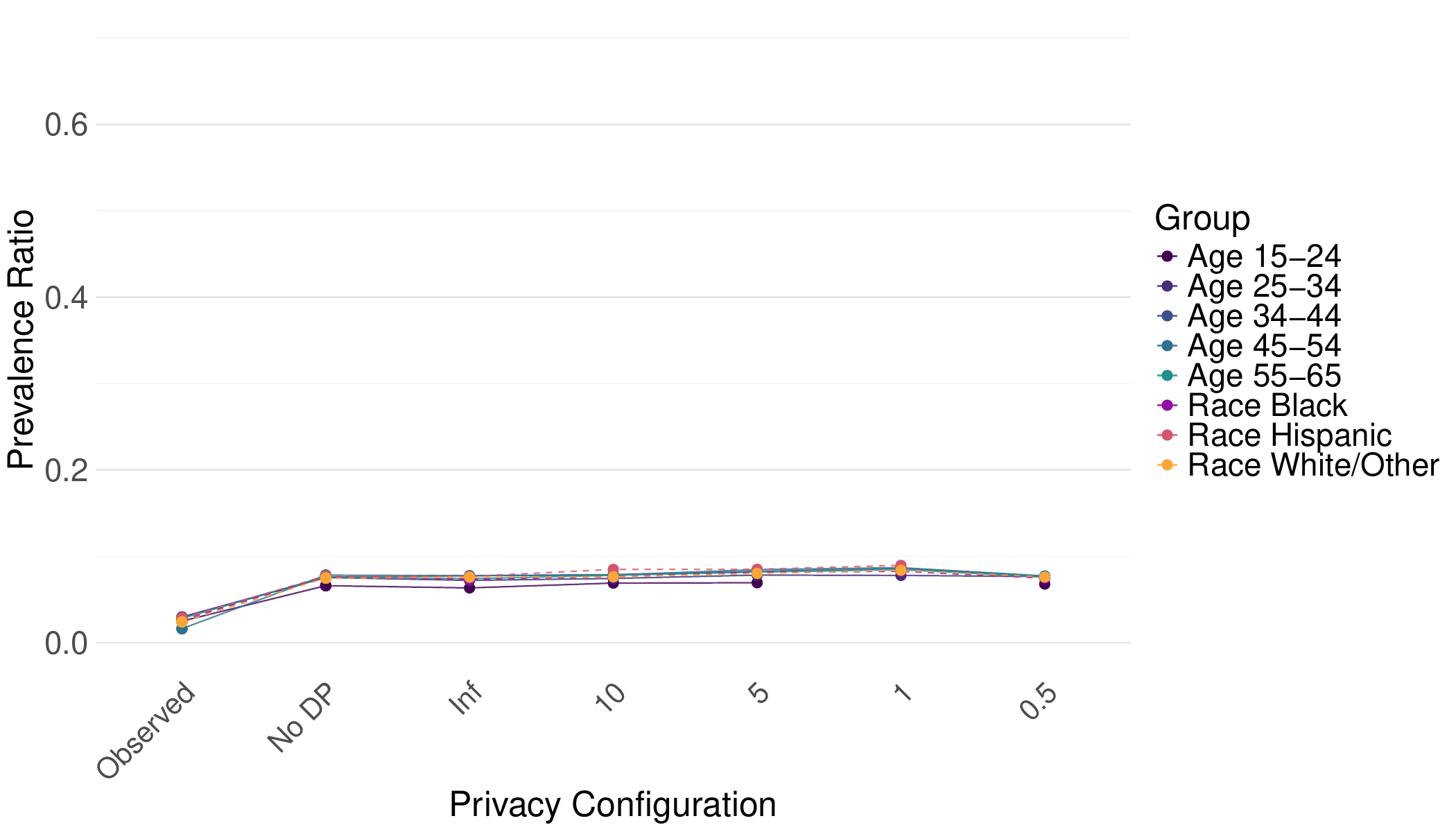}
        \caption{SBM - Low Prevalence}
    \end{subfigure}
    \hfill
    \begin{subfigure}[b]{0.48\textwidth}
        \centering
        \includegraphics[width=\textwidth]{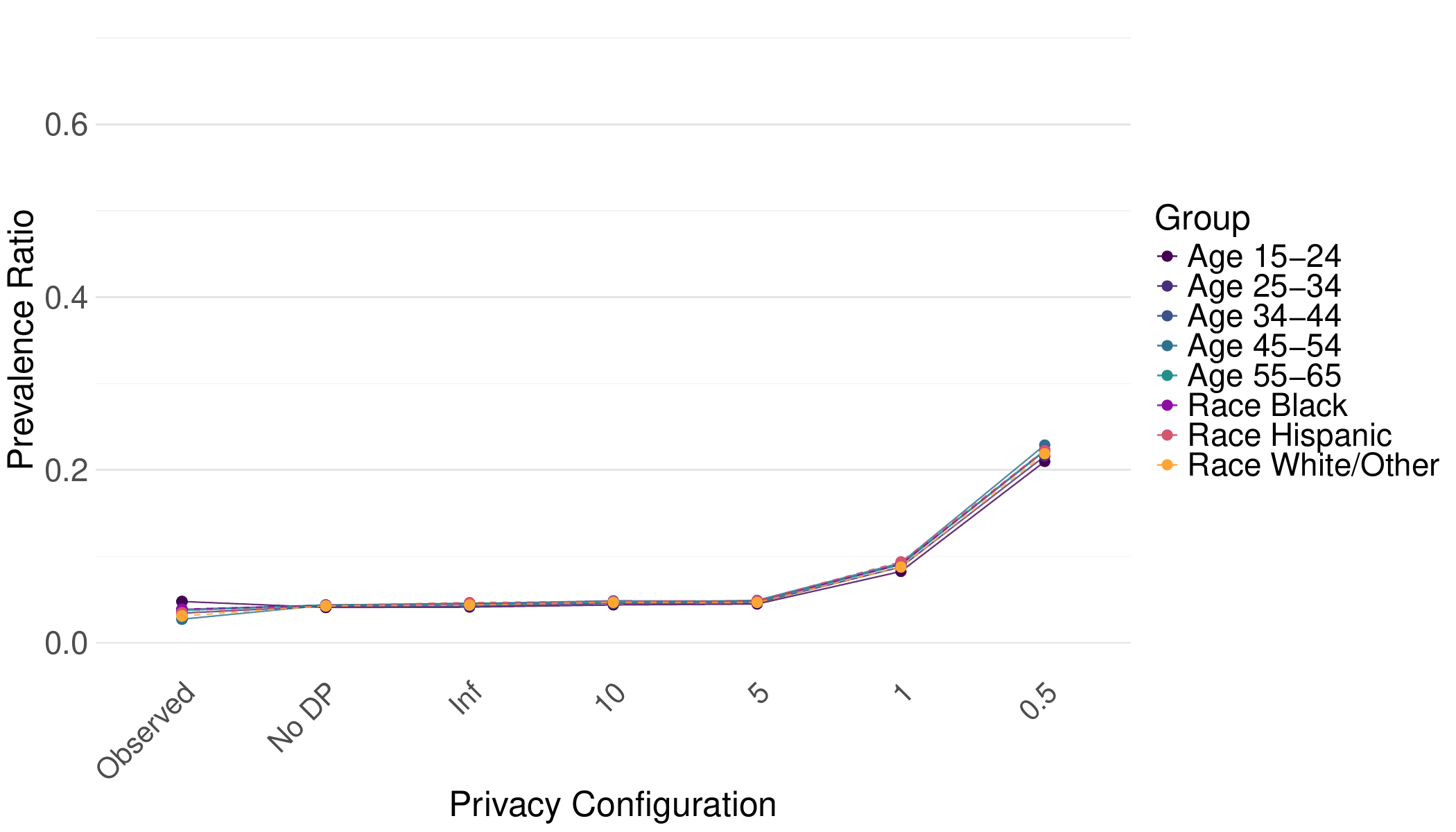}
        \caption{ERGM - Low Prevalence}
    \end{subfigure}
    \caption{Effect of privacy budget ($\eps$) on prevalence ratio with maximum degree truncated at 3 across all four modeling scenarios.}
    \label{fig:epsilon_max_degree_3}
\end{figure}

\subsection{Variance Sources in the Pipeline}
\label{sec:find4}

\begin{findingbox}{4}
Less variance is added by differential privacy than by network sampling and epidemic simulation.
\end{findingbox}

Variance of network and epidemic staticsts due to differential privacy releases remains low at privacy loss $\eps = 5$ or 10, but tends to increase when truncated degree is higher, such as $\Delta = 5$ or with lower privacy loss $\eps = 0.5$ and 1. This is evidenced by the spread of per-release averages for prevalence \Cref{fig:prevalence_baseline,fig:prevalence_intervention} and prevalence ratio \Cref{fig:prevalence_ratio}

As discussed in \Cref{sec:sources-of-error}, there are additional sources of variance introduced in various steps of the pipeline.
To study the relative contribution of each source of randomness, we examine what proportion of the total variation can be attributed to: (1) the differentially private release of summary statistics; (2) sampling a synthetic network from a model fitted on summary statistics; and (3) simulating an epidemic on a synthetic network.

These sources of randomness can be illustrated by the spread of each individual violin plot. To derive more quantifiable findings, we apply an ANOVA sum-of-squares decomposition technique to the results (see \Cref{app:variance-analysis} for a complete description).

We focus on the case $\eps = 1$ and $\Delta = 3$ for the mean prevalence metric in the baseline scenario. 
In this $\eps, \Delta$ configuration, we find that, in all four cases (model family $\times$ high/low prevalence condition) except the ERGM with low prevalence, the randomness from sampling a network is the most prominent source of randomness, accounting for close to 50\% of the total variance. \Cref{fig:variance_analysis_ergm_high} visualizes the sources of randomness for ERGM at the high prevalence condition, and \Cref{tab:anova_ergm_high} provides the variance decomposition (see \Cref{app:variance-analysis-results} for the results of the other combinations of model family and prevalence condition).

\begin{figure}[tb]
    \centering
    \includegraphics[width=0.75\textwidth]{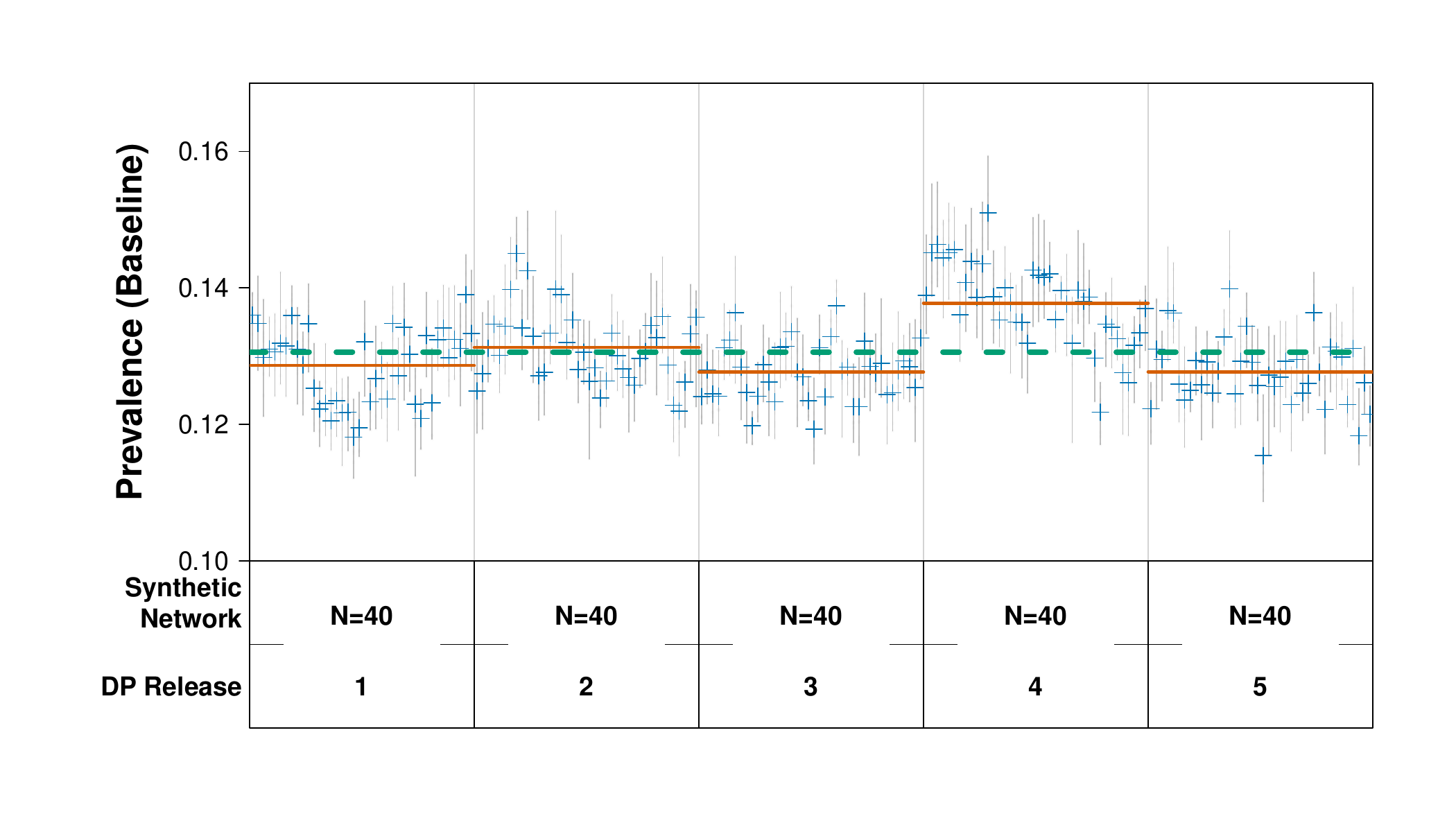}

    \caption{
    Variance analysis at privacy budget $\eps = 1$ and truncated degree $\Delta = 3$ showing the breakdown of different sources of randomness in the experimental pipeline for ERGM at High Prevalence condition. The plots show the average prevalence of baseline scenario over all synthetic networks and simulations (\textcolor{purple}{solid line}), per differential private release (\textcolor{orange}{dashed line}), per synthetic network (\textcolor{blue}{plus sign}), and the range per simulation (\textcolor{gray}{horizontal line}).}
    \label{fig:variance_analysis_ergm_high}
\end{figure}

\begin{table}[tb]
\centering
\begin{tabular}{lcccc}
\toprule
Source & df & SS & MS & Var (\%) \\
\midrule
Release & 4 & 0.0291 & 0.0073 & 26.36 \\
Network : Release & 195 & 0.0532 & 0.0003 & 48.13 \\
Simulation : Network : Release & 1800 & 0.0282 & 0.0000 & 25.51 \\
\bottomrule
\end{tabular}
\caption{Analysis of variance for ERGM high prevalence condition showing the breakdown of variance sources in the experimental pipeline. Columns show the source of variance, degrees of freedom (df), sum of squares (SS), mean squares (MS), and percentage of total variance explained (Var \%).}
\label{tab:anova_ergm_high}
\end{table}

We can conclude that even though the analyst cannot assess the variance introduce by differential privacy (\Cref{sec:analyst-prespective}), the additional variance is smaller and is comparable to the variance already present in the remaining pipeline. 


\section{Discussion}
\label{sec:discussion}

Human contact network data offers substantial value for infectious disease research but raises serious privacy concerns. We propose a pipeline that integrates node-level differential privacy with statistical network models and epidemic simulation. Our evaluation suggests that this approach can provide formal privacy protection while maintaining accuracy sufficient for epidemic simulation studies, particularly when statistical models are appropriately specified and privacy parameters are reasonably chosen. This also applies to the analysis of epidemic outcomes in a more granular level (e.g., per demographic group). The additional error and variance introduced by differential privacy remain modest relative to other sources of uncertainty in standard epidemiological practice. These results suggest that privacy-preserving analysis of sensitive network data through differential privacy is technically feasible, offering a path toward responsible sharing of network data for epidemiological research.

\subsection{Limitations and Directions for Future Work}
\label{sec:limitations-future-work}

Several limitations constrain the generalizability of our findings and suggest directions for future work.

\paragraph{Network dynamics.}
Our evaluation uses cross-sectional networks, but many epidemiological studies incorporate partnership formation, dissolution, and temporal dynamics during simulation. Extending our pipeline to support such dynamic epidemic models on static network snapshots represents a natural next step. A further challenge involves extending differential privacy to longitudinal network data with multiple snapshots over time, which requires protecting an individual's entire trajectory of connections rather than their presence in a single cross-section.

\paragraph{Model complexity.}
Our network model specification (BMs and ERGMs) and disease model (SIS) are 
widely used. Nevertheless, the literature includes many more complex models.
Whether our results extend to more sophisticated network and disease models---incorporating, for example, exposed periods, recovered state, and behavioral interventions---requires further investigation.

\paragraph{Privacy Budget and Algorithm Selection} The differential privacy guarantee is specified by a parameter $\eps$, often called the privacy budget, which bounds how much can be leaked about any individual. The range $\eps\leq 1$ is generally considered most meaningful; in our experiments, this is also the range where distortion due to privacy starts to play a noticeable role.

These results highlight the importance of exploring a wider range of algorithm design techniques.
Our differentially private algorithms follow the design of \citet{DayLL16}: we apply a truncated degree transformation, compute network statistics on the transformed graph, and inject calibrated noise to all statistics. 
The degree truncation parameter $\Delta$ determines a trade-off between bias, which increases as $\Delta$ gets smaller, and noise variance. In practice, one would need to select $\Delta$ carefully, either by spending a small part of the privacy budget or by leveraging prior knowledge about related data sets. 

That said, many other types of differentially private algorithms exist. Exploring other techniques---for example, adapting the so-called \textit{matrix mechanism} \cite{LiMHMR2015matrix} to the setting of releasing multiple network statistics---could yield better privacy-utility trade-offs. We consider this an exciting direction for future work.


\section{Related Work}
\label{sec:related-work}

Differential privacy in epidemiology remains largely unexplored despite its potential significance, although recently it starting to gain some interest. The few existing works include \citet{Chen2024acc} weight-DP mechanism for releasing basic reproduction numbers using ODE models on community-level graphs, and their extension to distributed reproduction numbers for community subsets \cite{Chen2025arxiv}. \citet{Nguyen2024Computing} provides a differentially private algorithm to compute the size of the outbreak with the SIR model. Another work focuses on optimizing public health interventions such as vaccination strategies \citet{Nguyen2025Controlling}. Other applications encompass privacy-preserving mobility matrices for epidemiological simulations \cite{Savi2023plos}, a \texttt{data.org} competition on designing DP mechanisms for epidemiological tasks using merchant-level financial transaction data \cite{dataorg2023pets}, and Google's differentially private COVID-19 Community Mobility Reports \cite{aktay2022google}.

Our work applies differential privacy to another standard epidemiological practice of directly simulating disease spread on individual-level human contact networks rather than using ODE models or community-level representations. The design of our pipeline is motivated by the an existing effort to collect detailed individual contact data over a period of time \cite{mapps2023,mappsWorkshop2023}. For a more comprehensive review, refer to Appendix~\ref{app:related-work}.

\jnote{Any other acknowledgements?}

\addcontentsline{toc}{section}{Acknowledgments}
\section*{Acknowledgments}

We thank Dr. Samuel M. Jenness for the use of ARTNet data, the egocentric network dataset upon which our generative network models rely.

\jnote{If you don't like "initiated", feel free to tweak the language to your satisfaction. I know you (Shlomi and Debanuj) started this project in a class.}

\subsection*{Funding}

This project was initiated in part at the ``Workshop on Privacy and Ethics in Pandemic Data Collection and Processing'' (January 17--20, 2023) organized by the MAPPS (Mobility Analysis for Pandemic Prevention Strategies) Project. The MAPPS project was funded under a Predictive Intelligence for Pandemic Prevention Phase I award from the National Science Foundation (Grant No. 2154941). The workshop was held in Providence, Rhode Island, and hosted by ICERM (Institute for Computational and Experimental Research in Mathematics). ICERM is supported by the National Science Foundation under Grant No. DMS-1929284.

S.H. was supported in part by DARPA under Agreement No. HR00112020021. A.S. was partly supported by US National Science Foundation awards CNS-2232694 and CNS-2120667. D.N. was partially supported by US National Science Foundation award 2022446.  Any opinions, findings, and conclusions or recommendations expressed in this material are those of the authors and do not necessarily reflect the views of the United States Government.

\subsection*{Prior Presentation}

Preliminary results from this work were presented at the Society for Epidemiologic Research's Annual Meeting (June 10--13, 2025) in Boston, Massachusetts.


\bibliographystyle{abbrvnat}
\bibliography{references}

\clearpage

\appendix

\section{Additional Related Work}
\label{app:related-work}

\subsection{Privacy of Network Data}
\shnote{I'd love to get feedback on this section, there is much more to say based on the survey, but I'm not sure whether it is needed.}

A substantial body of literature demonstrates that network data concerning individuals poses significant privacy risks. Network anonymization techniques, despite their intended protective function, have been repeatedly shown to remain vulnerable to de-anonymization attacks. This section highlights several foundational works; readers seeking comprehensive coverage are directed to the survey by \cite{Ji2017Graph}, which concludes that existing anonymization schemes fail to adequately protect individuals, even with edge-level differential privacy guarantees. This highlights the need for node-level differential privacy protections studied in this work.

The seminal work of \citet{Backstrom2027Wherefore} designed two distinct attacks against ``anonymized'' networks (those stripped of personal identifiers such as names and addresses). These attacks aim to determine whether an edge exists between two targeted nodes. Under the active threat model, attackers can introduce new nodes and edges into the network prior to its anonymized release. By strategically designing these additions, they can subsequently recover a subgraph containing both their planted nodes and the targeted nodes, thereby learning about the targets' connectivity. In contrast, passive attacks enable ordinary network participants to ascertain their positions through analysis of local structural patterns. When a small coalition of such attackers collaborates to pinpoint their respective locations, this process inadvertently compromises the privacy of adjacent network participants.

\citet{Narayanan2009De} traded assumptions in the threat model: attackers possess only the anonymized network itself (without being part of it) along with some external auxiliary knowledge about individuals. Despite these constraints, they successfully de-anonymized several thousand Twitter users from an anonymous graph by leveraging a completely different social network (Flickr) as their source of auxiliary information.

\subsection{Differential Privacy and ERGMs}

Prior works have considered the estimation of exponential random graphs subject to differential privacy. \citet{KarwaS12} show an algorithm for obtaining a maximum likelihood estimator of the degree sequence of the graph and use its output to obtain  synthetic graphs via the $\beta$-model, a simplified version of ERGMs that uses only the degree sequence as input.  Lu and Miklau obtain edge-private algorithms for computing three graph statistics, the number of alternating $k$-stars, alternating $k$-twopaths, and alternating $k$-triangles, and use these statistics to fit ERGMs~\cite{LuM14}. They also propose a new Bayesian method for the ERGM fitting that takes into account the noise distribution of the private statistics and offers better accuracy. In two consecutive works, \citet{KarwaSK14} study the problem of sharing synthetic social network data and propose a randomized response algorithm for generating a synthetic graph and an improved MCMC approach for ERGM inference from the differentially private graph \cite{KarwaKS15}. \citet{LiuEJB20} propose an algorithm for node-privately generating synthetic graphs via ERGMs that does not depend on a choice of statistics. Their approach depends on obtaining a private estimate of the posterior distribution of the parameters of the ERGM given the input graph. Our approach is most similar to that of \citet{LuM14} where we use privately estimated statistics of the observed network to generate synthetic graphs via ERGMs. However, our approach satisfies the stronger guarantee of node-privacy. Differently from the aforementioned works, we do not modify the fitting and generation procedures for ERGMs to account for the privacy noise, with the goal of maintaining simplicity and the ability to use common tools in the epidemiology community for disease simulation. Finally, we evaluate our method for private ERGMs on the downstream tasks of disease spread simulation, whereas previous works evaluate their methods in terms of similarity of the generated graph to the observed graph. 

\subsection{ERGMs in Epidemiology}

\shnote{Jason, could you review this please?}

Network models and specifically ERGMs are commonly used to model the spread of infection in a population. For instance, ERGMs have been used to model the dynamics of sexually transmitted diseases such as HIV, gonorrhea and chlamydia with real world observed data and typical transmission parameters. This has allowed epidemiologists to study and infer insights about HIV epidemics in different contexts \cite{jenness2016effectiveness, jenness2016impact, goodreau2012concurrent, goodreau2012drives, jenness2017incidence, goodreau2017sources}. ERGMs have also been used to investigate the the effect of missing cases on a transmission network for drug-resistant tuberculosis (TB) \cite{nelson2020modeling}.  More recently, ERGMs have been used to study the efficacy of disease outbreak prevention measures for COVID-19 \cite{jenness2021dynamic}.

\subsection{Differential Privacy and Epidemiology on Network Data}

\citet{Chen2024acc} provides a DP mechanism to release the (network-level) basic reproduction number $R_0$--—the average number of new infections caused by a single infected individual in a completely susceptible population---given the weighted adjacency graph between communities. The differential privacy variant is neither node-DP nor edge-DP, but a weight-DP, where two graphs are weight adjacent (neighbors) if they have the same edge and node sets, and the distance between their weight matrices is bounded by a prespecified value in the Frobenius norm. The disease dynamics is modeled by ordinary differential equations, and not via direct simulation on a network as in our work. A follow-up work design a shuffle-model DP mechanism to release a more granular version of $R_0$, namely distributed reproduction numbers, quantifying the spread in diseases for subsets of nodes (community) \cite{Chen2025arxiv}.

\citet{Nguyen2024Computing} provide an edge-DP algorithm to compute the size of the outbreak with an SIR model, adding a \emph{recovered} state which ``excludes'' a node from the simulation (we used an SIS model). Their results include utility guarantees for the family of expander graphs, and a lower bound on the additive error that DP algorithms induce for estimating the expected number of infections.

Rather than computing an epidemic metric, another type of goal seeks to find an optimal public health intervention. \citet{Nguyen2025Controlling} transposes the problem of finding a vaccine strategy---selecting which individual to vaccinate, meaning prevent them from being either susceptible or infected---into a minimization problem of network statistics. Specifically, they provide two edge-DP algorithms for choosing nodes to be removed in order to minimize the maximum degree and spectral distance.

Differential privacy was applied also to matrices of mobility data, obtained from a data broker, documenting movement of individuals from one county to another, with privacy unit of individual per time window \cite{Savi2023plos}. These matrices are used to feed ODE epidemiological simulations with various starting points and dynamics. Recently, a \texttt{data.org} competition was conducted on designing new DP mechanisms to solve various epidemiological tasks based on merchant-level weekly financial transaction data at postal code granularity \cite{dataorg2023pets}.

During the COVID-19 pandemic, Google published Community Mobility Reports that quantified changes in mobility movement trends over time by geography, across different categories of places, releasing metrics with differential privacy using user-day as the privacy unit \cite{aktay2022google}.

\subsection{Differential Privacy for Network Data}

An extensive literature has studied various aspects of network analysis and modelling via edge-privacy, see, for example, \cite{NissimRS07,HayLMJ09,BlockiBDS12,GuptaRU12,KarwaS12,Upadhyay13, WangWW13, WangWZX13,KarwaRSY14,ProserpioGM14,LuM14, ZhangCPSX15, MulleCB15, NguyenIR16, RoohiRT19,ZhangN19,AhmedLJ20,BlockiGM22,DLRSSY22}. 

Node privacy is a better fit for social network data, where one individual's data consists of all the relationships (edges) that the individual contributes to the graph. Its much stronger privacy protection makes it harder to attain.

Existing work addresses subgraph counts \cite{BlockiBDS13, KasiviswanathanNRS13, ChenZ13, 
DingZB018,  LiuML20}; degree and triangle distributions \cite{RaskhodnikovaS16,DayLL16,LiuML20}; number of connected components \cite{KalemajRST23}, parameter estimation in stochastic block models \cite{BorgsCS15, BorgsCSZ18, SealfonU21}; and training of graph neural networks \cite{DaigavaneMSTAJ21}. See \cite{RaskhodnikovaS16-E, MuellerUPRK22} for surveys on node-private algorithms and \cite{XiaCKHT021} for implementations of some of the algorithms. 

Multiple works focus on generating synthetic networks with edge-DP using statistics based on the degree distribution \cite{Mir2012Differentially,Proserpio2012Workflow,Zhu2019DPFT}, particularly with the dK-graph model \cite{Sala2011Sharing,Wang2013Preserving,Gao2019Sharing,Gao2020Protecting,Iftikhar2020dK}. Another edge-DP approach clusters the network into communities, which tend to have stronger intra-connectivity, and counts edges within and between these clusters of nodes \cite{Chen2014Correlated,Gao2019PHDP,Yuan2023Privgraph}. The cut function has also been used as a statistic to be released with edge-DP for generating synthetic networks \cite{Gupta2012Iterative,Eliavs2020Differentially}.
Other edge-DP works have leveraged neural networks to learn GANs with DP-SGD \cite{Yang2020Secure,Zheng2021Network} or to learn node embeddings that capture connectivity \cite{Zhao2023Differentially} using a method inspired by Word2Vec. Notably, there are a few works on node-DP. \citet{Xiao2014Differentially} utilize a hierarchical random graph model. \citet{Zhang2020Community} released the Katz index to generate synthetic networks, and \citet{Zhang2025PrivDPR} used an approximate PageRank algorithm to capture network structure. A recent work by \citet{Yuan2024PSGGrap} provides an edge-DP algorithm for continual release, i.e., capturing temporal changes in a network, using the clustering approach discussed above.

\section{Differentially Private Analysis of Network Data}
\label{sec:prelim}

\subsection{Node and Edge-Differential Privacy}

We start by formally defining the notion of neighbors. 

\begin{definition}[Node neighbors]  
Two graphs $G$ and $G'$ are node neighbors if $G'$ can be obtained from $G$ by removing a single node along with all its incident edges or adding a node with arbitrary edges incident on it.
\end{definition}

\begin{definition}[Edge neighbors]  
Two graphs $G$ and $G'$ are edge neighbors if $G'$ can be obtained from $G$ by adding or removing a single edge while keeping the node set unchanged.  
\end{definition}

The notion of privacy defined below depends on type of neighbors we use.

\begin{definition}[Node-differential privacy~\cite{HayLMJ09}]
\label{def:node-dp}
A randomized algorithm $\cA$ is $\eps$-node-differentially-private if for all node-neighboring graphs $G, G'$ and all events $S$ in the output space of $\cA$,  
\begin{align*}
    \Pr[\cA(G) \in S] \leq e^\eps \Pr[\cA(G') \in S].
\end{align*}  
\end{definition}

\begin{definition}[Edge-differential privacy~\cite{HayLMJ09}]  
\label{def:edge-dp}
A randomized algorithm $\cA$ is $\eps$-edge-differentially-private if for all edge-neighboring graphs $G, G'$ and all events $S$ in the output space of $\cA$,  
\begin{align*}
    \Pr[\cA(G) \in S] \leq e^\eps \Pr[\cA(G') \in S].
\end{align*}  
\end{definition}

The most basic private mechanism for releasing a statistic $f$ is called the \emph{Laplace Mechanism}. It returns the value of $f$ with additive noise scaled according to the global sensitivity of $f$. The noise follows a Laplace distribution. 

\begin{definition}[Global sensitivity \cite{DworkMNS16}]\label{def:GS} 
Given a function $f \colon \cG \to \R$, its global sensitivity, $GS(f)$, is defined as 
\begin{align*}
    GS(f) = \max_{\text{neighbors $G,G'$}} |f(G) - f(G')|.
\end{align*}
\end{definition}
\noindent Unless specified otherwise, we use $GS(f)$ w.r.t.\ node-neighbors.

\begin{definition}[Laplace distribution]
\label{def:laplace-distribution}
The Laplace distribution with mean $0$ and standard deviation $\sqrt{2}b$, denoted by $\mathrm{Lap}(b)$, is defined for all $z \in \mathbb{R}$ and has probability density
$$h(z) = \frac{1}{2b}\exp\big(-\frac{|z|}{b}\big).$$
\end{definition}

\begin{theorem}[Laplace Mechanism \cite{DworkMNS16}] 
\label{thm:laplace}
Given a function $f \colon \cG \to \R$. The algorithm $\cA$ that, given a graph $G$, outputs $\cA(G) = f(G) + Z $ where $Z \sim \mathrm{Lap}(GS(f)/\eps)$ is $\eps$-node-differentially-private.
\end{theorem}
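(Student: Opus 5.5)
We prove the Laplace mechanism guarantee by directly comparing output densities on neighboring graphs. Fix node-neighboring graphs $G, G'$ and write $b = GS(f)/\eps$.

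\textbf{Degenerate case.} If $GS(f)=0$, then $f(G)=f(G')$. The two output distributions are therefore identical (a point mass, or equivalently any common noise), and the inequality is trivial. So we assume $GS(f)>0$ and hence $b>0$.

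\textbf{Pointwise density ratio.} The output $\cA(G)=f(G)+Z$ has density
\[
p_G(z)=\frac{1}{2b}\exp\!\left(-\frac{|z-f(G)|}{b}\right),
\]
by translating the density of Definition~\ref{def:laplace-distribution}, and similarly for $p_{G'}$. For every $z\in\R$ we bound
\[
\frac{p_G(z)}{p_{G'}(z)}=\exp\!\left(\frac{|z-f(G')|-|z-f(G)|}{b}\right)\le \exp\!\left(\frac{|f(G)-f(G')|}{b}\right).
\]
The inequality is the reverse triangle inequality. By Definition~\ref{def:GS} (taken with respect to node neighbors), $|f(G)-f(G')|\le GS(f)$. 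The ratio is thus at most $\exp(GS(f)/b)=e^{\eps}$.

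\textbf{From densities to events.} For any measurable event $S\subseteq\R$, integrate the pointwise bound over $S$:
\[
\Pr[\cA(G)\in S]=\int_S p_G(z)\,dz\le e^{\eps}\int_S p_{G'}(z)\,dz=e^{\eps}\Pr[\cA(G')\in S].
\]
The neighbor relation is symmetric, so the same argument holds with $G$ and $G'$ swapped. This establishes Definition~\ref{def:node-dp}.

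There is no substantive obstacle here. The only care needed is in three places:
\begin{itemize}
\item handling $GS(f)=0$ separately, since the Laplace scale is $0$ there;
\item noting that $GS(f)$ must be finite for the mechanism to be well defined;
\item recognizing that node-neighborhood enters only through the sensitivity bound.
\end{itemize}
The same proof applies verbatim to edge-DP when $GS$ is taken with respect to edge neighbors.
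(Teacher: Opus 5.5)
Your proof is correct: bounding the output-density ratio pointwise by $e^{\eps}$ via the triangle inequality and then integrating over an arbitrary event $S$ is exactly the standard argument in the source the paper cites for this theorem. The paper itself states the result without proof and defers to \cite{DworkMNS16}; your remarks that $GS(f)$ must be finite and that $GS(f)=0$ needs separate handling are apt, and the finiteness point is precisely why the paper truncates degrees before applying the mechanism.
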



Differential privacy is preserved under post-processing. Additionally, the outputs of multiple private algorithms can be combined to obtain an algorithm that has privacy protection linear in
the number of composed algorithms. 

\begin{lemma}[Composition and post-processing \cite{DworkMNS16,DworkKMMN06}]
\label{lem:composition}
If algorithm $\cA$ runs 
$\eps_1, \eps_2, \ldots, \eps_t$-node-private algorithms $\cA_1,...,\cA_t$
and applies a randomized algorithm $g$ to the outputs, then the algorithm $\cA(G) = g(\cA_1(G), \dots, \cA_t(G))$ is $(\sum_{i=1}^{t} \eps_i)$-node-private.
\end{lemma}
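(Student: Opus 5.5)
The plan is to reduce \Cref{lem:composition} to two facts: pure $\eps$-DP composes additively for independent runs, and post-processing cannot increase privacy loss. Throughout, I assume that $\cA_1,\dots,\cA_t$ use independent randomness, that $g$ uses randomness independent of all of them, and that $g$ does not access $G$ except through the outputs $\cA_i(G)$. I will also treat the output spaces as discrete (or work with densities) so that events can be handled as sums or integrals over outcomes.

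\textbf{Composition.} Fix node-neighboring graphs $G,G'$ and consider the joint mechanism $\mathcal{B}(G)=(\cA_1(G),\dots,\cA_t(G))$. For any outcome tuple $(o_1,\dots,o_t)$, independence gives $\Pr[\mathcal{B}(G)=(o_1,\dots,o_t)]=\prod_i \Pr[\cA_i(G)=o_i]$. Applying \Cref{def:node-dp} to each singleton event $\{o_i\}$ bounds each factor by $e^{\eps_i}\Pr[\cA_i(G')=o_i]$, so
\[
\Pr[\mathcal{B}(G)=(o_1,\dots,o_t)]\le e^{\sum_i\eps_i}\,\Pr[\mathcal{B}(G')=(o_1,\dots,o_t)].
\]
Summing over all outcomes in an arbitrary event $S$ (or integrating densities in the continuous case) shows that $\mathcal{B}$ is $(\sum_i\eps_i)$-node-private. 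If the $\cA_i$ are chosen adaptively, the same argument works with conditional probabilities $\Pr[\cA_i(G)=o_i\mid o_{<i}]$. This requires that each conditional mechanism be $\eps_i$-private for every fixed prefix $o_{<i}$.

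\textbf{Post-processing.} Let $g$ be randomized, so that for each input $o$ it outputs a point of $S$ with probability $p_S(o)=\Pr[g(o)\in S]$. Then
\[
\Pr[g(\mathcal{B}(G))\in S]=\mathbb{E}[p_S(\mathcal{B}(G))]=\sum_o p_S(o)\Pr[\mathcal{B}(G)=o].
\]
Bounding each term with the pointwise inequality from the composition step yields $\Pr[g(\mathcal{B}(G))\in S]\le e^{\sum_i\eps_i}\Pr[g(\mathcal{B}(G'))\in S]$. Equivalently, writing $g$ as a deterministic map of its input together with independent coins, the event $\{g(\mathcal{B}(G))\in S\}$ is a mixture over coin values of preimage events for $\mathcal{B}$, and each of those is already controlled.

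\textbf{Main obstacle.} The delicate step is measure-theoretic: in general the output space is not discrete, so the pointwise argument must go through densities or through product measures. The cleanest route is to show the inequality $\Pr[\mathcal{B}(G)\in R]\le e^{\sum_i\eps_i}\Pr[\mathcal{B}(G')\in R]$ for product rectangles $R=S_1\times\dots\times S_t$, where it follows directly by multiplying the $t$ single-mechanism bounds. One then extends it to all measurable sets by a monotone class (Dynkin $\pi$–$\lambda$) argument, since both sides are finite measures on the product $\sigma$-algebra and the inequality is preserved under the relevant limits.
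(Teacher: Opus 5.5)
Your proof is correct and is the standard argument: a pointwise product bound for independently run mechanisms, then averaging $g$'s acceptance probability against that bound. This is the proof in the Dwork et al.\ sources that the paper cites; the paper gives no proof of its own. One fix in your measure-theoretic remark: writing $\mu,\nu$ for the laws of $\mathcal{B}(G),\mathcal{B}(G')$, the class $\{A:\mu(A)\le e^{\sum_i\eps_i}\nu(A)\}$ is not closed under complements, so Dynkin's $\pi$--$\lambda$ theorem does not apply directly. Instead, extend the inequality by additivity to finite disjoint unions of rectangles (an algebra), then apply the monotone class theorem; it needs only closure under increasing unions and decreasing intersections, which follows from continuity of finite measures.
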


\subsection{Network Statistics Used in SBMs and ERGMs}
\label{app:net-stats-defs}

In this section, we assume that each vertex is labeled with one of $k$ classes. This is done to model certain attributes of the population such as age or race. Next we define graph statistics that will be used as sufficient statistics for the statistical models we use in our experiments.

Stochastic block models (SBMs) have one collections of sufficient statistics: the mixing matrix.

\begin{definition}[Mixing Matrix]
    Given a graph $G$ where each node is labeled with one of $k$ classes, $c_1,\dots , c_k$ the mixing matrix of the graph $G$ denoted as $f_{\mathsf{mm}}(G)$ is a $k \times k$ matrix where the $(i,j)$-th entry is equal to the number of edges that have one end point labeled $c_i$ and other endpoint labeled $c_j$. The $(i,j)$-th entry of this matrix can also be referred as $f_{\mathsf{mm}}[i,j](G)$.
\end{definition}

Exponential family random graph models (ERGMs) are much more expressive, and can be fitted to a large family of sufficient statistics. We present those used in this work.

\begin{definition}[Nodematch]
\label{def:nodematch}
    There are two versions of this statistic.
    \begin{itemize}
        \item \textbf{Nodematch per class}: Given a graph $G$ where each node is labeled with one of $k$ classess, $c_1,\dots , c_k$ the nodematch per class of the graph $G$ denoted as  $f_{\mathsf{nodematch-per-class}}(G)$ is a vector of size $k$ where the $i$-th entry is the number of edges with end-points labeled $c_i$. This is also equal to the k length vector formed from the diagonal of the mixing matrix $f_{\mathsf{mm}}(G)$. The $i$th entry of this vector can independently be referred to as \textbf{Nodematch for class $c_i$}, denoted as $f_{\mathsf{nodematch-per-class}}[i](G)$.

        \item \textbf{Total Nodematch}: Given a graph $G$ where each node is labeled with one of $k$ classess, $c_1,\dots , c_k$ the nodematch of the graph $G$ denoted as $f_{\mathsf{total-nodematch}}(G)$ is equal to the sum of the entries of the vector $f_{\mathsf{nodematch-per-class}}(G)$.  This is also equal to the trace of the mixing matrix $f_{\mathsf{mm}}(G)$.
    \end{itemize}
\end{definition}

\begin{definition}[Nodefactor]
\label{def:nodefactor}
     Given a graph $G$ where each node is labeled with one of $k$ classess, $c_1,\dots , c_k$ the nodefactor of the graph $G$ denoted as $f_{\mathsf{nodefactor}}(G)$ is a vector of length $k$ where the $i$-th entry is the number of edges that have at least one endpoint labeled $c_i$. The $i$th entry of this vector can independently be referred to as \textbf{Nodefactor for class $c_i$}, denoted as $f_{\mathsf{nodefactor}}[i](G)$.
\end{definition}

In the Block Model specification, we use a mixing matrix based on the \emph{age} attribute. In the ERGM specification of our models, we use the following statistics for a graph $G$,
\begin{itemize}
    \item Number of edges,
    \item Number of nodes with degree at least two and degree at least four,
    \item Nodematch per class for \emph{age} and total nodematch for \emph{race}, and
    \item Nodefactor for \emph{age} and \emph{race}.
\end{itemize}

\subsection{Sensitivity Analysis for Relevant Network Statistics}
\label{app:sensitivity-analsysis}

\begin{theorem}[Global Sensitivity of Graph Statistics]
Let $\mathcal{G}_\Delta$ be the family of graphs with maximum degree $\Delta$. The global sensitivity (GS) of the following statistics are:
\begin{enumerate}
    \item Number of Edges: $GS(f_{\mathsf{edges}}) = \Delta$.
    \item Nodes with degree $\ge d$: $GS(f_{\ge d}) = \Delta + 1$.
    \item Mixing Matrix entry for classes $c_i, c_j$: $GS(f_{\mathsf{mm}}[i,j]) = \Delta$
    \item Nodematch for class $c_i$: $GS(f_{\mathsf{nodematch-per-class}}[i]) = \Delta$.
    \item Total Nodematch: $GS(f_{\mathsf{total-nodematch}}) = \Delta$.
    \item Nodefactor for class $c_i$: $GS(f_{\mathsf{nodefactor}}) = 2\Delta$.
\end{enumerate}
\end{theorem}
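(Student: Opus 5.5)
The plan is to treat each statistic separately, restricting attention to node-neighbors $G, G'$ that both lie in $\mathcal{G}_\Delta$. Without loss of generality, $G'$ is obtained from $G$ by adding a single node $v$ together with its incident edges; the removal direction is symmetric.

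The key structural observation concerns the set $E_v$ of new edges. Because $G'\in\mathcal{G}_\Delta$, we have $|E_v|=\deg_{G'}(v)\le\Delta$. All other edges are common to $G$ and $G'$. For any statistic that is a sum over edges of an indicator, the difference $f(G')-f(G)$ is the number of edges in $E_v$ satisfying that indicator. This gives the upper bound $\Delta$ for four statistics:
\begin{itemize}
\item the edge count;
\item a mixing-matrix entry $[i,j]$, whose indicator is that the endpoints are labeled $c_i$ and $c_j$;
\item the nodematch for class $c_i$, whose indicator is that both endpoints are labeled $c_i$;
\item the total nodematch, whose indicator is that the endpoints share a label.
\end{itemize}
For nodefactor, each edge of $E_v$ can contribute to the count for class $c_i$ at most twice, namely once per endpoint labeled $c_i$, if the entry is read in the ERGM convention of summing over endpoints. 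That reading gives the bound $2\Delta$. I will need to reconcile this with Definition~\ref{def:nodefactor}, which literally counts edges with at least one $c_i$ endpoint and under which the bound would be only $\Delta$. My plan is to adopt the endpoint-counting convention, i.e.\ $\sum_{u:\ell(u)=c_i}\deg(u)$, for the upper bound, and to note the discrepancy explicitly.

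For the statistic counting nodes with degree $\ge d$, the change comes from two sources. The new node $v$ itself contributes at most $1$. Beyond that, only the neighbors of $v$ change degree, there are at most $\Delta$ of them, and each degree rises by exactly one, so each can cross the threshold $d$ at most once. The total change is therefore at most $\Delta+1$, and it is monotone, so no cancellation issues arise.

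For tightness I will build explicit extremal pairs inside $\mathcal{G}_\Delta$.
\begin{itemize}
\item \textbf{Edge-based statistics.} Let $G$ be $\Delta$ isolated nodes of class $c_j$, and obtain $G'$ by adding $v$ of class $c_i$ adjacent to all of them. This adds $\Delta$ to the edge count and to entry $[i,j]$. With $c_i=c_j$, it also adds $\Delta$ to nodematch and total nodematch.
\item \textbf{Degree threshold.} Let $G$ consist of $\Delta$ nodes each of degree exactly $d-1$, with $d-1\le\Delta-1$. Add $v$ adjacent to all of them; it has degree $\Delta\ge d$. Then $\Delta+1$ nodes cross the threshold. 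The $d-1$ existing neighbors of each of these $\Delta$ nodes can be supplied by auxiliary nodes, and these must be chosen so they do not also become adjacent to $v$. This needs $1\le d\le\Delta$, which I will state as an implicit assumption.
\item \textbf{Nodefactor.} Use the same star, now with every node labeled $c_i$, so that both endpoints of each new edge count toward $c_i$.
\end{itemize}

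I expect the main obstacle to be reconciling the nodefactor value $2\Delta$ with the stated definition, together with the boundary cases of $d$. There is also a wording subtlety: the statement asserts an equality for the global sensitivity rather than an upper bound. I will flag that each lower-bound construction requires the relevant classes to exist and $n$ to be large enough, and that for $d>\Delta$ the degree statistic has smaller sensitivity.
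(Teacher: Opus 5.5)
Your proposal is correct. For items 1--5 it follows the paper's argument: add a node $v$, use $|E_v|\le\Delta$ to bound the change in each edge-indicator count, and for $f_{\ge d}$ charge at most $1$ to $v$ and at most one threshold crossing to each of its at most $\Delta$ neighbors.

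You go beyond the paper in two ways.

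First, you prove the lower bounds with explicit extremal pairs and state the side conditions they need, whereas the paper only says ``in the worst case''. Your remark that $GS(f_{\ge d})=0$ when $d>\Delta$ has practical bite here: the paper's ERGM uses $d=4$ alongside $\Delta\in\{2,3\}$.

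Second, and more important, is nodefactor. The paper's proof reads Definition~\ref{def:nodefactor} literally (edges with at least one endpoint labeled $c_i$). It finds that each new edge contributes at most $+1$ in both cases and concludes $GS=\Delta$, which contradicts the $2\Delta$ in the statement it is proving. Your endpoint-counting reading $\sum_{u:\ell(u)=c_i}\deg(u)$ is the \texttt{ergm} convention, in which an edge with both endpoints in $c_i$ counts twice. That reading is exactly what makes $2\Delta$ the true value, and your all-$c_i$ star attains it. Under the literal definition, $2\Delta$ is only a valid but loose upper bound. That still suffices for calibrating Laplace noise, but not for the equality as claimed.

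So your route is the internally consistent one. State the redefinition up front as part of the theorem rather than as an aside, since it changes what item 6 asserts.
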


\begin{proof}
Let $G'$ be obtained from $G$ by adding a single node $u$ with incident edges $E_u$.

\paragraph{1. Number of Edges}
Adding node $u$ adds $|E_u|$ edges to the graph. Since the maximum degree is $\Delta$, the maximum number of added edges is $\Delta$.
$$ GS(f_{\mathsf{edges}}) = \Delta $$

\paragraph{2. Nodes with degree $\ge d$}
The statistic counts nodes $v$ where $\deg(v) \ge d$.
\begin{itemize}
    \item \textbf{Impact on $u$:} The new node $u$ contributes $+1$ to the count if $\deg(u) \ge d$.
    \item \textbf{Impact on neighbors:} The neighbors of $u$ (denoted by $N(u)$) each satisfy $\deg'(v) = \deg(v) + 1$. A neighbor contributes to the change only if its degree increases from $d-1$ to $d$. There are at most $\Delta$ such neighbors.
\end{itemize}
The total maximum change is the sum of the change for $u$ and its neighbors: $1 + \Delta$.
$$ GS(f_{\ge d}) = \Delta + 1 $$

\paragraph{3. Mixing Matrix entry for class $c_i, c_j$}
The Mixing Matrix is a matrix $M$ where $M[i,j]$ counts edges between classes $c_i$ and $c_j$. Let $u$ be the new node added with label $c(u)$. 
\begin{itemize}
    \item \textbf{Case 1:  ($c(u) \neq c_i \text{ and } c(u) \neq c_j$).} In this case, all incident edges to $u$ have one endpoint with a label that is neither $c_i$ nor $c_j$. Thus, no edge incident to $u$ can connect class $c_i$ to class $c_j$. There is no change to $f_{\mathsf{mm}}[i,j](G)$.
    
    \item \textbf{Case 2: $(c(u) = c_i \text{ or } c(u) = c_j)$.} Without loss of generality, let $c(u) = c_i$. Any new edge $(u, v)$ contributes to the statistic if and only if the neighbor $v$ has label $c_j$. In the worst case, all the $\Delta$ neighbors are in class $c_j$. Thus $f_{\text{mm}}[i,j](G)$ increases by $\Delta$.
\end{itemize}
$$GS(f_{\mathsf{mm}}[i,j]) = \Delta$$


\paragraph{4. Nodematch for class $c_i$ (Scalar).} 
$f_{\mathsf{nodematch-per-class}}[i](G)$ counts edges with both endpoints in class $c_i$. Let $u$ be the new node added whose label is $c(u)$.
\begin{itemize}
    \item \textbf{Case 1: ($c(u) \neq c_i$).} In this case, all incident edges to $u$ have one endpoint with label not equal to $c_i$. Thus there is no change to $f_{\mathsf{nodematch-per-class}}[i](G)$.
    \item \textbf{Case 2: ($c(u) = c_i$).} Any new edge $(u, v)$ contributes to the statistic if and only if $v$ also has label $c_i$. In the worst case, all the $\Delta$ neighbors are also in $c_i$. Thus $f_{\mathsf{nodematch-per-class}}[i](G)$ increases by $\Delta$.
\end{itemize}
$$ GS(f_{\mathsf{nodematch-per-class}}[i]) = \Delta $$

\paragraph{5. Total Nodematch (Scalar).}
This statistic sums the entries of the vector $f_{\mathsf{nodematch-per-class}}(G)$. It represents the total number of intra-class edges.
Adding node $u$ increases this count by the number of neighbors $v$ such that $\mathrm{label}(u) = \mathrm{label}(v)$. The maximum number of such neighbors is $\Delta$.
$$ GS(f_{\mathsf{total-nodematch}}) = \Delta $$

\paragraph{6. Nodefactor for class $c_i$ (Scalar).}
 $f_{\text{nodefactor}}[i](G)$ counts edges having at least one endpoint in class $c_i$. Let $u$ be the new node added whose label is $c(u)$.
 \begin{itemize}
     \item \textbf{Case 1: ($c(u) \neq c_i$).} In this case, the new node $u$ is not in class $c_i$. An incident edge $(u, v)$ contributes to the count if and only if the other endpoint $v$ is in class $c_i$ (so that the edge touches $c_i$). In the worst case, all $\Delta$ neighbors are in class $c_i$. Thus $f_{\text{nodefactor}}[i](G)$ increases by $\Delta$.
     
     \item \textbf{Case 2: ($c(u) = c_i$).} In this case, the new node $u$ is in class $c_i$. Therefore, every incident edge $(u, v)$ has at least one endpoint ($u$) in class $c_i$. Each of the $\Delta$ new edges contributes exactly +1 to the statistic, regardless of the label of $v$. Thus $f_{\text{nodefactor}}[i](G)$ increases by $\Delta$.
 \end{itemize}

 $$GS(f_{\text{nodefactor}}[i]) = \Delta$$

%
\end{proof}

\subsection{Differentially Private Algorithms to Release Network Statistics}
\label{app:dp-algs}




In this section, we present the blueprint of the differentially private algorithms used to release the network statistics for SBMs and ERGMS in \Cref{app:net-stats-defs} required to parameterize our generative models.

\subsubsection{Releasing a Scalar Statistic}
\label{app:releasing-scalar-stats}
To safely release a single, real-valued network statistic, our algorithm (described in \Cref{alg:dp_stat_release}) takes as input a graph $G$ the privacy parameter $\eps$, and the truncation degree bound $\Delta$.
\begin{itemize}
    \item \textbf{Filtering and Projection}. We first apply a filtering step to isolate the nodes and edges relevant to the specific statistic being queried. Because the global sensitivity of many network statistics scales with the number of nodes, we subsequently apply a projection step based on \cite{DayLL16} that caps the maximum degree of any node in the filtered graph to a predefined truncated degree parameter $\Delta$. This preprocessing limits the influence of any single node, thereby guaranteeing that the global sensitivity of the queried statistic is strictly bounded as a function of $\Delta$.

    \item \textbf{Compute True Statistic}. We compute the exact, unperturbed value of the statistic on this degree-bounded, projected graph.

    \item \textbf{Noise Addition for Privacy}. To achieve differential privacy, we perturb the true statistic using the Laplace mechanism (\autoref{thm:laplace}). We add random noise drawn from a Laplace distribution, where the scale of the noise is proportional to the statistic's global sensitivity divided by $\epsilon$.

    \item \textbf{Clipping}. Because structural graph statistics (such as edge counts or degree distributions) cannot realistically take negative values, we apply a deterministic clipping step. For any noisy output $x$, we release $\max(0, x)$ to ensure the final scalar value does not fall below zero. By the post processing property of differential privacy (\autoref{lem:composition}) the output remains private.  
\end{itemize}

\textbf{Note: Bias Introduced by Clipping (Boundary Effects)}
Clipping via $\max(0, x)$ ensures valid, non-negative counts but introduces a  positive bias for small true values. Let $v$ be the value of the true graph statistic and $z \sim \text{Lap}(b)$ the zero-mean noise needed for privacy. While the unclipped noisy value $v+z$ is an unbiased estimator i.e. ($\mathbb{E}[v+z] = v$), a small $v$ relative to the noise scale $b$ creates a high probability that $v+z < 0$. By clipping the negative left tail of this distribution to exactly zero while leaving the positive right tail unchanged, the expected value shifts upward; that is, $\mathbb{E}[\max(0, v+z)] > v$. This boundary effect can artificially inflate near-zero counts—such as sparse network features or rare interactions—propagating bias into downstream experiments.\asnote{Add a sentence indicating how often (or seldom!) this actually comes up, and where.}

\subsubsection{Releasing Multiple Scalar (and Vector) Statistics}

Fitting our statistical network models requires querying a collection of different statistics simultaneously. To ensure the entire procedure maintains $\epsilon$-node-differential privacy, we rely on the composition property of differential privacy (\autoref{lem:composition}). We divide the total available privacy budget $\epsilon$ among the queried statistics proportionally to their respective global sensitivities (as analyzed in \Cref{app:sensitivity-analsysis}). By allocating the budget in this manner, statistics with naturally higher sensitivities receive a larger share of $\epsilon$, which helps balance the scale of the Laplace noise added across all released metrics.

In some cases, the statistics we release are  vectors or matrices 
(for example, the mixing matrix, nodematch per class, and nodefactor). 
We handle these vector statistics by unpacking them into a series of distinct scalar components. Each constituent element (e.g., a specific cell within a mixing matrix) is treated as an independent scalar statistic and individually passed through the filtering, projection, and noise addition steps described above. Treating these statistics jointly might allow for reduced levels of noise—see our discussion of future work (\Cref{sec:limitations-future-work}).

\begin{algorithm}[H]
\caption{Differentially Private Scalar Graph Statistic Release}
\label{alg:dp_stat_release}
\begin{algorithmic}[1]
\Require Graph $G = (V, E)$, privacy budget $\epsilon > 0$, truncation degree bound $\Delta \in \mathbb{N}$,
\Statex filtering function $f_{\text{filter}}$, statistic function $f_{\text{stat}}$, sensitivity function $f_{\text{sens}}$
\Ensure Differentially private scalar statistic $\tilde{y}_{\text{final}}$
\Statex
\Statex \textbf{Step 1: Graph Filtering}
\State Initialize $E_{\text{filter}} \gets \emptyset$
\For{each edge $(u, v) \in E$}
    \If{$f_{\text{filter}}(u, v)$ is True}
        \State $E_{\text{filter}} \gets E_{\text{filter}} \cup \{(u, v)\}$
    \EndIf
\EndFor
\State $G_{\text{filter}} \gets (V, E_{\text{filter}})$

\Statex
\Statex \textbf{Step 2: Graph Projection (Enforcing Bounded Sensitivity)}
\State Initialize $E_{\text{proj}} \gets \emptyset$
\State Consider a stable ordering $\Lambda$ of the edges in $E_{\text{filter}}$ \Comment{For more details refer to \cite{DayLL16}}
\For{each edge $(u, v) \in E_{\text{filter}}$ chosen according to $\Lambda$}
    \If{$\text{degree}_{G_{\text{filter}}}(u) \leq \Delta$ \textbf{and} $\text{degree}_{G_{\text{filter}}}(v) \leq \Delta$}
        \State $E_{\text{proj}} \gets E_{\text{proj}} \cup \{(u, v)\}$
    \EndIf
\EndFor
\State $G_{\text{proj}} \gets (V, E_{\text{proj}})$

\Statex
\Statex \textbf{Step 3: Compute True Statistic}
\State $y \gets f_{\text{stat}}(G_{\text{proj}})$

\Statex
\Statex \textbf{Step 4: Sensitivity Calculation and Noise Addition}
\State $S \gets f_{\text{sens}}(\Delta)$
\State Draw noise $z \sim \text{Laplace}\left(0, \frac{S}{\epsilon}\right)$
\State $\tilde{y} \gets y + z$

\Statex
\Statex \textbf{Step 5: Post-Processing}
\State $\tilde{y}_{\text{final}} \gets \max(0, \tilde{y})$

\Statex
\State \Return $\tilde{y}_{\text{final}}$
\end{algorithmic}
\end{algorithm}

\clearpage

\section{Additional Results}
\label{app:results}

This section presents detailed results from our differentially private epidemiological modeling pipeline across four experimental conditions: ERGM with high prevalence, ERGM with low prevalence, baseline model (BM) with high prevalence, and BM with low prevalence.

\subsection{Prevalence}

\begin{figure}[h!]
    \centering
    \begin{subfigure}[b]{0.48\textwidth}
        \centering
        \includegraphics[width=\textwidth]{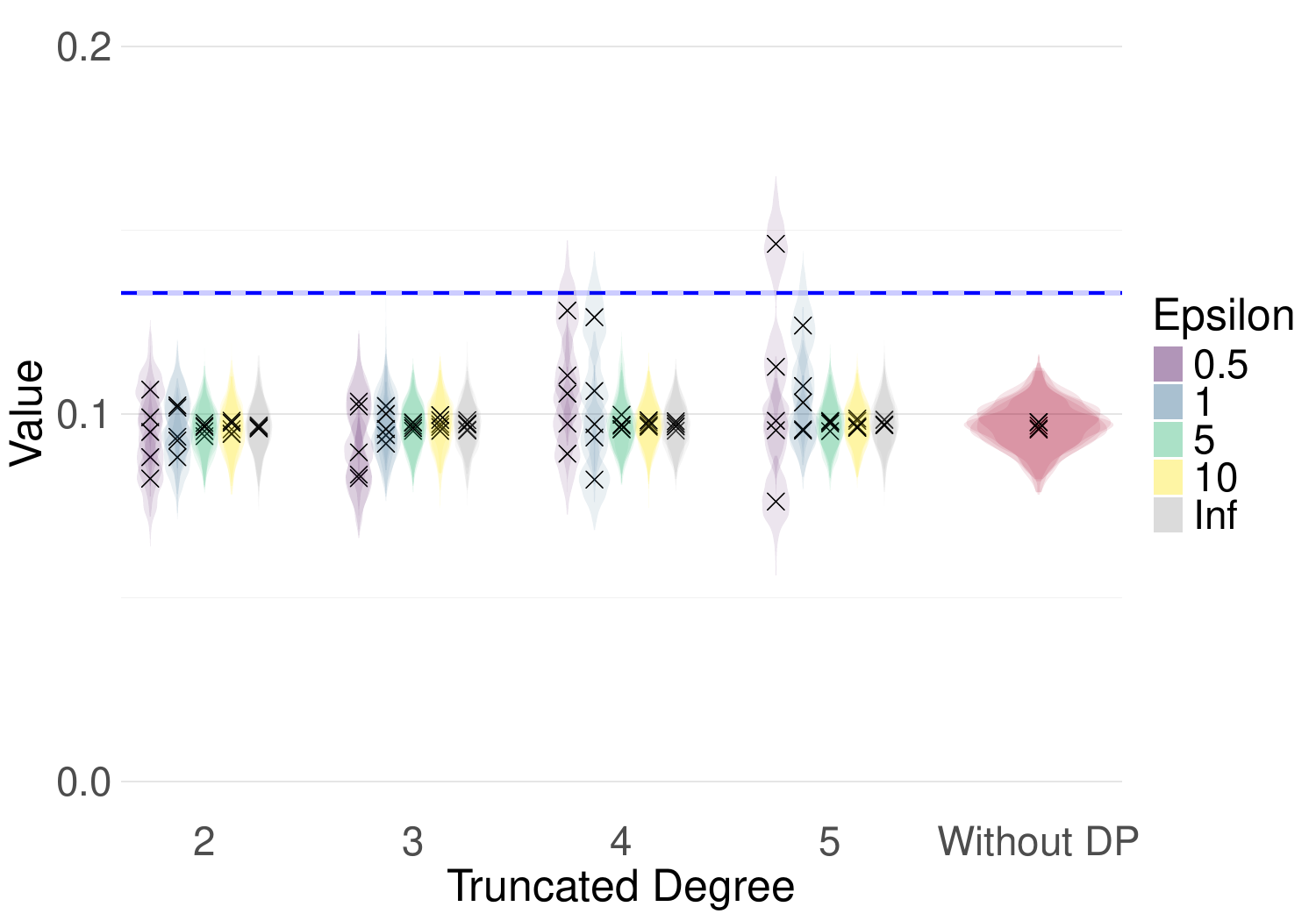}
        \caption{SBM - High Prevalence}
        \label{fig:prevalence_baseline_bm_high}
    \end{subfigure}
    \hfill
    \begin{subfigure}[b]{0.48\textwidth}
        \centering
        \includegraphics[width=\textwidth]{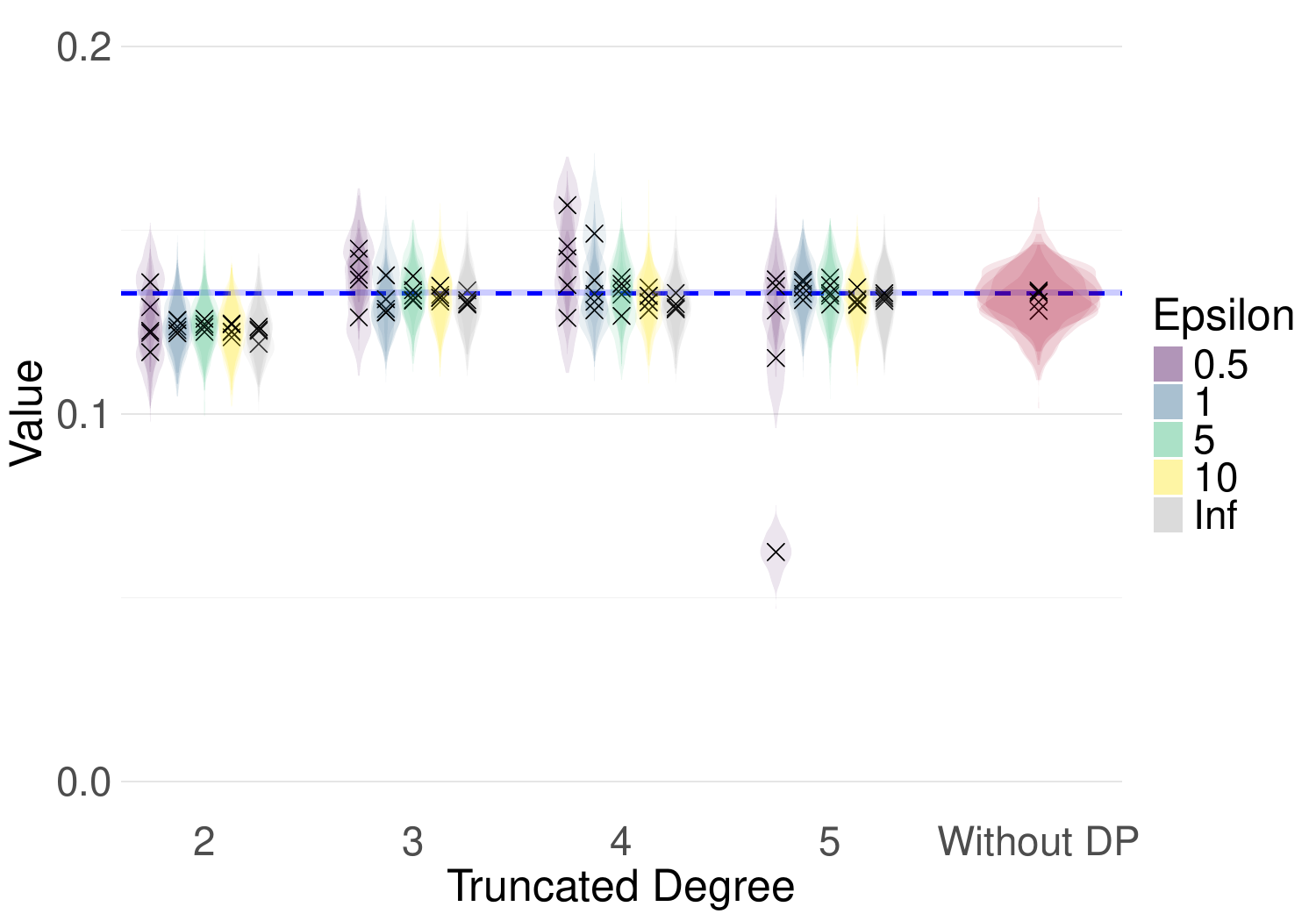}
        \caption{ERGM - High Prevalence}
        \label{fig:prevalence_baseline_ergm_high}
    \end{subfigure}
    
    \vspace{0.5cm}

    \begin{subfigure}[b]{0.48\textwidth}
        \centering
        \includegraphics[width=\textwidth]{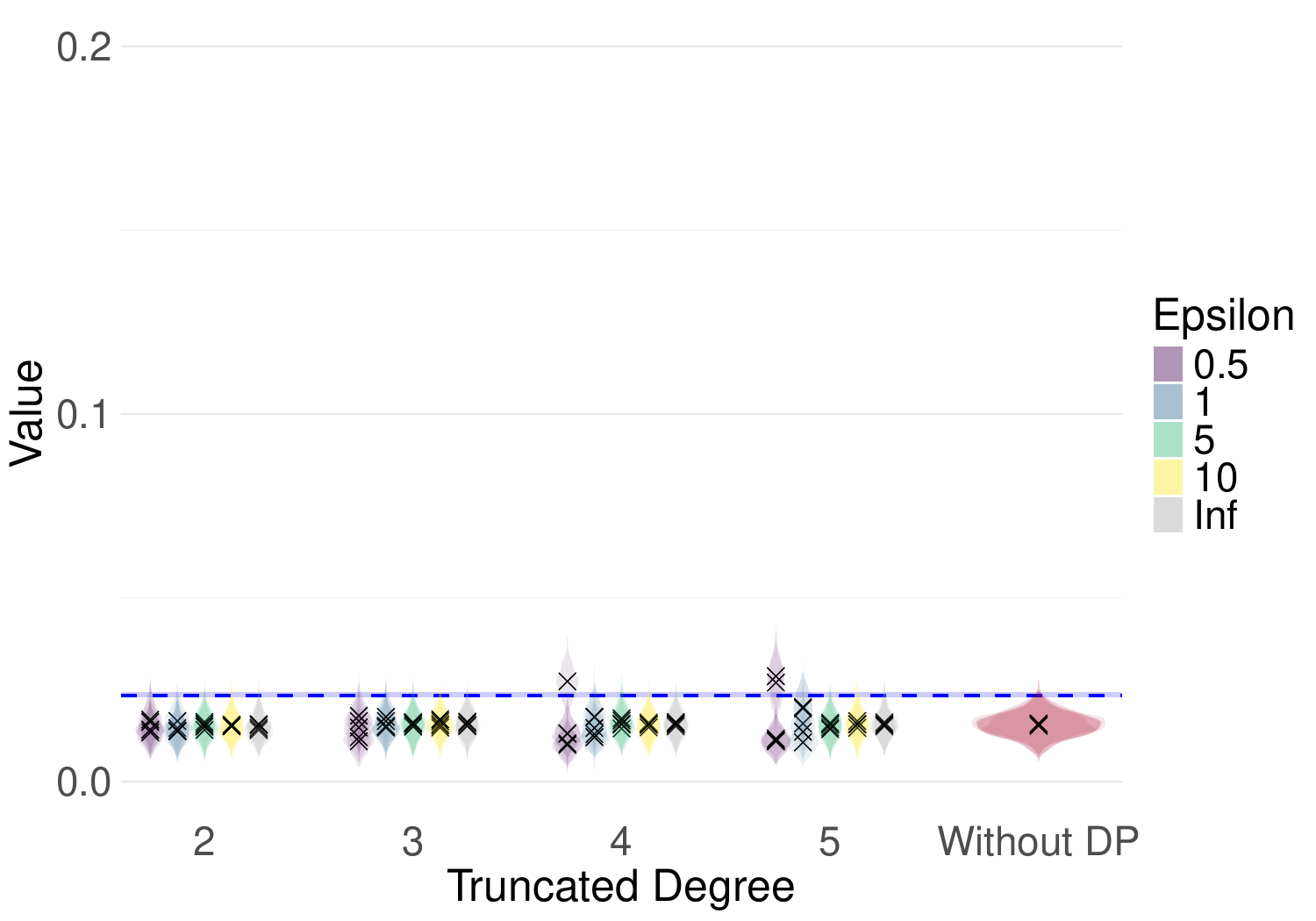}
        \caption{SBM - Low Prevalence}
        \label{fig:prevalence_baseline_bm_low}
    \end{subfigure}
    \hfill
    \begin{subfigure}[b]{0.48\textwidth}
    \centering
    \includegraphics[width=\textwidth]{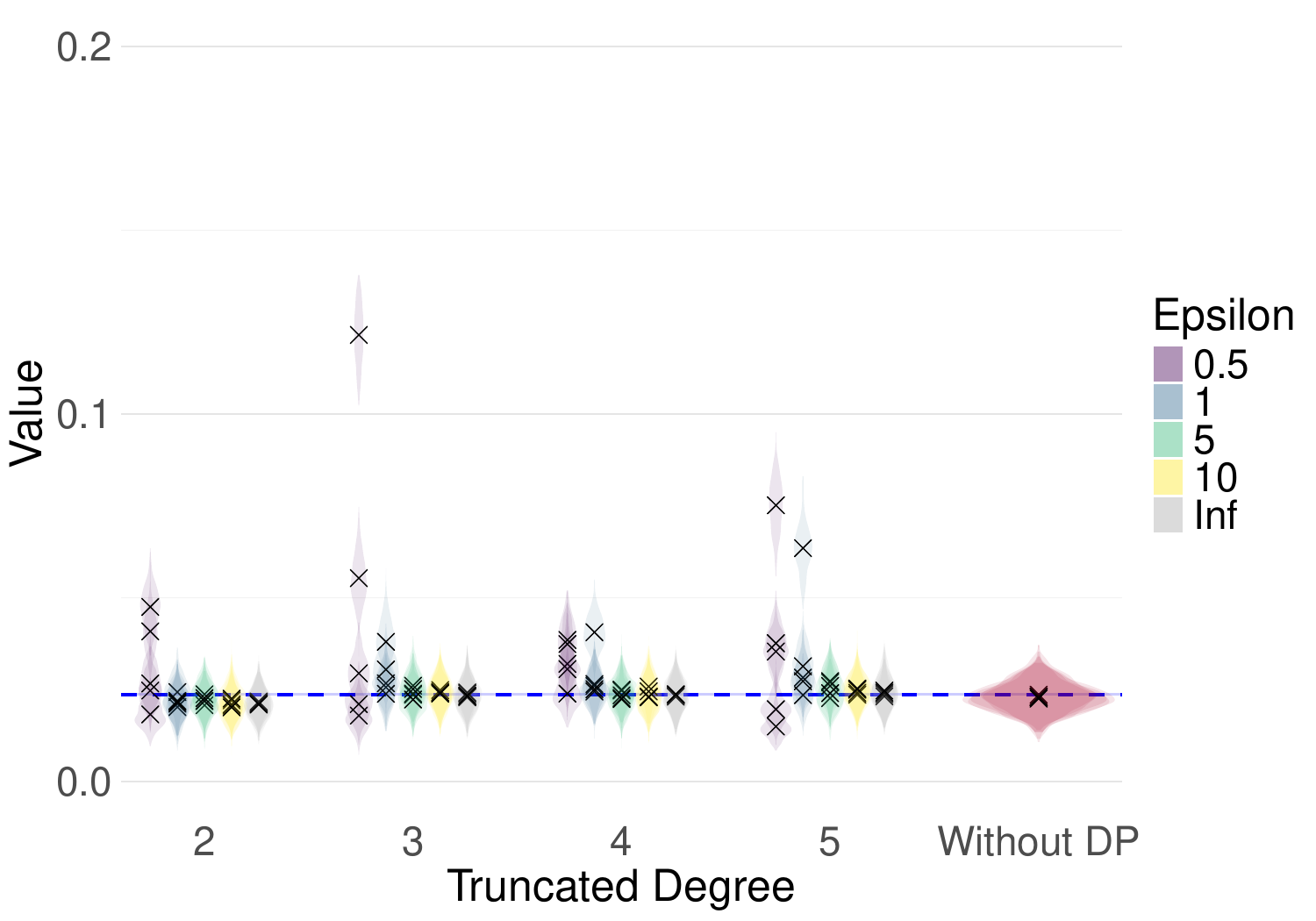}
    \caption{ERGM - Low Prevalence}
    \label{fig:prevalence_baseline_ergm_low}
    \end{subfigure}
    \caption{Prevalence for the baseline scenario across different modeling approaches and prevalence conditions.}
    \label{fig:prevalence_baseline}
\end{figure}

\begin{figure}[h!]
    \centering
    \begin{subfigure}[b]{0.48\textwidth}
        \centering
        \includegraphics[width=\textwidth]{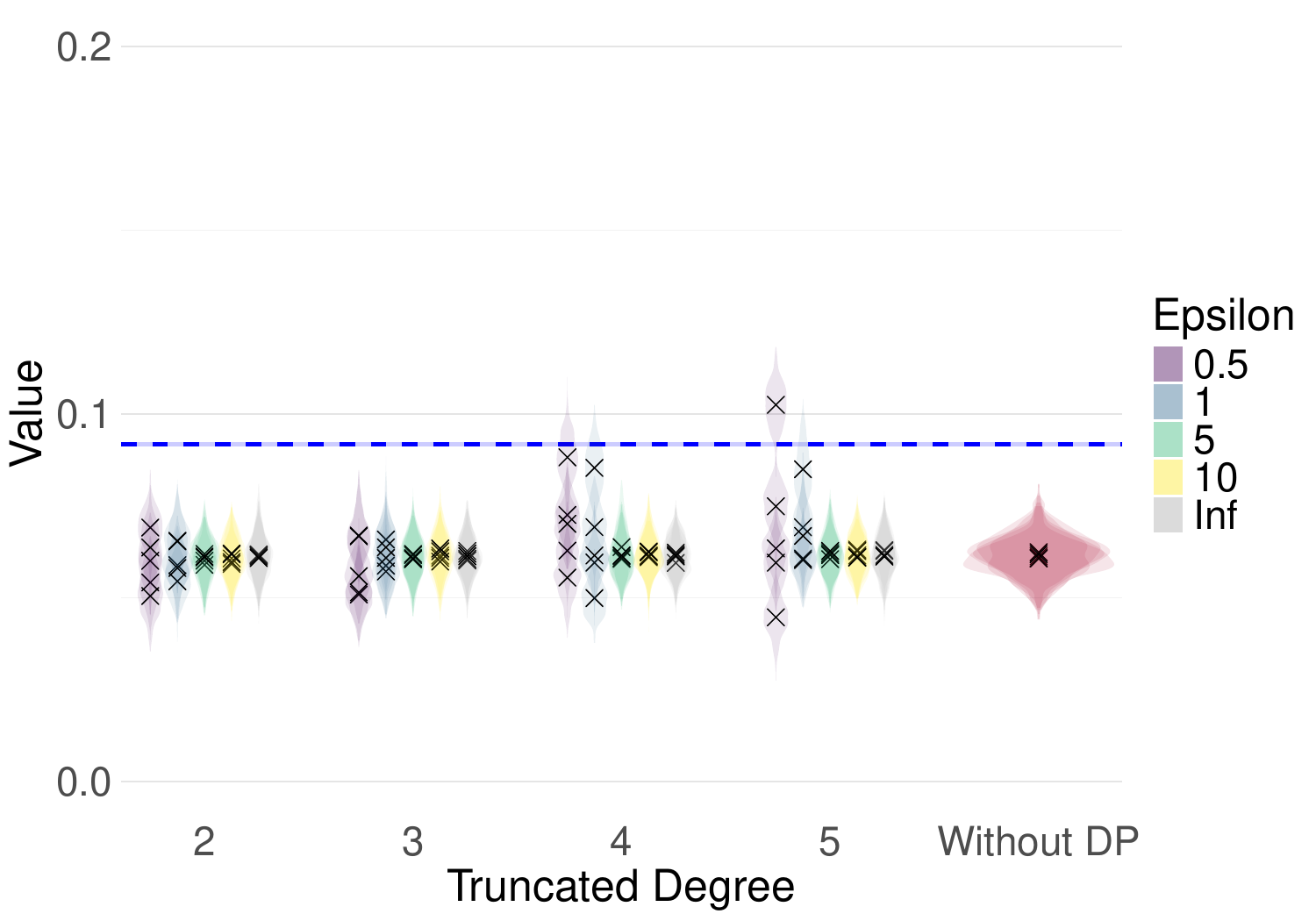}
        \caption{SBM - High Prevalence}
        \label{fig:prevalence_intervention_bm_high}
    \end{subfigure}
    \hfill
    \begin{subfigure}[b]{0.48\textwidth}
        \centering
        \includegraphics[width=\textwidth]{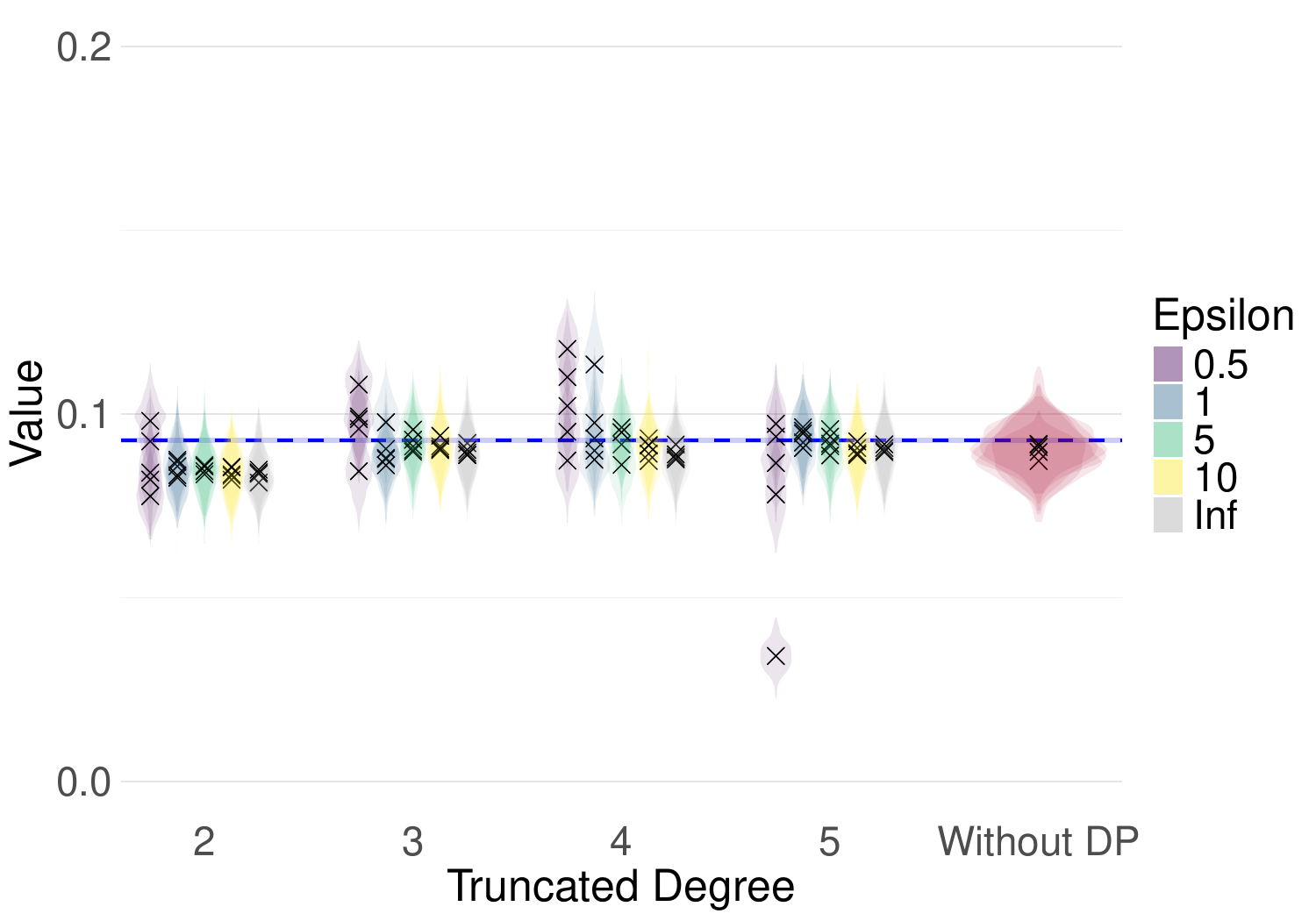}
        \caption{ERGM - High Prevalence}
        \label{fig:prevalence_intervention_ergm_high}
    \end{subfigure}
    
    \vspace{0.5cm}
    
    \hfill
    \begin{subfigure}[b]{0.48\textwidth}
        \centering
        \includegraphics[width=\textwidth]{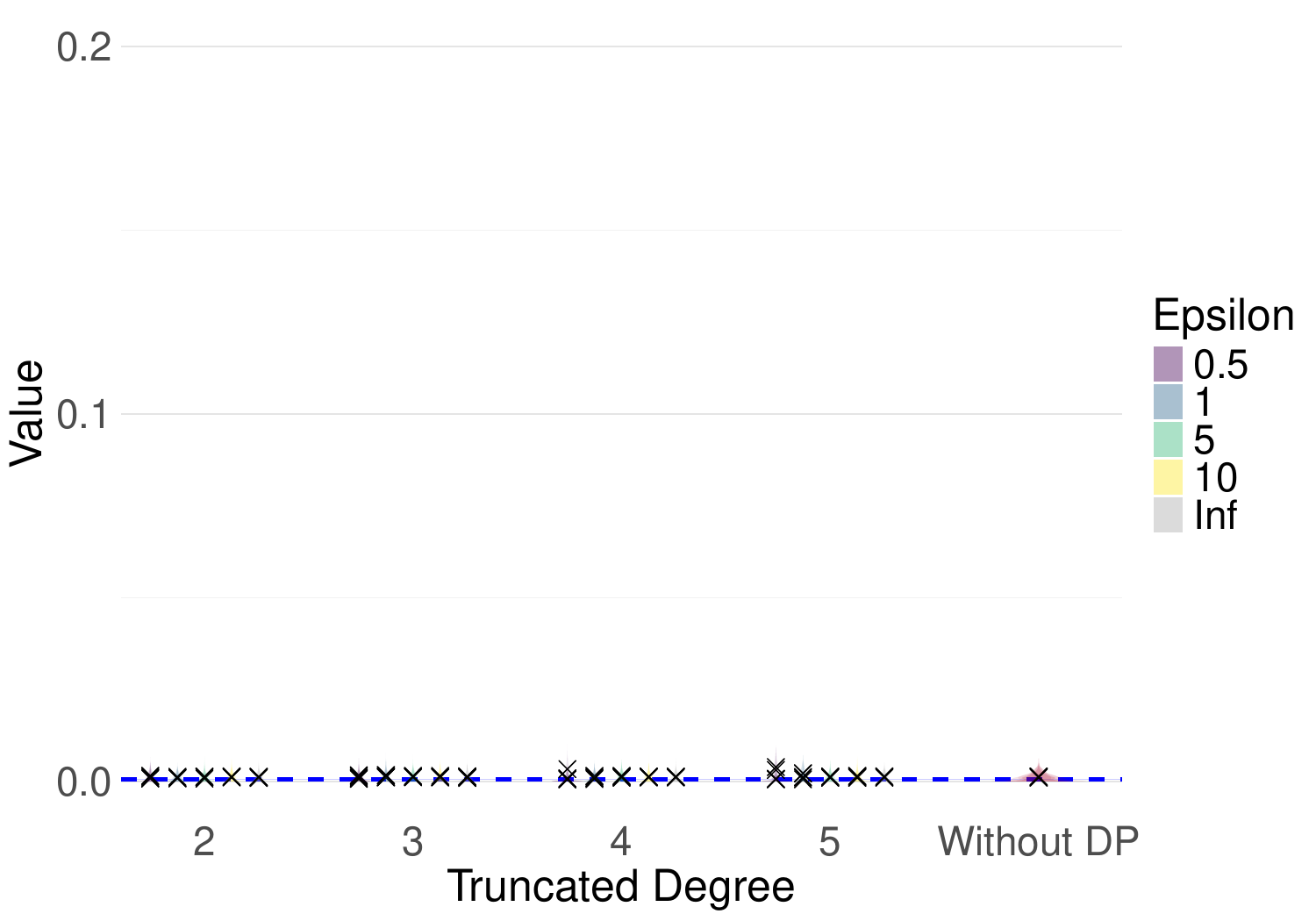}
        \caption{SBM - Low Prevalence}
        \label{fig:prevalence_intervention_bm_low}
    \end{subfigure}
    \hfill
    \begin{subfigure}[b]{0.48\textwidth}
        \centering
        \includegraphics[width=\textwidth]{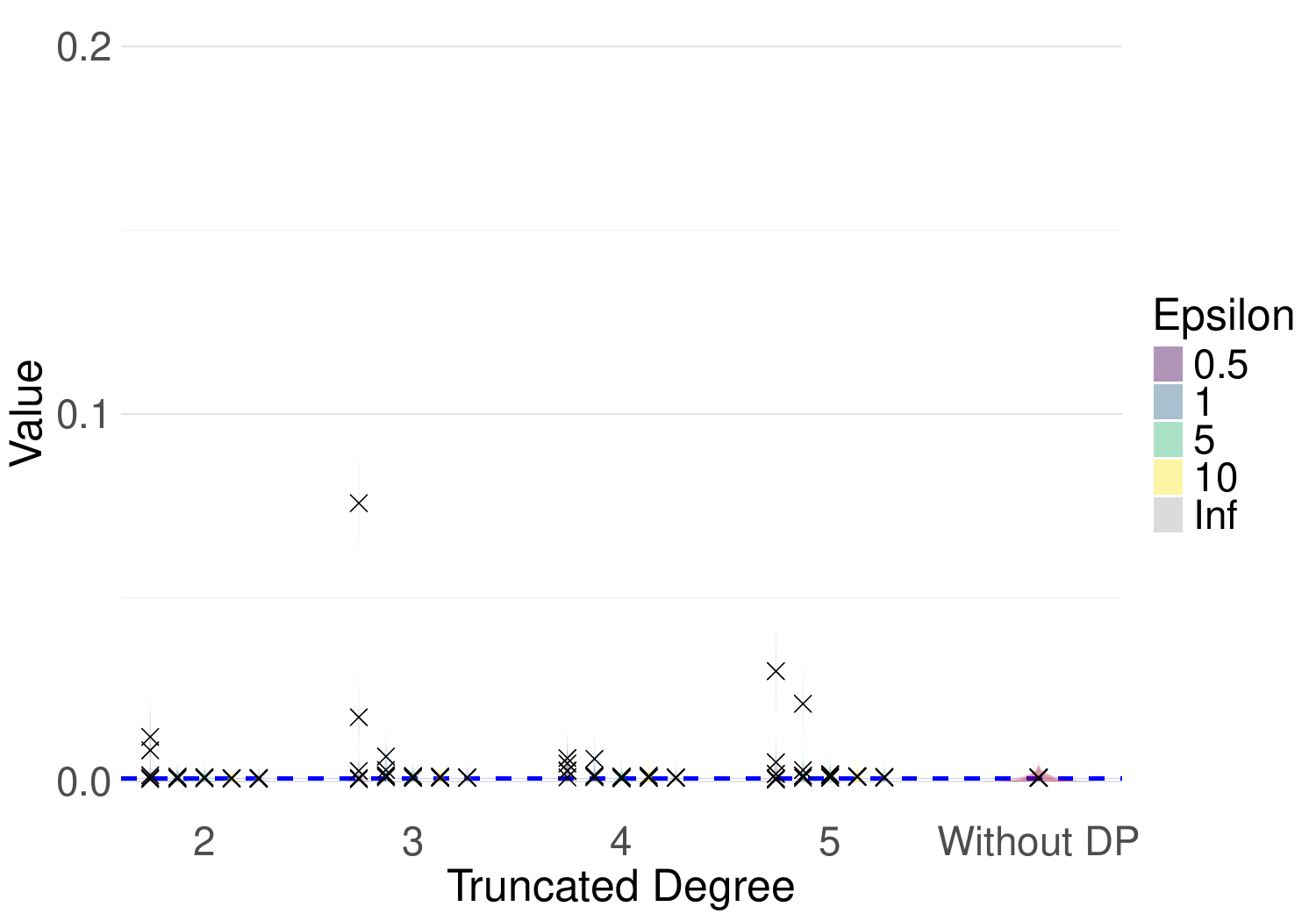}
        \caption{ERGM - Low Prevalence}
        \label{fig:prevalence_intervention_ergm_low}
    \end{subfigure}
    \caption{Prevalence for the intervention scenario across different modeling approaches and prevalence conditions.}
    \label{fig:prevalence_intervention}
\end{figure}

\clearpage

\subsection{Incidence Rate Ratio}

\begin{figure}[h!]
    \centering
    
    \begin{subfigure}[b]{0.48\textwidth}
        \centering
        \includegraphics[width=\textwidth]{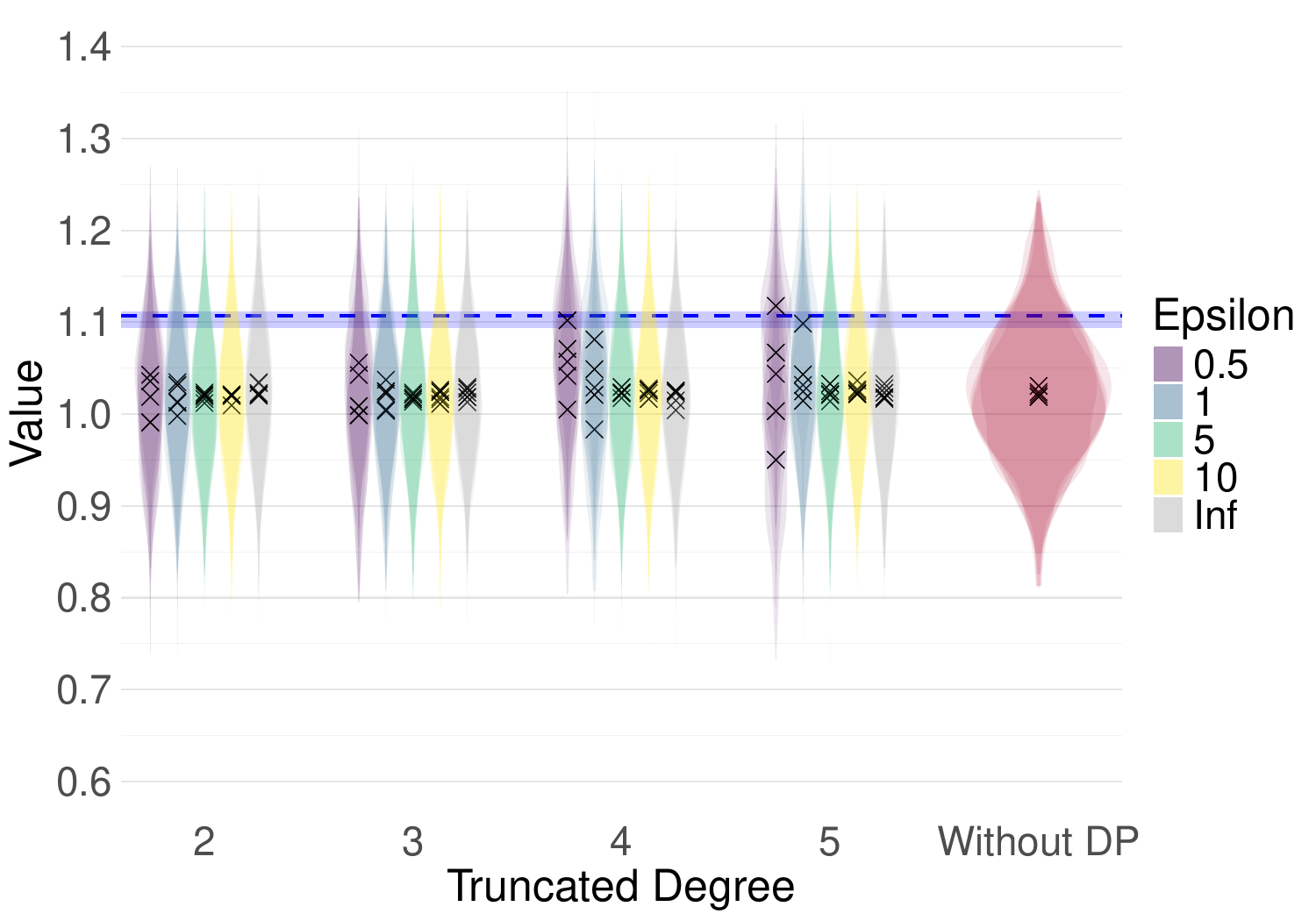}
        \caption{SBM - High Prevalence}
    \end{subfigure}
    \hfill
    \begin{subfigure}[b]{0.48\textwidth}
        \centering
        \includegraphics[width=\textwidth]{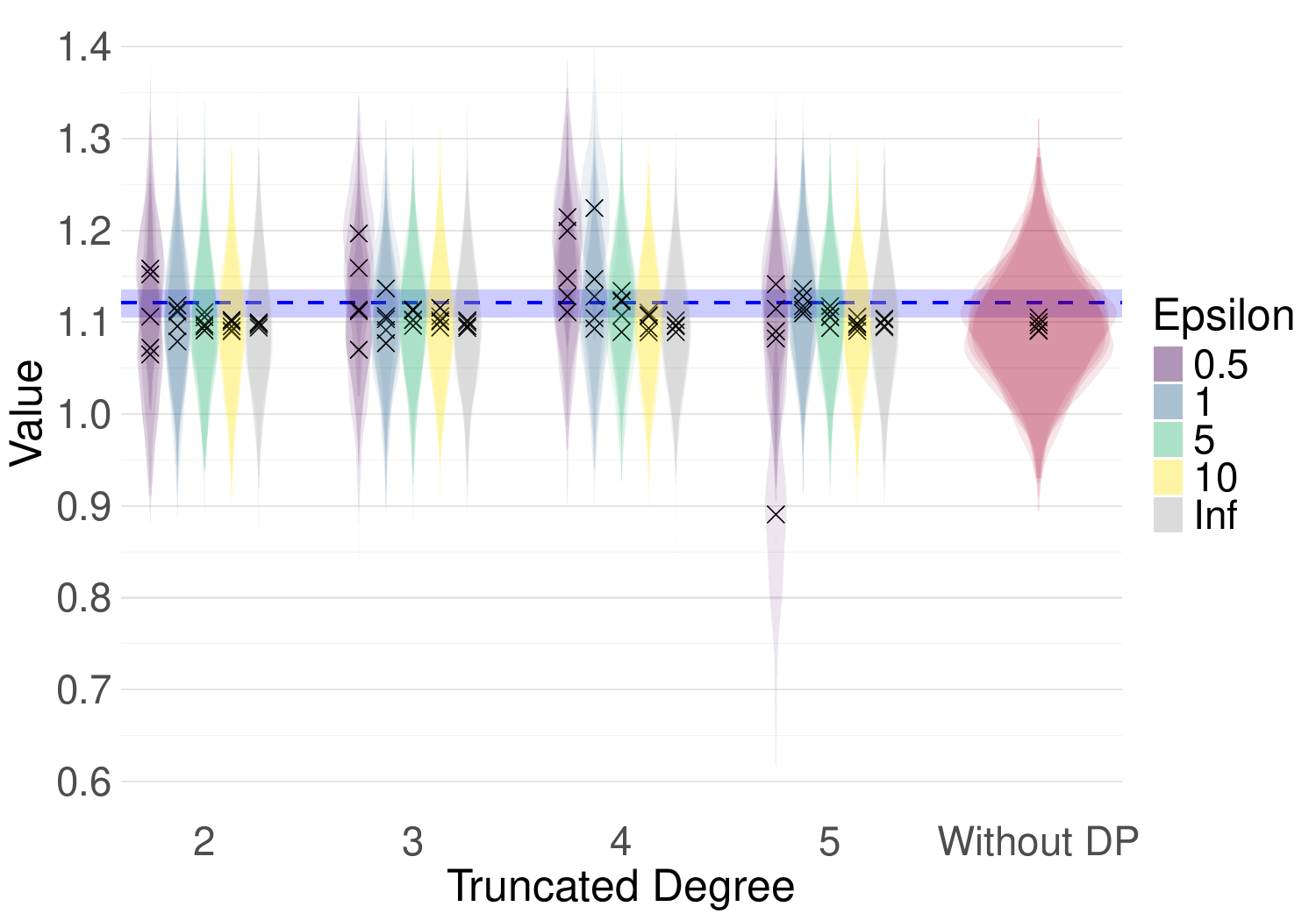}
        \caption{ERGM - High Prevalence}
    \end{subfigure}
    
    \vspace{0.5cm}
    
    \begin{subfigure}[b]{0.48\textwidth}
        \centering
        \includegraphics[width=\textwidth]{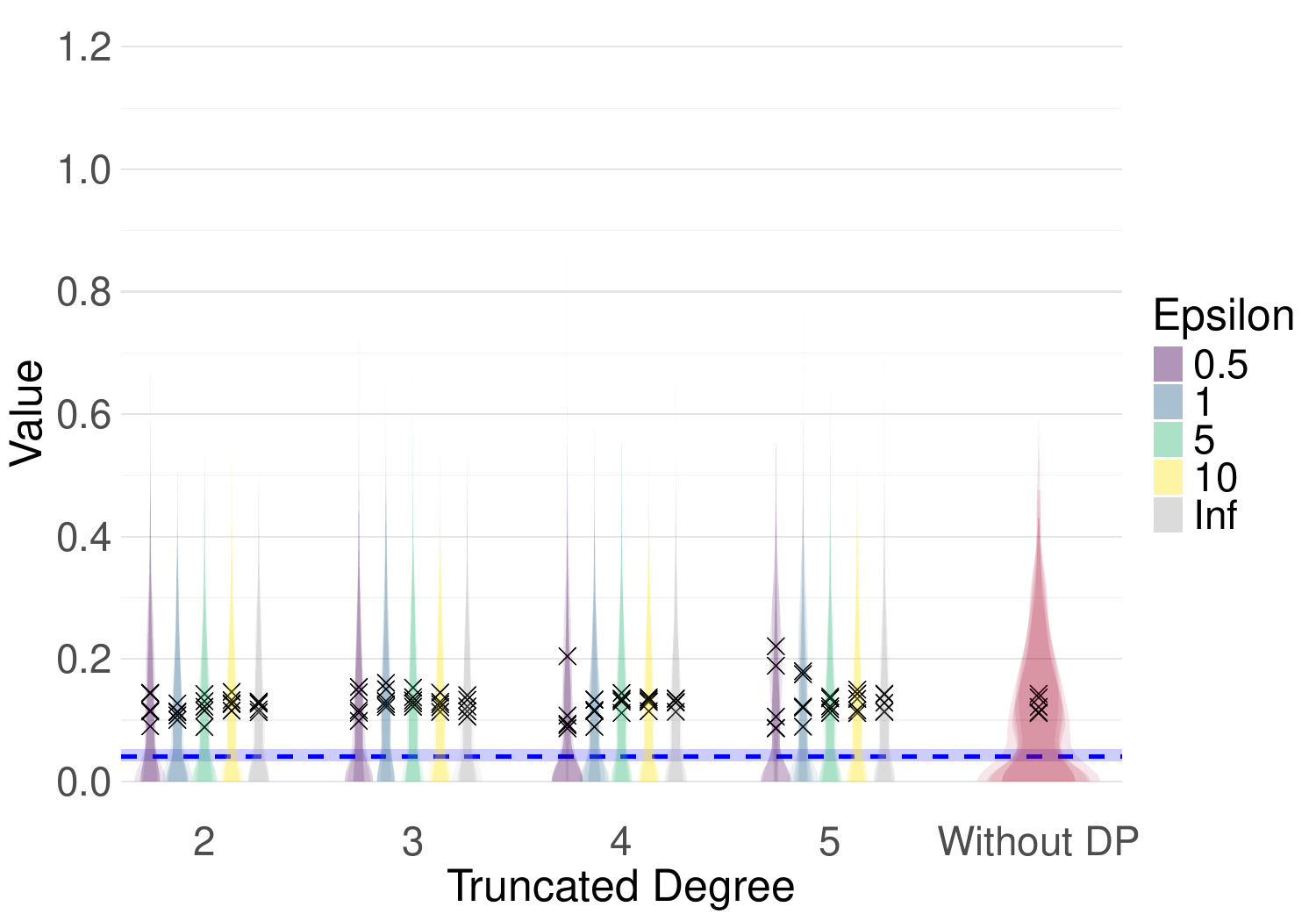}
        \caption{SBM - Low Prevalence}
    \end{subfigure}
    \hfill
    \begin{subfigure}[b]{0.48\textwidth}
        \centering
        \includegraphics[width=\textwidth]{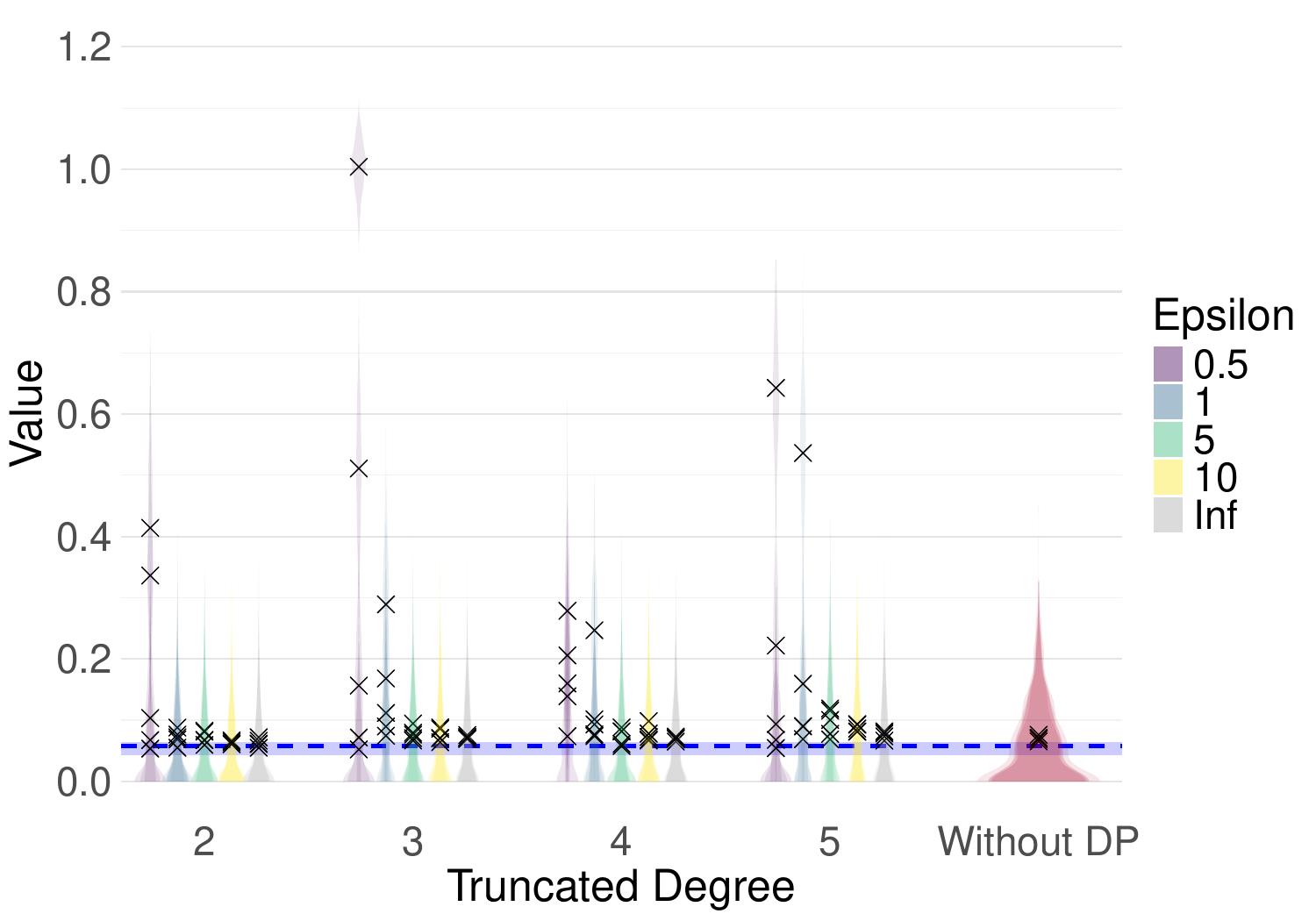}
        \caption{ERGM - Low Prevalence}
    \end{subfigure}

    \caption{Incidence rate ratios comparing intervention effects across different modeling approaches and prevalence conditions.}
    \label{fig:incidence_ratio}
\end{figure}

\clearpage

\subsection{Granular Analysis by Age and Race}

\begin{figure}[h!]
    \centering
    \begin{subfigure}[b]{0.48\textwidth}
        \centering
        \includegraphics[width=\textwidth]{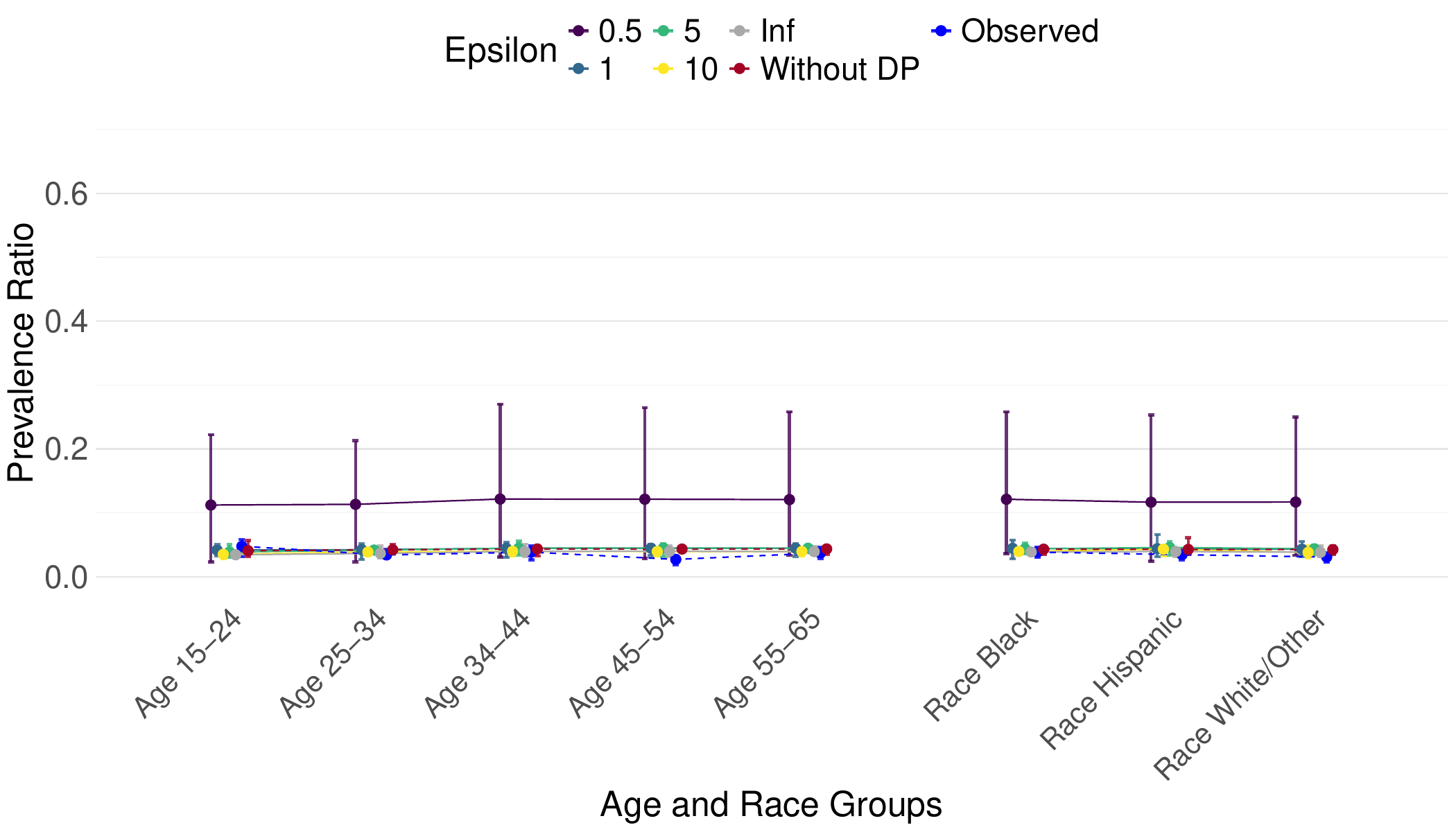}
        \caption{Truncated Degree 2}
        \label{fig:granular_ergm_low_truncated_2}
    \end{subfigure}
    \hfill
    \begin{subfigure}[b]{0.48\textwidth}
        \centering
        \includegraphics[width=\textwidth]{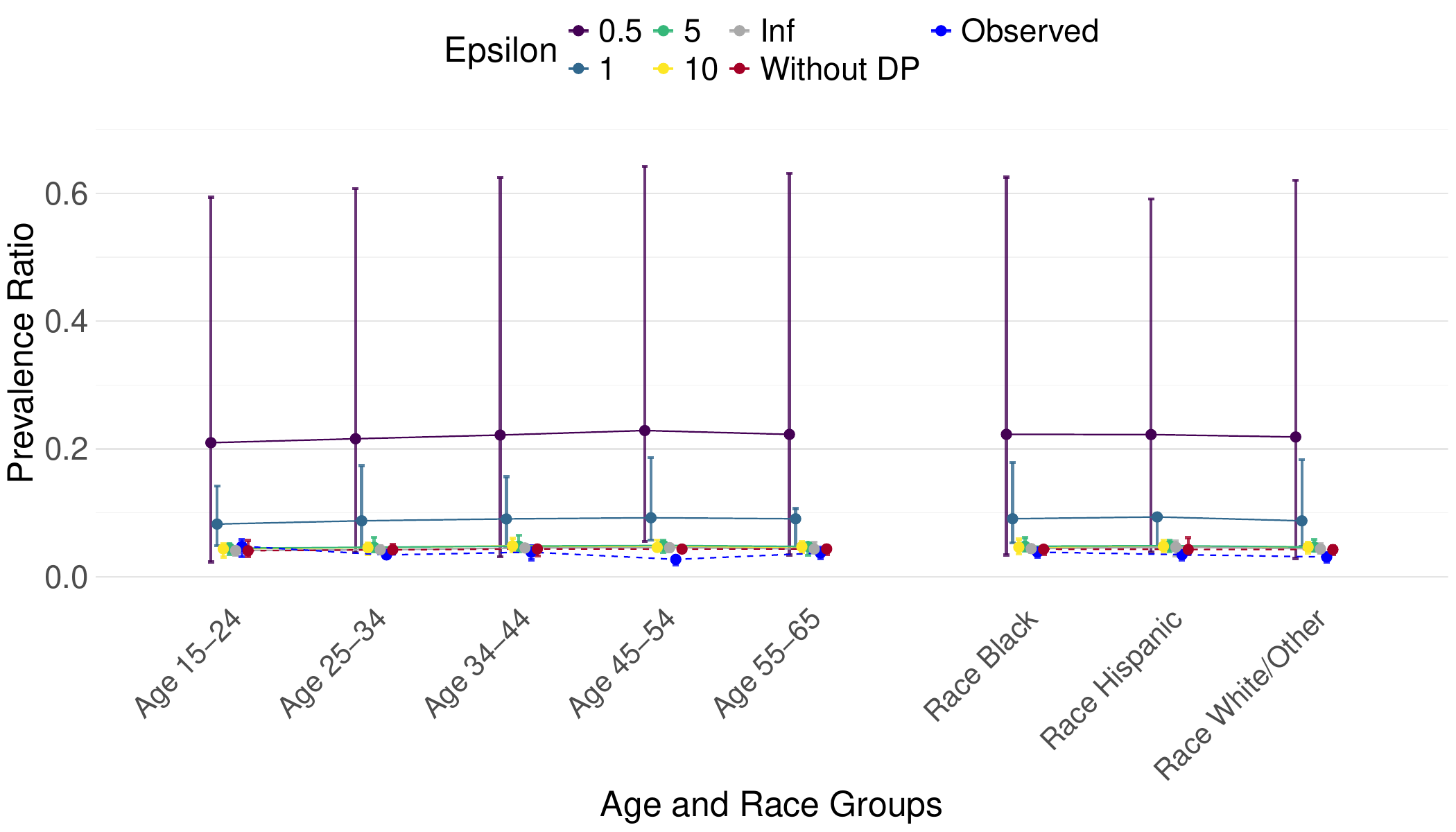}
        \caption{Truncated Degree 3}
    \end{subfigure}
    
    \vspace{0.5cm}
    
    \begin{subfigure}[b]{0.48\textwidth}
        \centering
        \includegraphics[width=\textwidth]{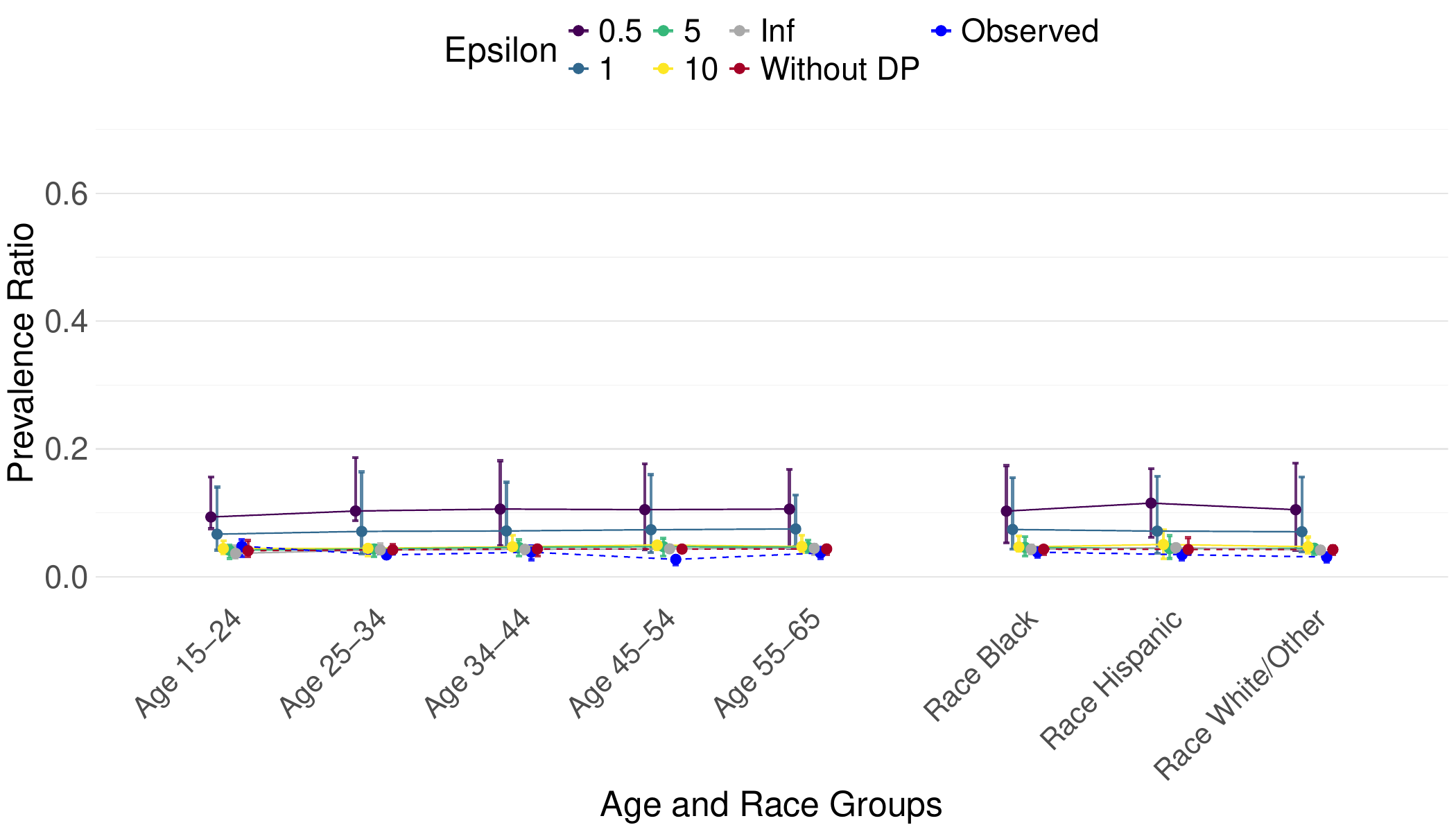}
        \caption{Truncated Degree 4}
    \end{subfigure}
    \hfill
    \begin{subfigure}[b]{0.48\textwidth}
        \centering
        \includegraphics[width=\textwidth]{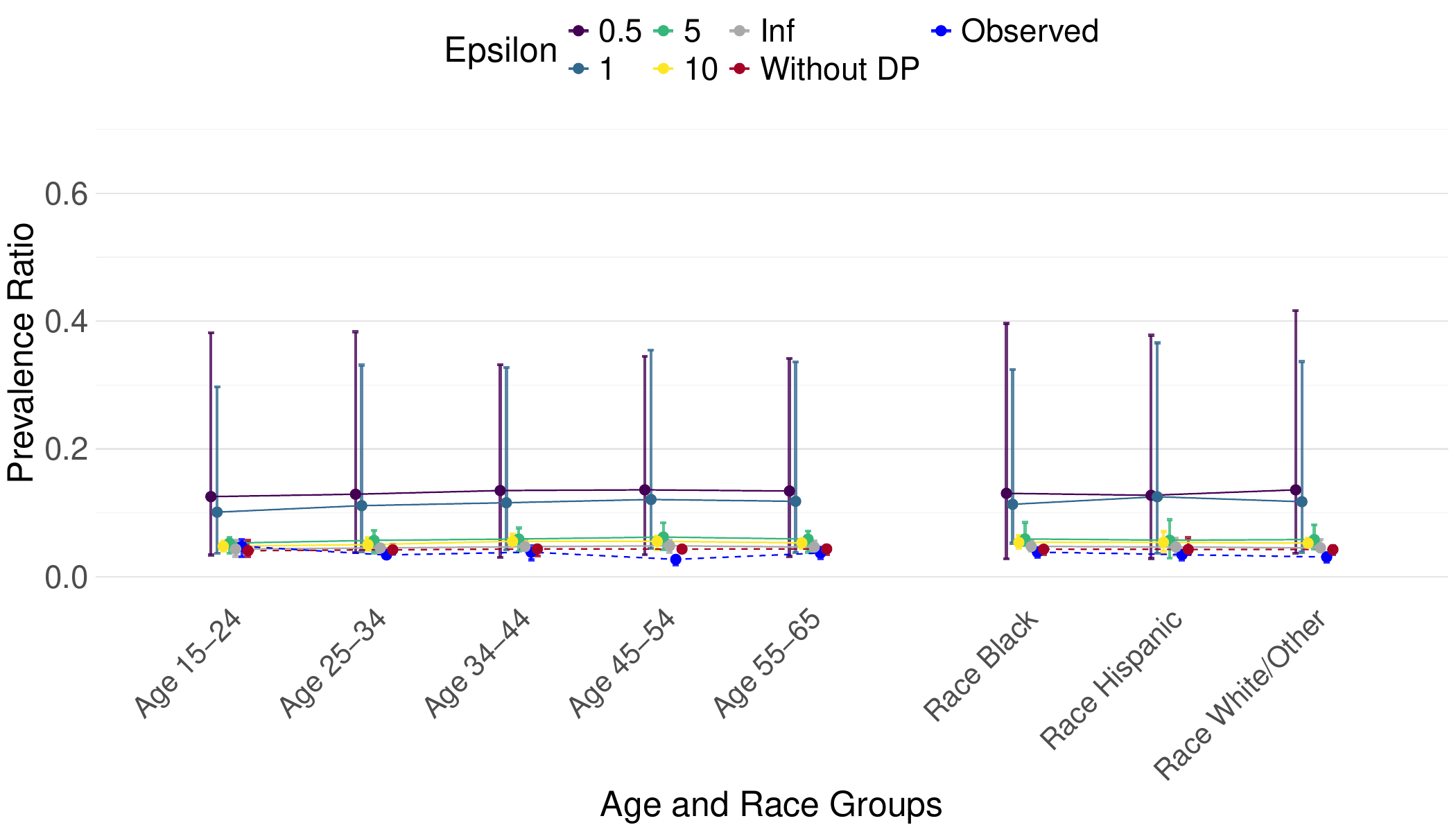}
        \caption{Truncated Degree 5}
    \end{subfigure}
    \caption{Granular analysis by age and race for ERGM low prevalence condition across different maximum degree constraints.}
    \label{fig:granular_ergm_low}
\end{figure}

\begin{figure}[h!]
    \centering
    \begin{subfigure}[b]{0.48\textwidth}
        \centering
        \includegraphics[width=\textwidth]{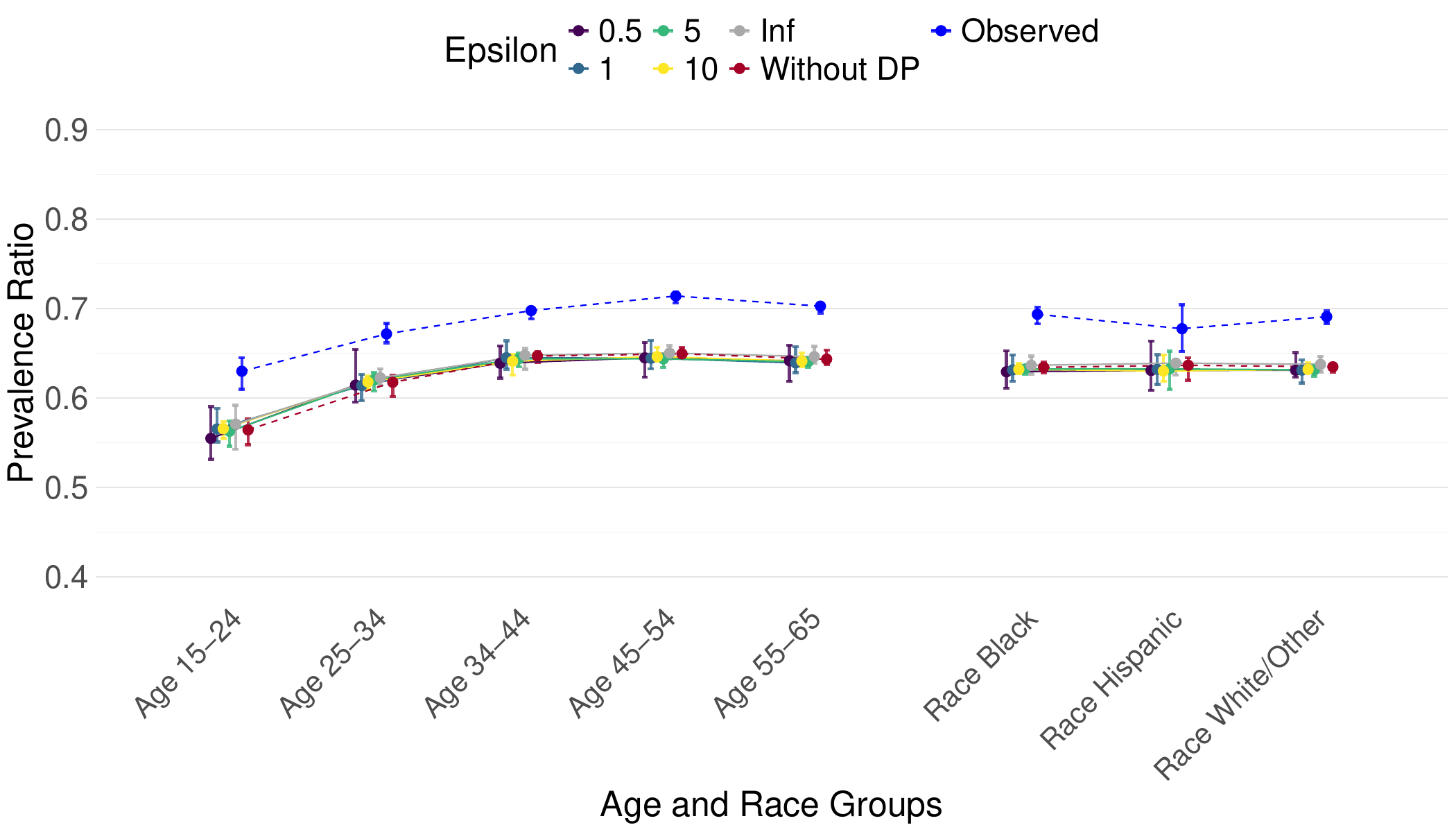}
        \caption{Truncated Degree 2}
        \label{fig:granular_bm_low_truncated_2}
    \end{subfigure}
    \hfill
    \begin{subfigure}[b]{0.48\textwidth}
        \centering
        \includegraphics[width=\textwidth]{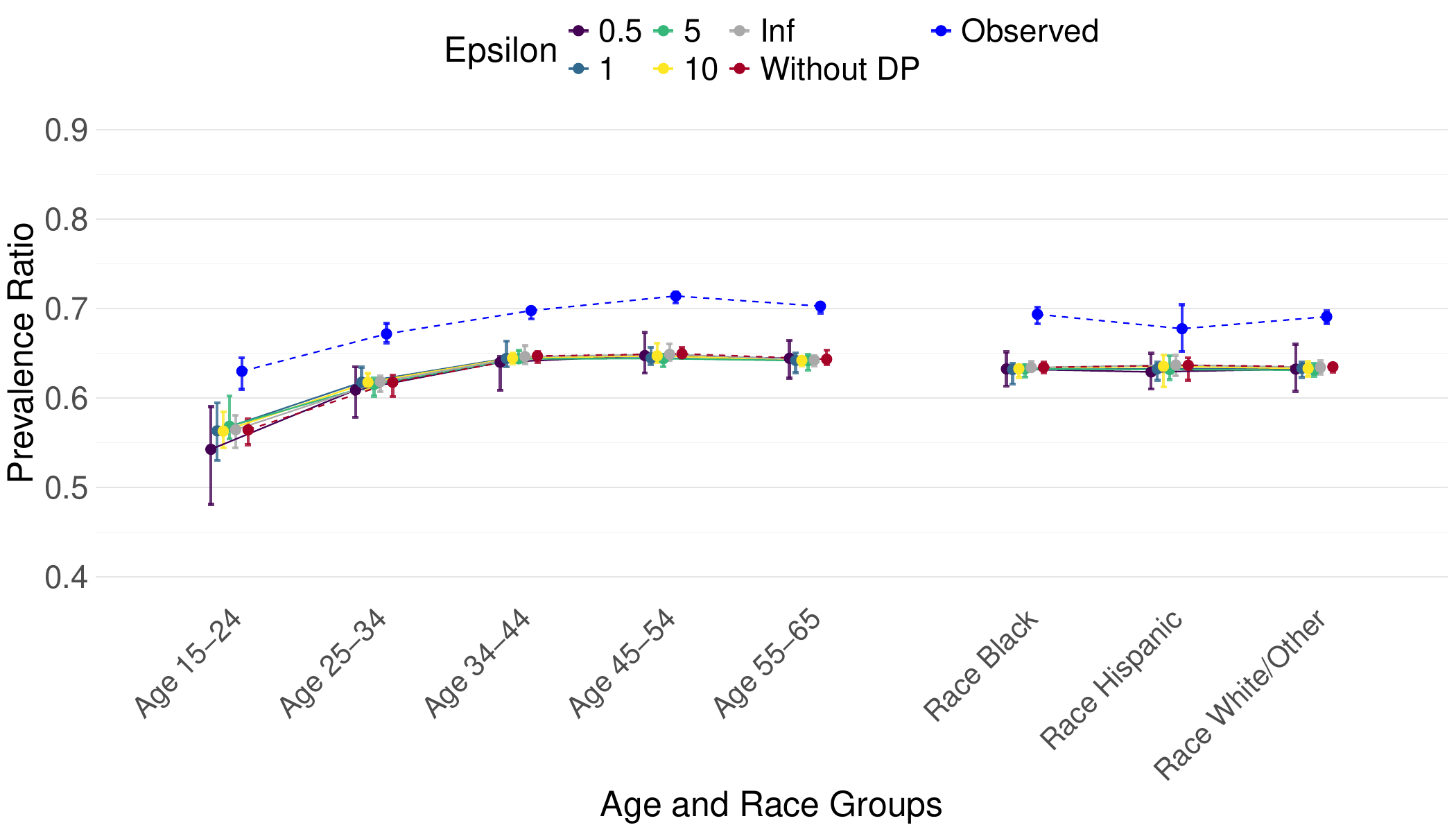}
        \caption{Truncated Degree 3}
    \end{subfigure}
    
    \vspace{0.5cm}
    
    \begin{subfigure}[b]{0.48\textwidth}
        \centering
        \includegraphics[width=\textwidth]{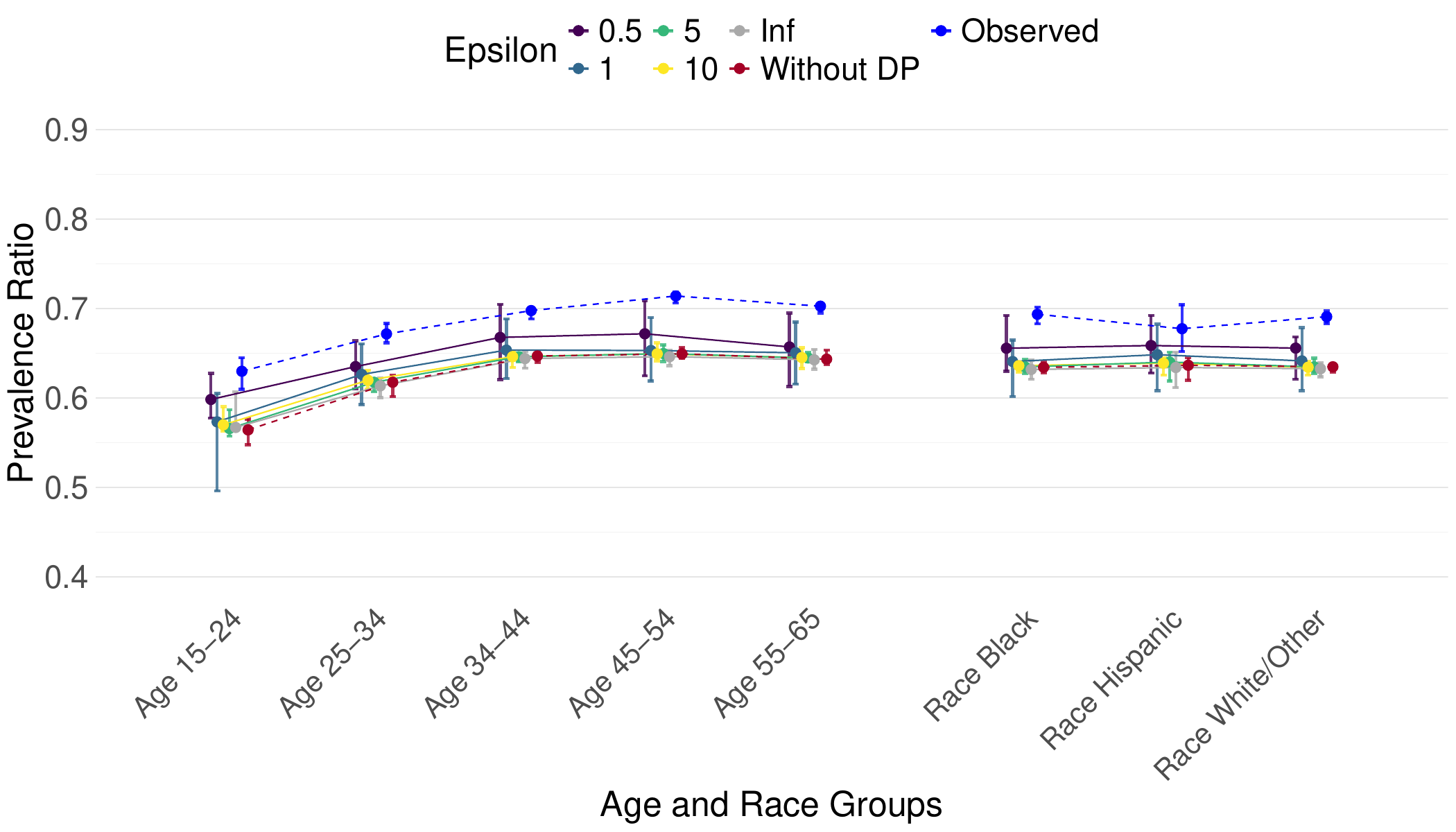}
        \caption{Truncated Degree 4}
    \end{subfigure}
    \hfill
    \begin{subfigure}[b]{0.48\textwidth}
        \centering
        \includegraphics[width=\textwidth]{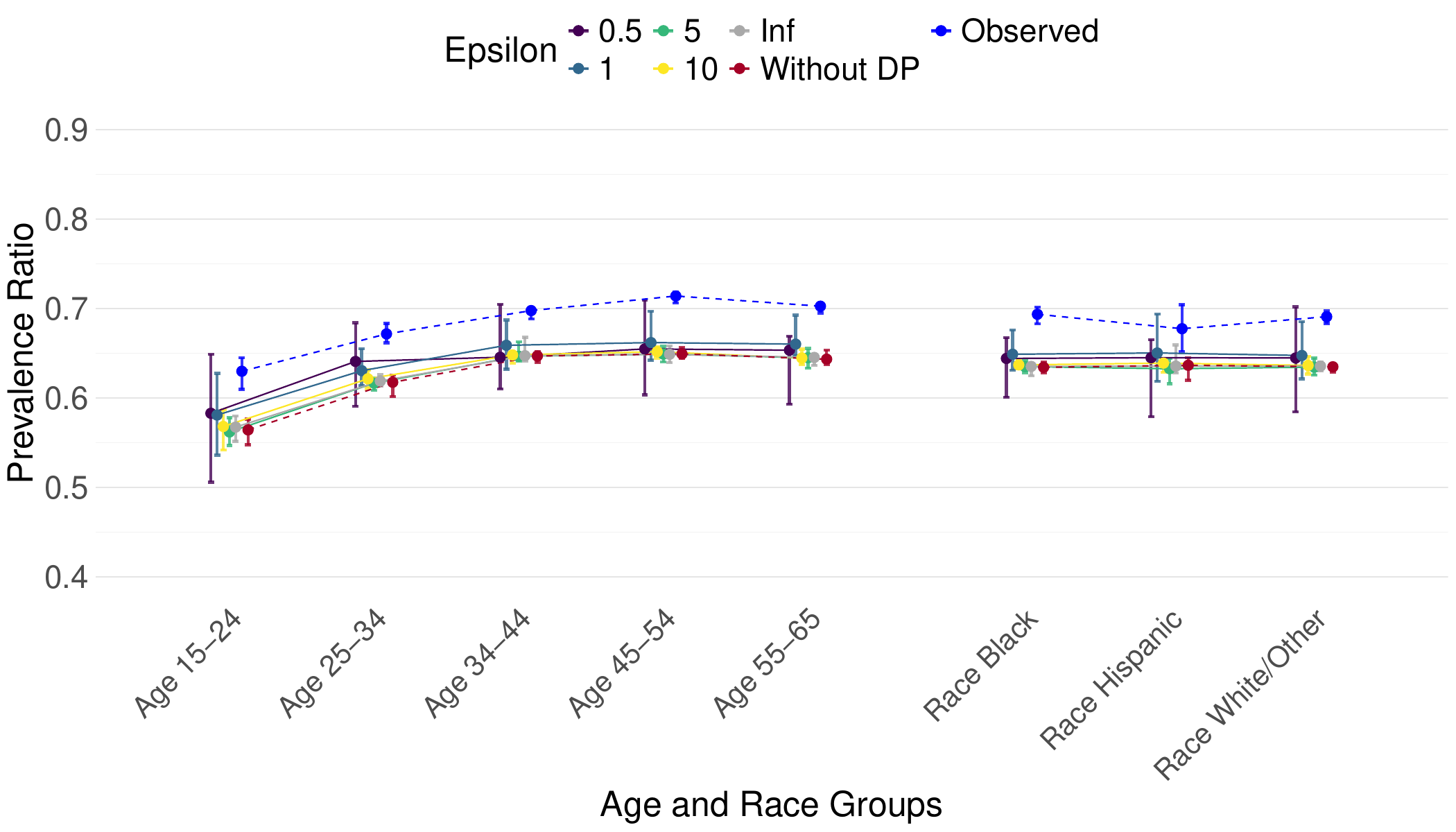}
        \caption{Truncated Degree 5}
    \end{subfigure}
    \caption{Granular analysis by age and race for BM high prevalence condition across different maximum degree constraints.}
    \label{fig:granular_bm_high}
\end{figure}

\begin{figure}[h!]
    \centering
    \begin{subfigure}[b]{0.48\textwidth}
        \centering
        \includegraphics[width=\textwidth]{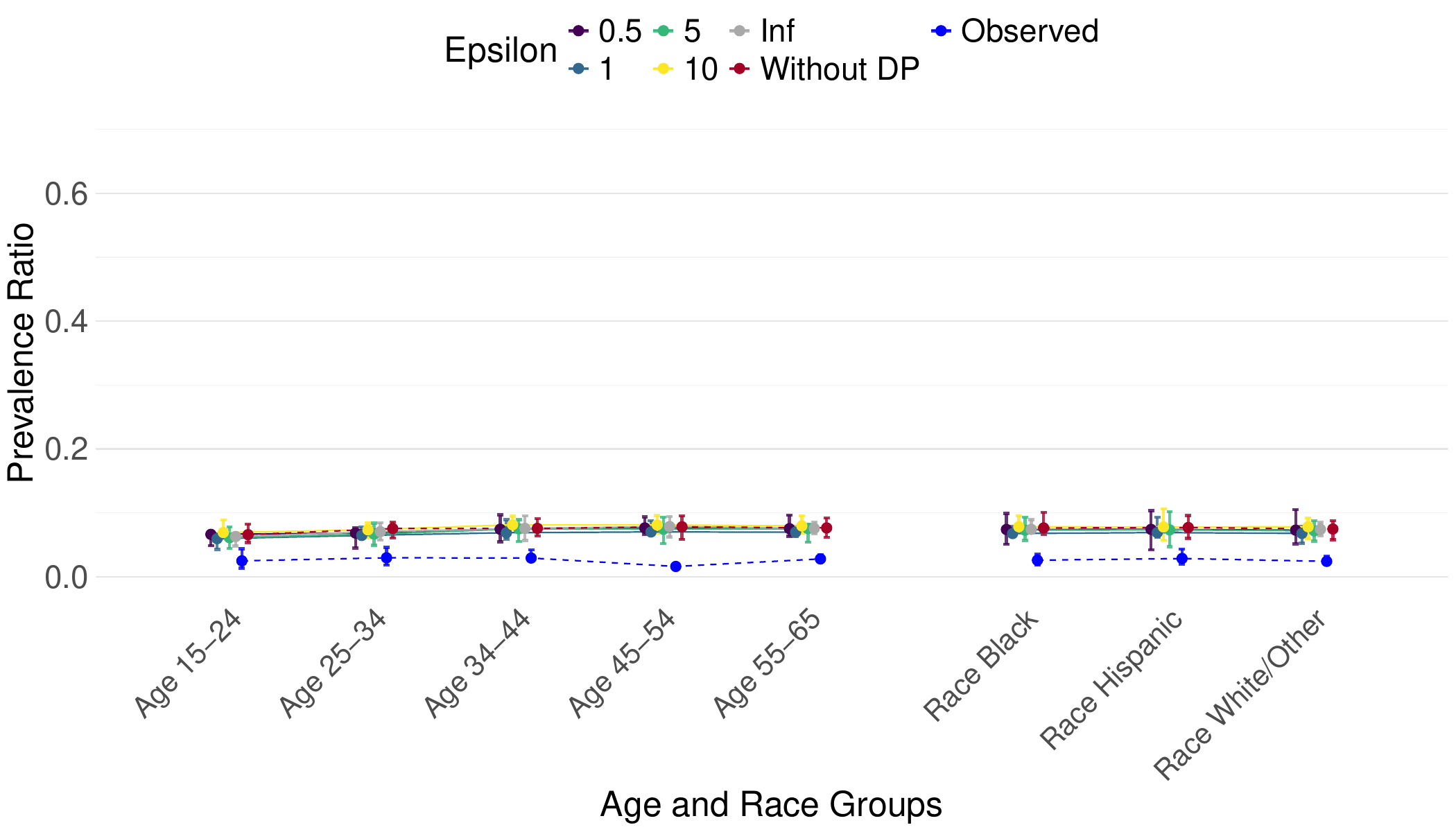}
        \caption{Truncated Degree 2}
        \label{fig:granular_bm_low_truncated_2}
    \end{subfigure}
    \hfill
    \begin{subfigure}[b]{0.48\textwidth}
        \centering
        \includegraphics[width=\textwidth]{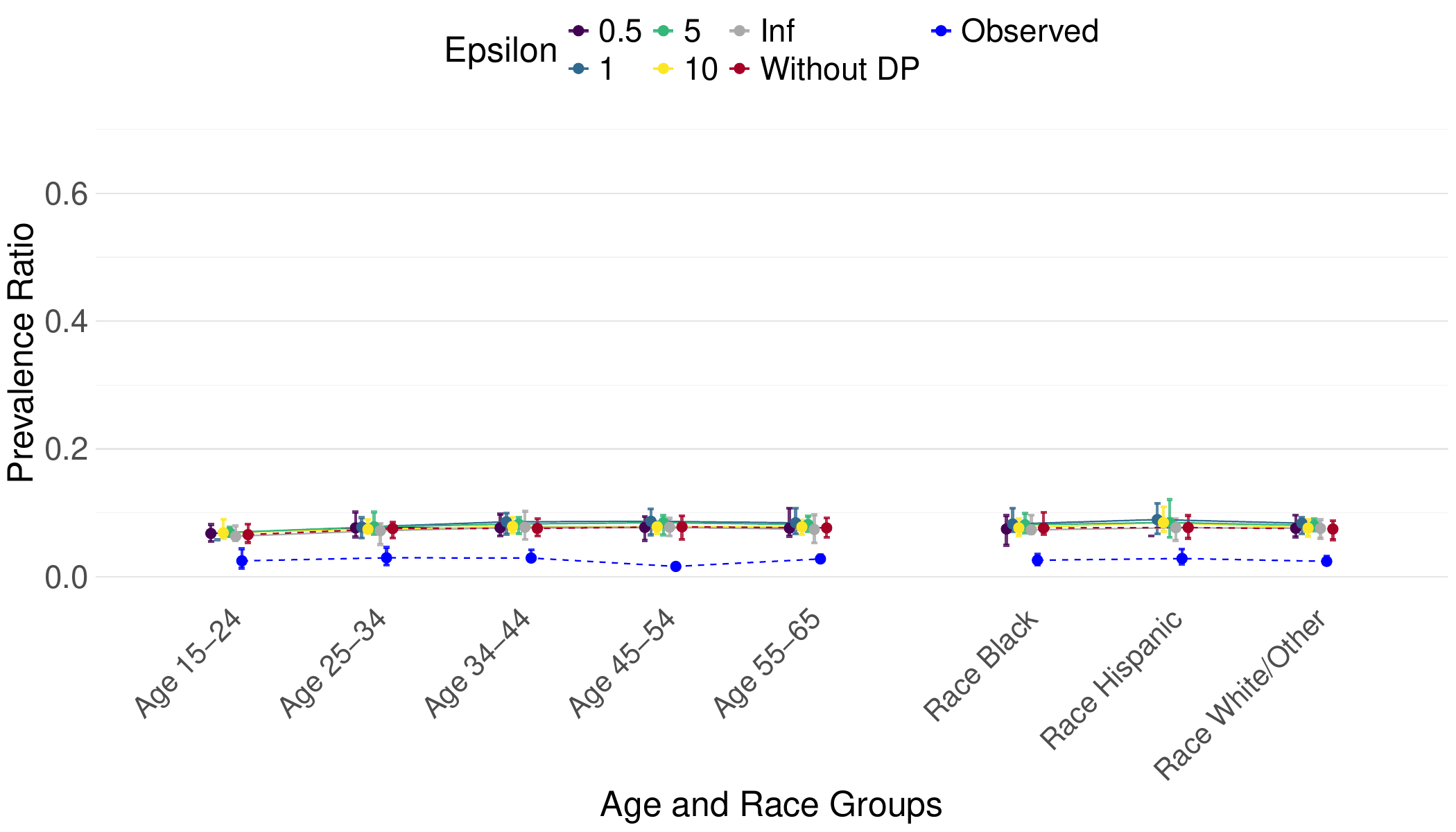}
        \caption{Truncated Degree 3}
    \end{subfigure}
    
    \vspace{0.5cm}
    
    \begin{subfigure}[b]{0.48\textwidth}
        \centering
        \includegraphics[width=\textwidth]{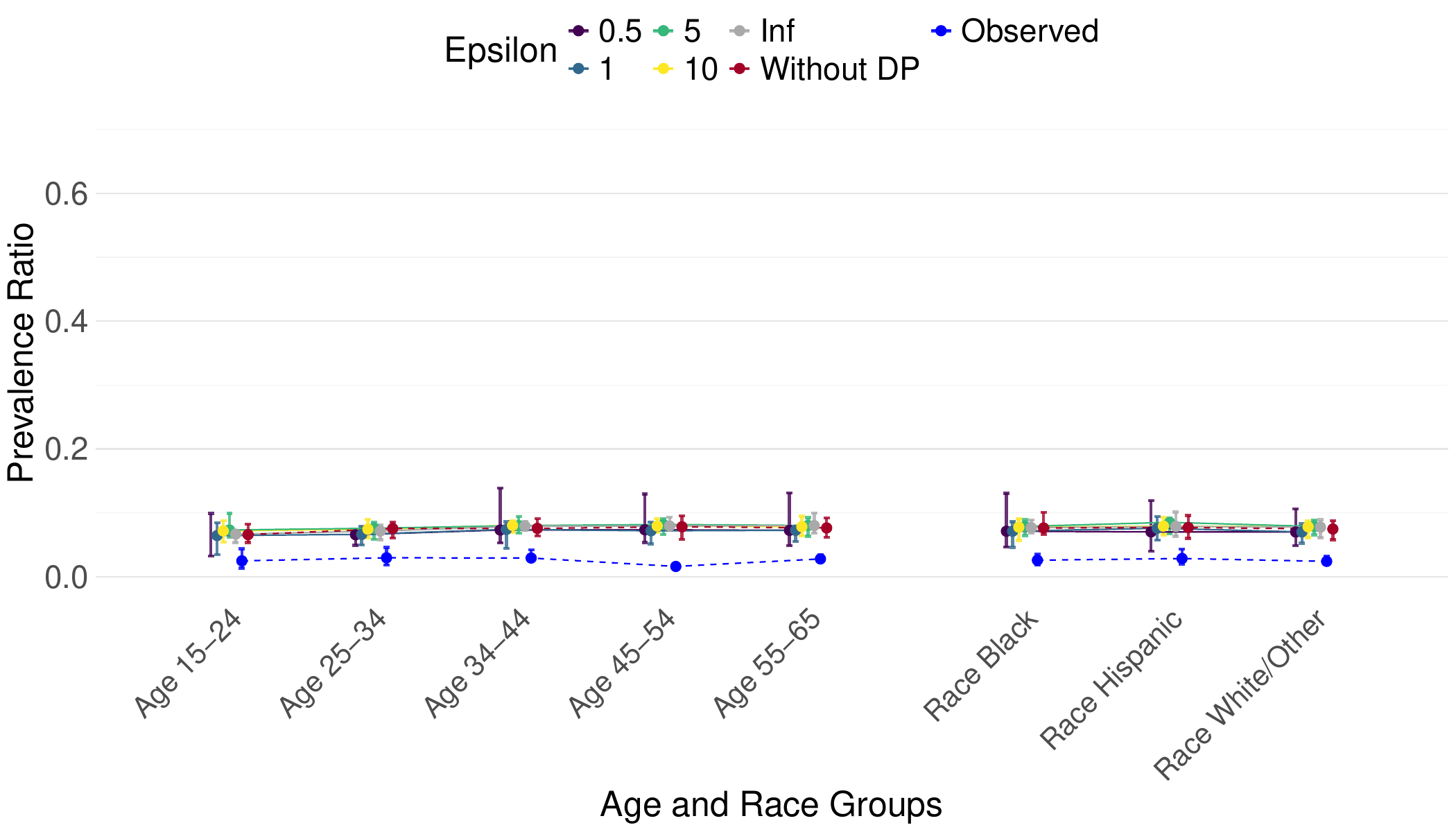}
        \caption{Truncated Degree 4}
    \end{subfigure}
    \hfill
    \begin{subfigure}[b]{0.48\textwidth}
        \centering
        \includegraphics[width=\textwidth]{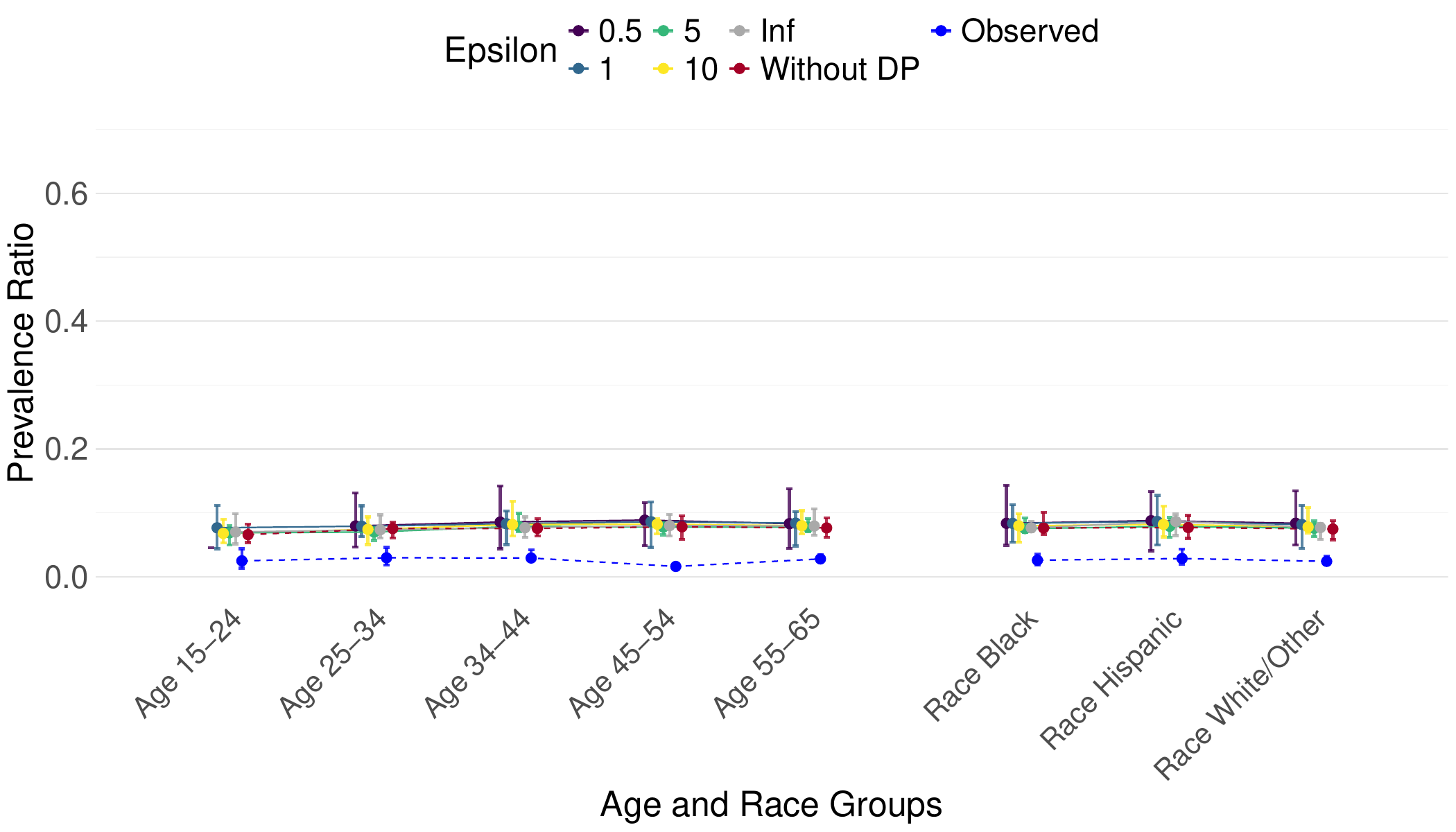}
        \caption{Truncated Degree 5}
    \end{subfigure}
    \caption{Granular analysis by age and race for BM low prevalence condition across different maximum degree constraints.}
    \label{fig:granular_bm_low}
\end{figure}

\clearpage

\subsection{Variance Analysis}
\label{app:variance-analysis}

Our experimental pipeline contains three nested levels of stochasticity: (i) $R$ independent differential privacy releases, (ii) $N$ synthetic networks per release, and (iii) $M$ epidemic simulations per network. This yields $RNM$ total observations of the response variable $y_{ijk}$, representing mean baseline prevalence for simulation $k$ on network $j$ from release $i$.

Under this balanced nested design, we partition the total sum of squares $SS_{\text{total}} = \sum_{i,j,k}(y_{ijk} - \bar{y}_{\cdot\cdot\cdot})^2$ into three orthogonal components, where $\bar{y}_{\cdot\cdot\cdot}$ denotes the grand mean across all observations:

\begin{align}
SS_R &= NM\sum_i(\bar{y}_{i\cdot\cdot} - \bar{y}_{\cdot\cdot\cdot})^2 & \text{(between-release)} \\
SS_{N|R} &= M\sum_{i,j}(\bar{y}_{ij\cdot} - \bar{y}_{i\cdot\cdot})^2 & \text{(between-network within-release)} \\
SS_E &= SS_{\text{total}} - SS_R - SS_{N|R} & \text{(within-network simulation)}
\end{align}

where $\bar{y}_{i\cdot\cdot}$ is the mean for release $i$ (averaged over all networks and simulations within that release), and $\bar{y}_{ij\cdot}$ is the mean for network $j$ in release $i$ (averaged over all simulations on that network).

The proportions $SS_R/SS_{\text{total}}$, $SS_{N|R}/SS_{\text{total}}$, and $SS_E/SS_{\text{total}}$ quantify the relative contribution of each source to total variation.

This decomposition requires no distributional assumptions, relying only on the balanced structure to exactly partition variation across the three hierarchical sources of randomness in our privacy-preserving epidemic modeling pipeline.

\subsubsection{Additional Results of Variance Analysis}
\label{app:variance-analysis-results}

The following tables provide quantitative breakdowns of the variance sources visualized in the plots above, showing the degrees of freedom (df), sum of squares (SS), mean squares (MS), and percentage of total variance explained by each source.

\begin{table}[h!]
\centering
\begin{tabular}{lcccc}
\toprule
Source & df & SS & MS & Var (\%) \\
\midrule
Release & 4 & 0.0506 & 0.0127 & 54.14 \\
Network : Release & 195 & 0.0257 & 0.0001 & 27.47 \\
Simulation : Network : Release & 1800 & 0.0172 & 0.0000 & 18.39 \\
\bottomrule
\end{tabular}
\caption{Analysis of variance for ERGM low prevalence condition showing the breakdown of variance sources in the experimental pipeline.}
\label{tab:anova_ergm_low}
\end{table}

\begin{table}[h!]
\centering
\begin{tabular}{lcccc}
\toprule
Source & df & SS & MS & Var (\%) \\
\midrule
Release & 4 & 0.0282 & 0.0070 & 24.86 \\
Network : Release & 195 & 0.0642 & 0.0003 & 56.66 \\
Simulation : Network : Release & 1800 & 0.0209 & 0.0000 & 18.47 \\
\bottomrule
\end{tabular}
\caption{Analysis of variance for SBM high prevalence condition showing the breakdown of variance sources in the experimental pipeline.}
\label{tab:anova_bm_high}
\end{table}

\begin{table}[h!]
\centering
\begin{tabular}{lcccc}
\toprule
Source & df & SS & MS & Var (\%) \\
\midrule
Release & 4 & 0.0024 & 0.0006 & 9.46 \\
Network : Release & 195 & 0.0129 & 0.0001 & 51.13 \\
Simulation : Network : Release & 1800 & 0.0099 & 0.0000 & 39.41 \\
\bottomrule
\end{tabular}
\caption{Analysis of variance for SBM low prevalence condition showing the breakdown of variance sources in the experimental pipeline.}
\label{tab:anova_bm_low}
\end{table}

\begin{figure}[h!]
    \centering
    
    \begin{subfigure}[b]{0.5\textwidth}
        \centering
        \includegraphics[width=\textwidth]{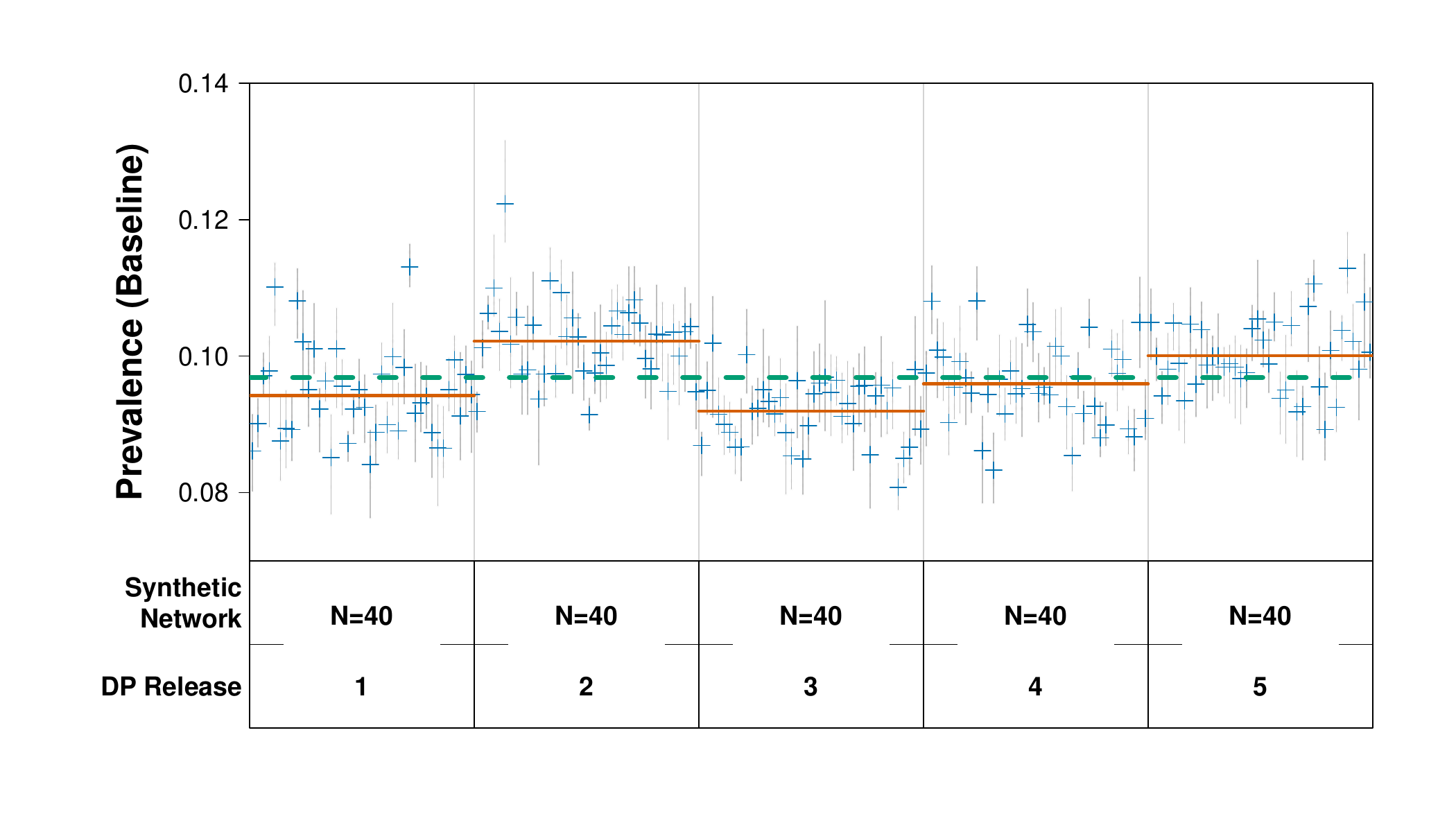}
        \caption{SBM - High Prevalence}
    \end{subfigure}

    \begin{subfigure}[b]{0.5\textwidth}
        \centering
        \includegraphics[width=\textwidth]{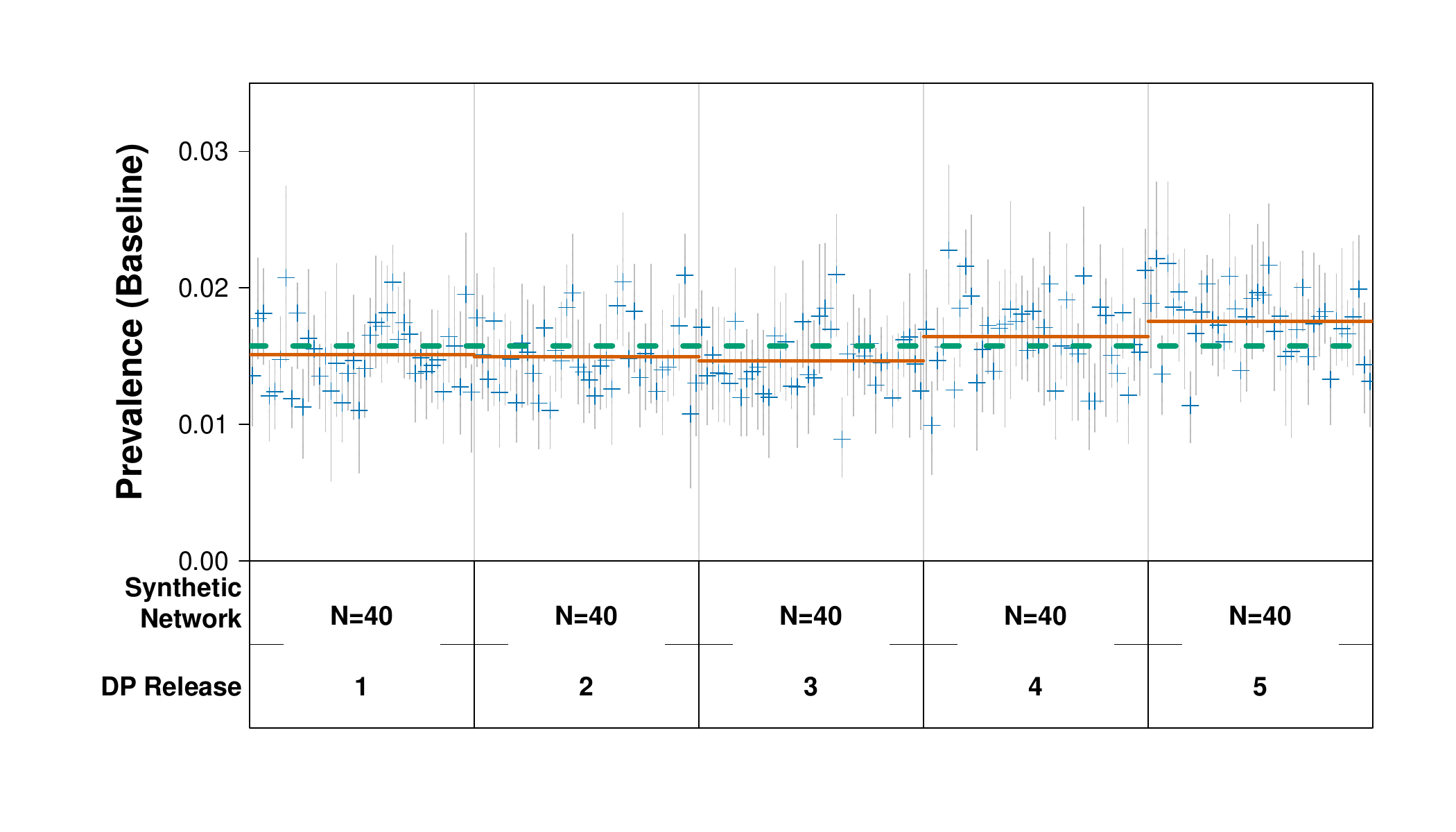}
        \caption{SBM - Low Prevalence}
    \end{subfigure}
    
    \begin{subfigure}[b]{0.5\textwidth}
        \centering
        \includegraphics[width=\textwidth]{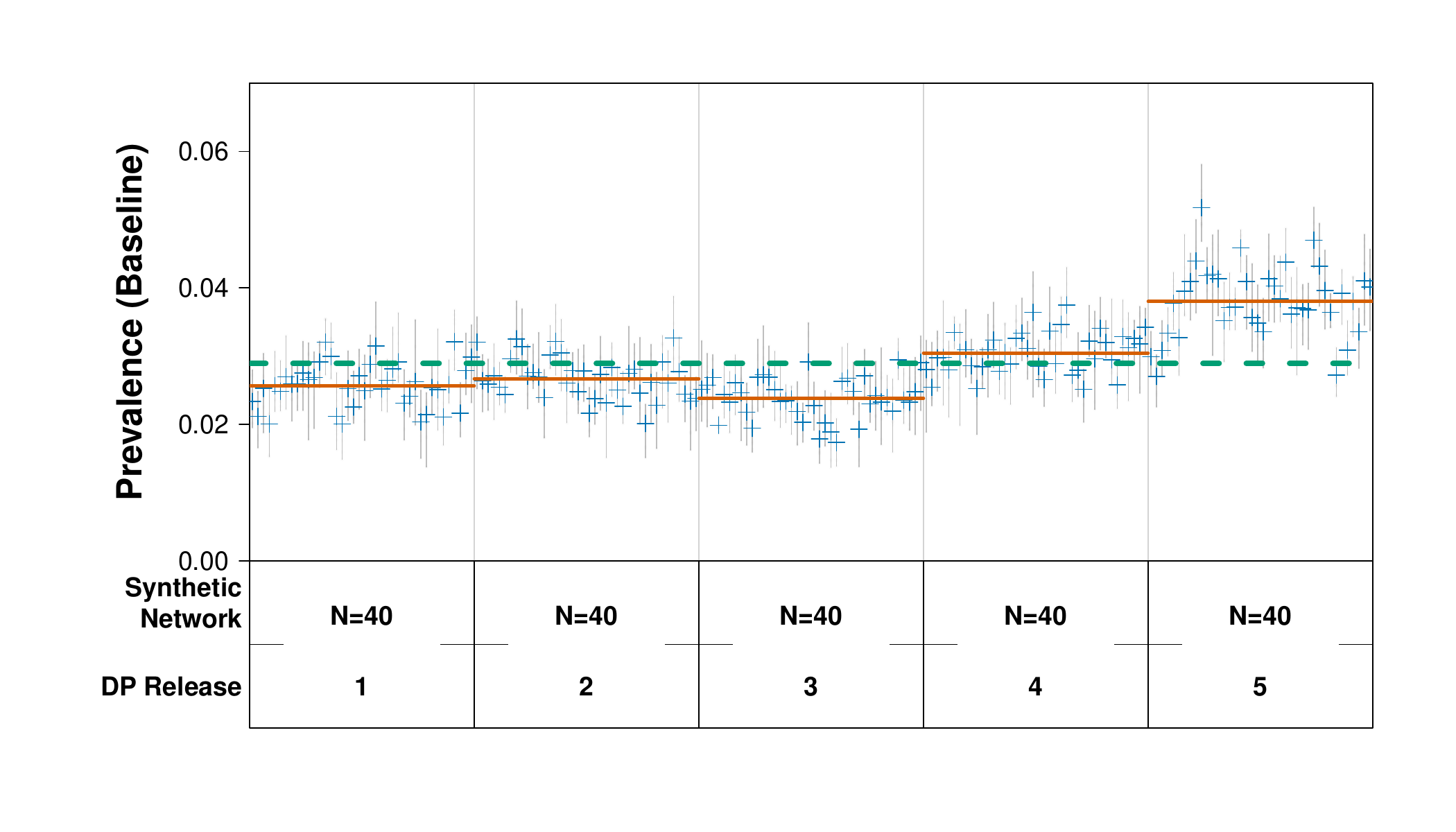}
        \caption{ERGM - Low Prevalence}
    \end{subfigure}
    \caption{
    Variance analysis at privacy budget $\eps = 1$ and truncated degree $\Delta = 3$ showing the breakdown of different sources of randomness in the experimental pipeline across different modeling approaches and prevalence conditions. The plots show the average prevalence of baseline scenario over all synthetic networks and simulations (\textcolor{purple}{solid line}), per differential private release (\textcolor{orange}{dashed line}), per synthetic network (\textcolor{blue}{plus sign}), and the range per simulation (\textcolor{gray}{horizontal line}).}
    \label{fig:variance_analysis}
\end{figure}

\clearpage

\subsection{Degree-Level Network Statistics}
\label{app:quality-metrics-degree}

\begin{figure}[h!]
    \centering
    \begin{subfigure}[b]{\textwidth}
        \centering
        \includegraphics[width=\textwidth]{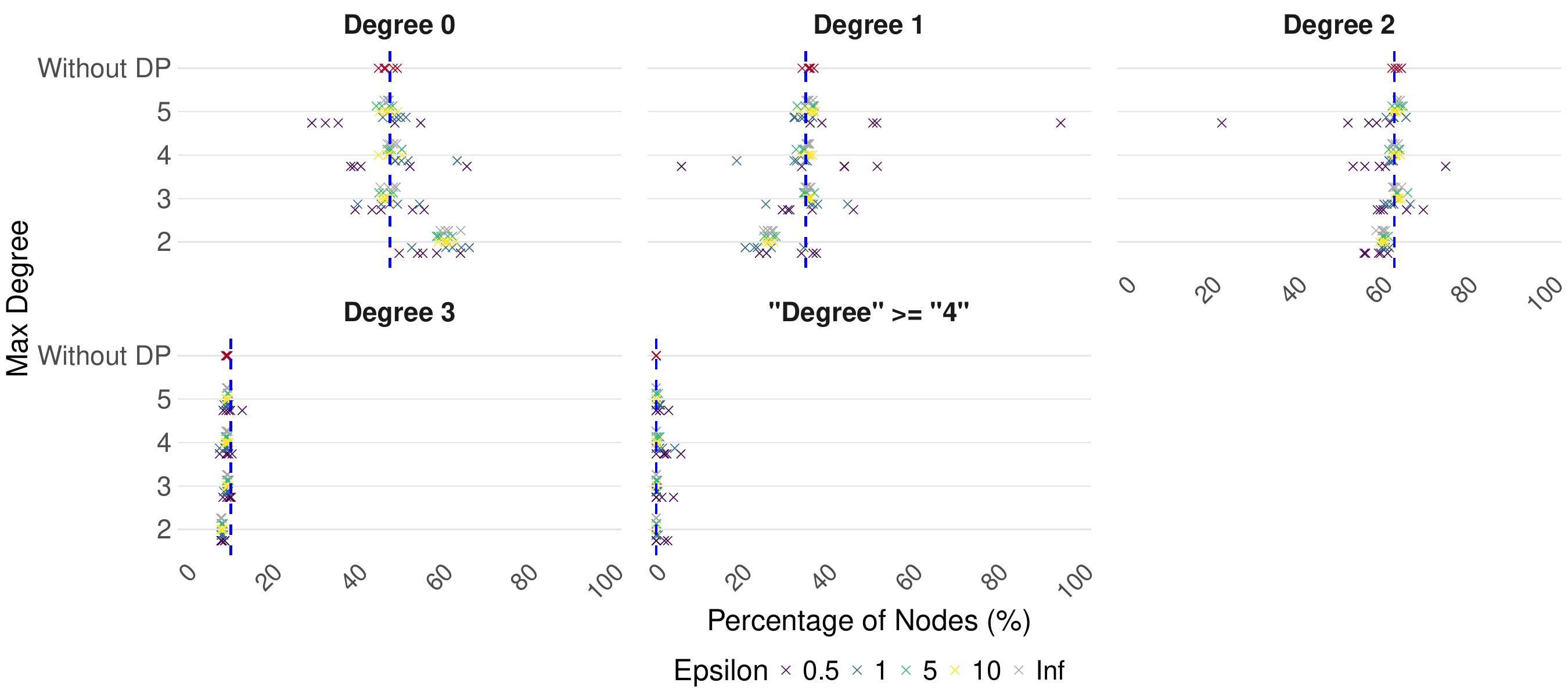}
        \caption{ERGM}
        \label{fig:node_metrics_ergm}
    \end{subfigure}
    \begin{subfigure}[b]{\textwidth}
        \centering
        \includegraphics[width=\textwidth]{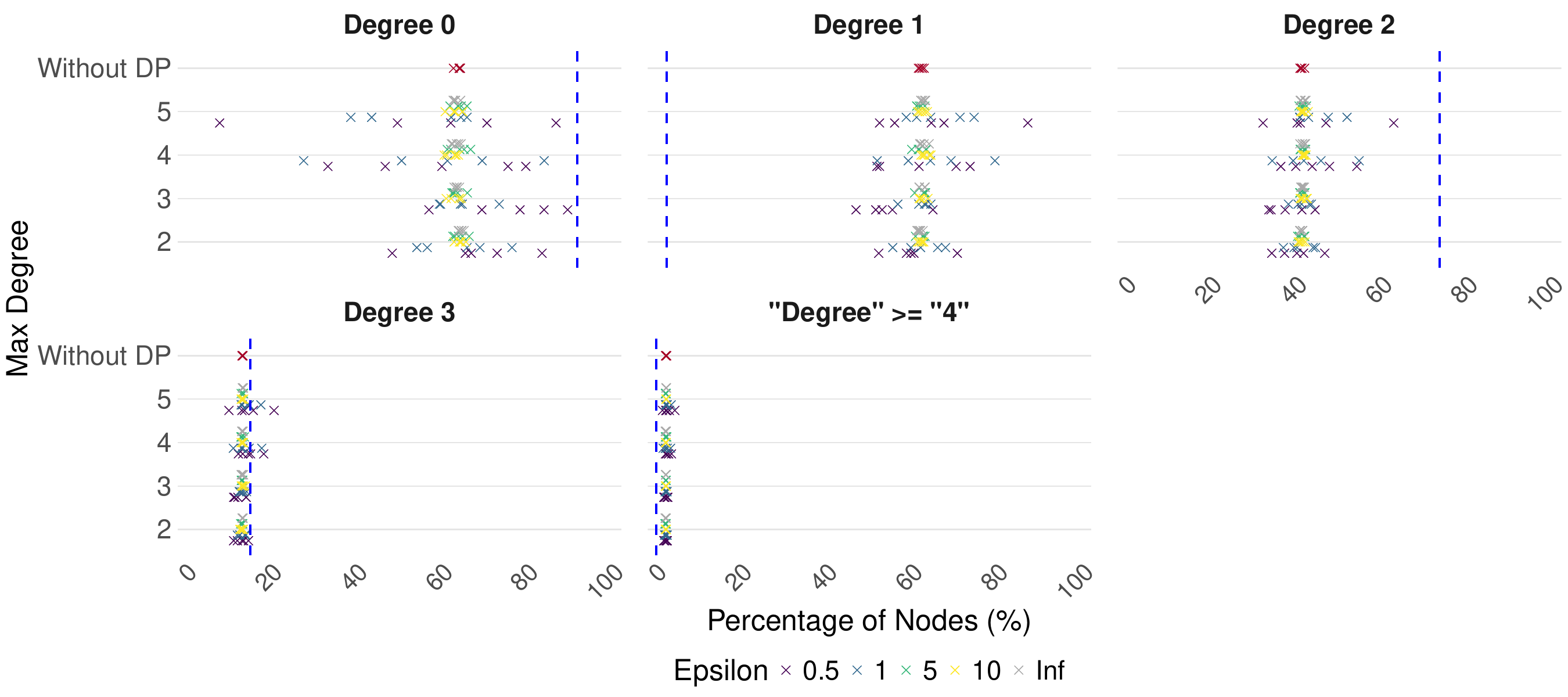}
        \caption{SBM}
        \label{fig:node_metrics_bm}
    \end{subfigure}
    \caption{Degree-level network statistics across different modeling approaches. The x-axis represents the proportion of nodes with each degree value as a percentage of the total network (10,000 nodes).}
    \label{fig:node_metrics}
\end{figure}

\clearpage

\shnote{if not referred, delete!}
\subsection{Group-Specific Analysis}
\label{sec:demographic_analysis}

\begin{figure}[h!]
    \centering
    \begin{subfigure}[b]{0.48\textwidth}
        \centering
        \includegraphics[width=\textwidth]{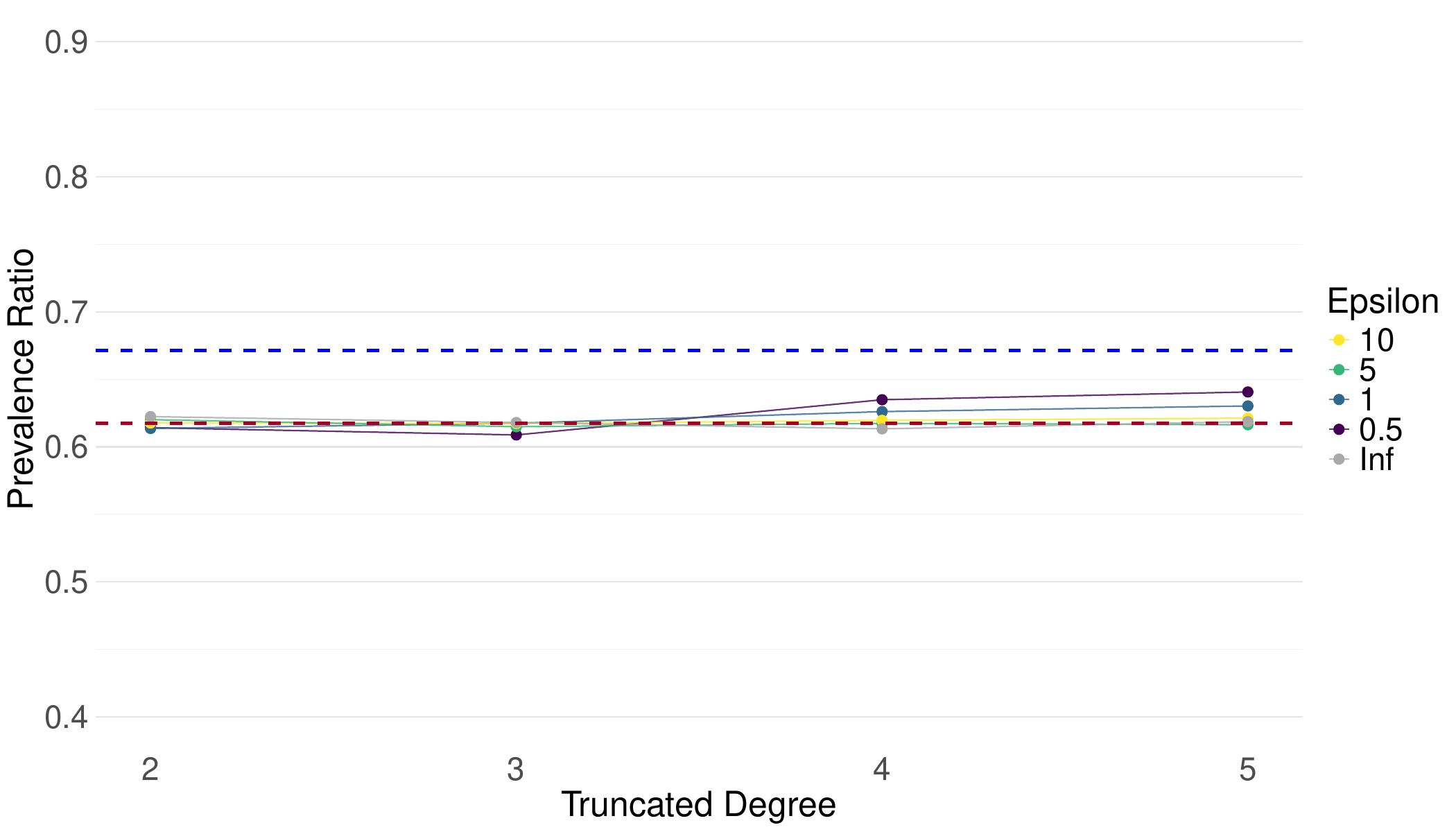}
        \caption{SBM - High Prevalence}
    \end{subfigure}
    \hfill
    \begin{subfigure}[b]{0.48\textwidth}
        \centering
        \includegraphics[width=\textwidth]{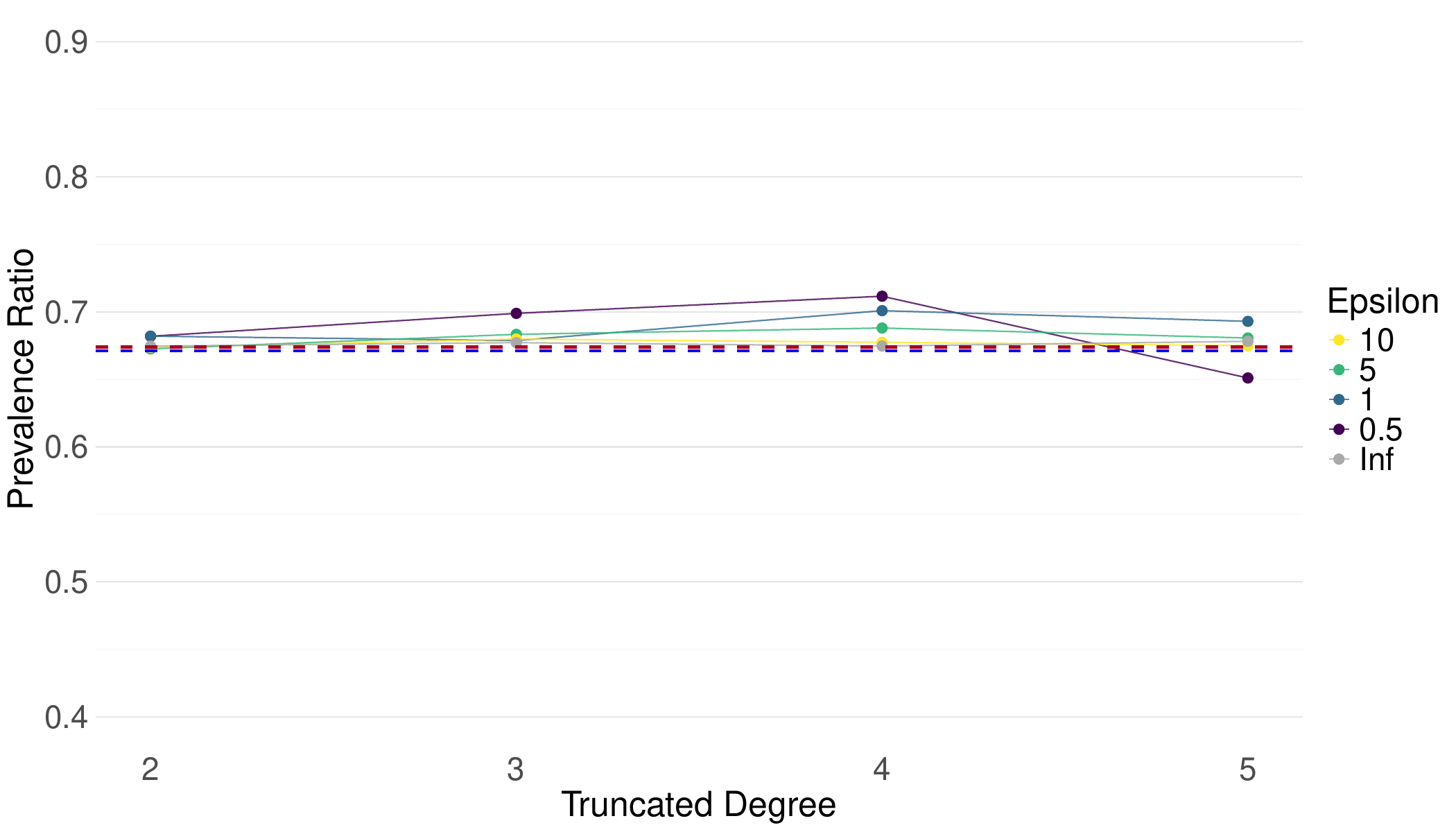}
        \caption{ERGM - High Prevalence}
    \end{subfigure}
    
    \vspace{0.5cm}
    
    \begin{subfigure}[b]{0.48\textwidth}
        \centering
        \includegraphics[width=\textwidth]{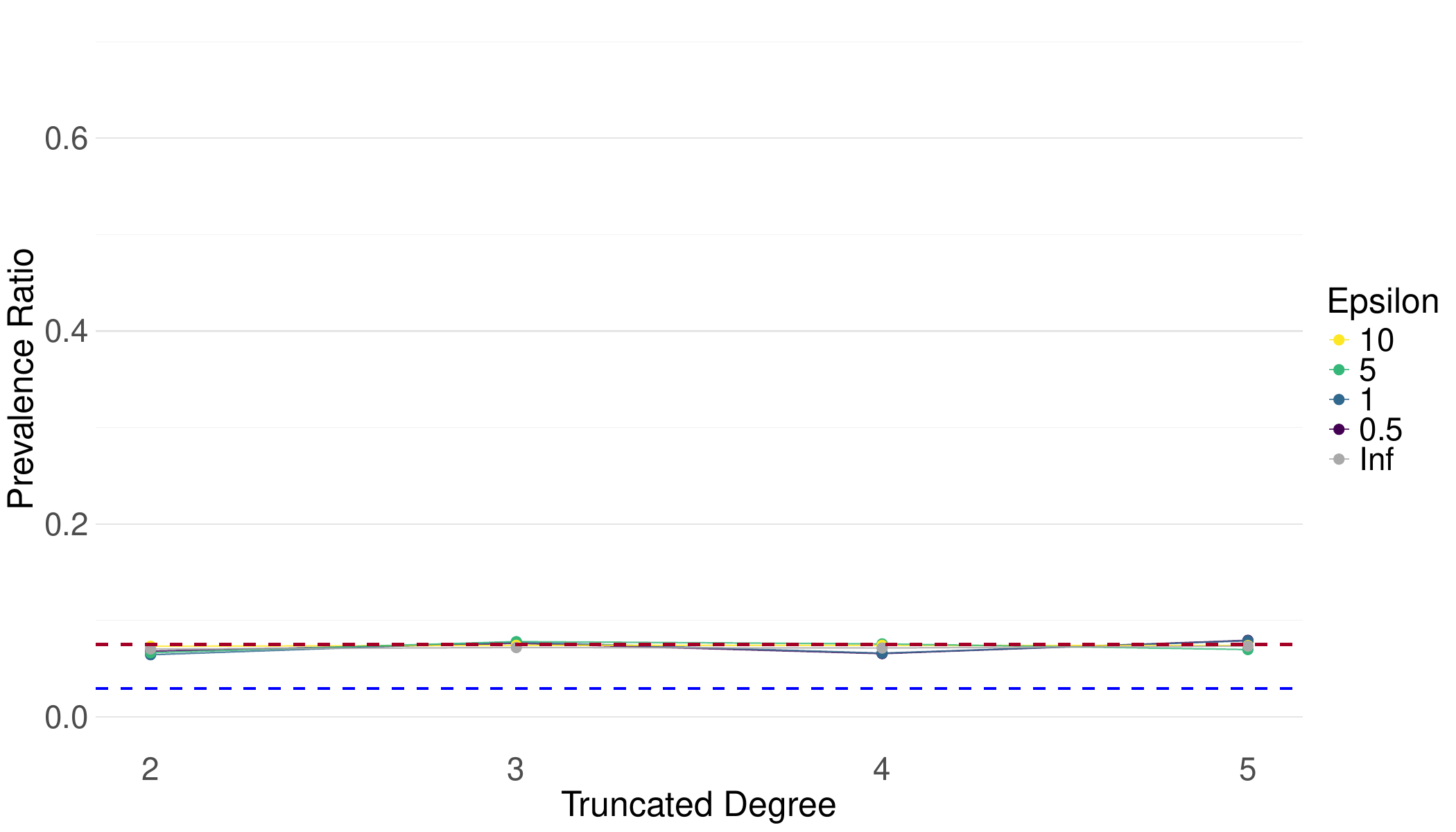}
        \caption{SBM - Low Prevalence}
    \end{subfigure}
    \hfill
    \begin{subfigure}[b]{0.48\textwidth}
        \centering
        \includegraphics[width=\textwidth]{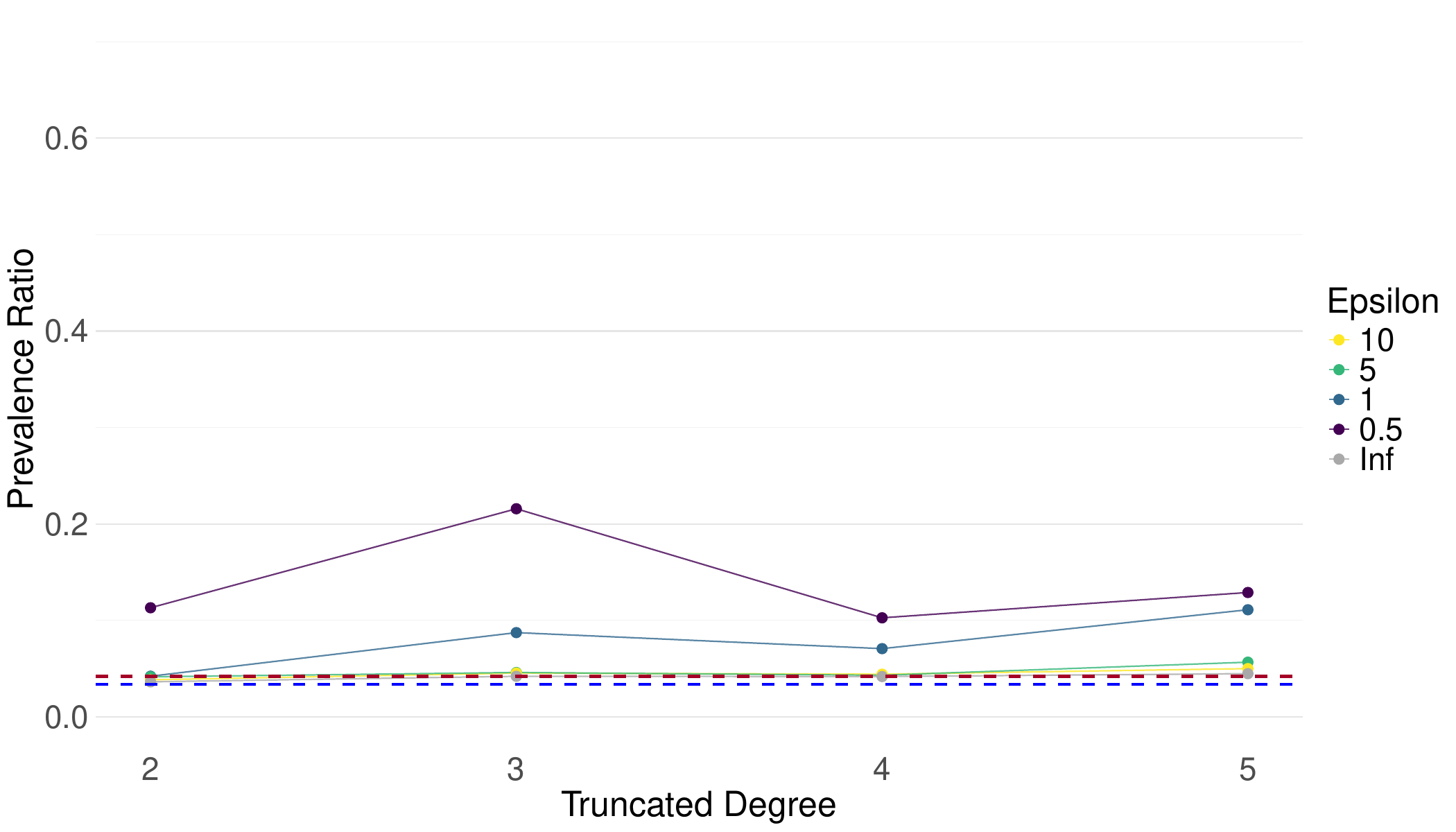}
        \caption{ERGM - Low Prevalence}
    \end{subfigure}

    \caption{Prevalence ratio across different maximum degree thresholds for age group 25-34 across all four modeling scenarios.}
    \label{fig:demographic_age_25_34}
\end{figure}

\begin{figure}[h!]
    \centering
    \begin{subfigure}[b]{0.48\textwidth}
        \centering
        \includegraphics[width=\textwidth]{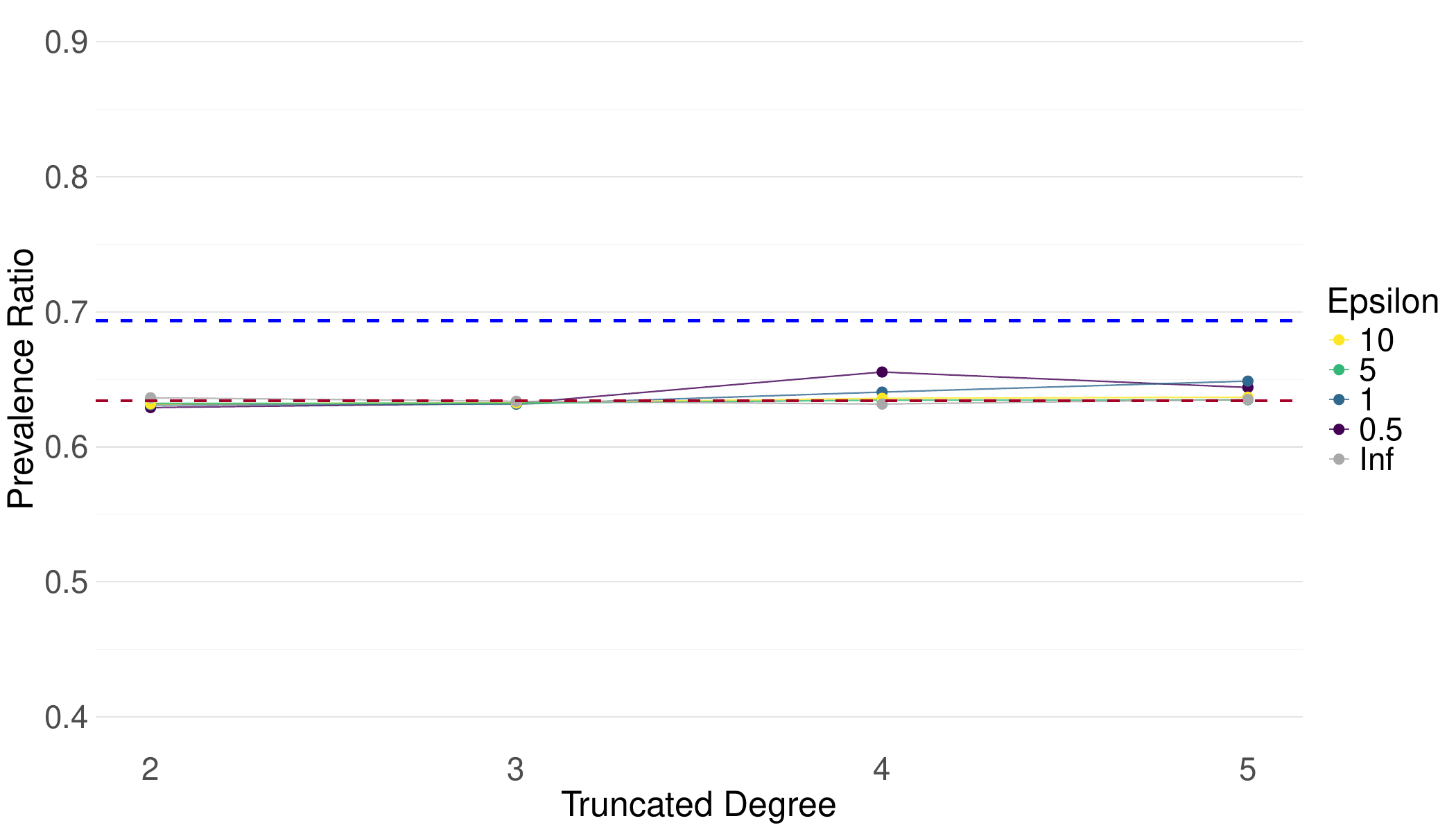}
        \caption{SBM - High Prevalence}
    \end{subfigure}
    \hfill
    \begin{subfigure}[b]{0.48\textwidth}
        \centering
        \includegraphics[width=\textwidth]{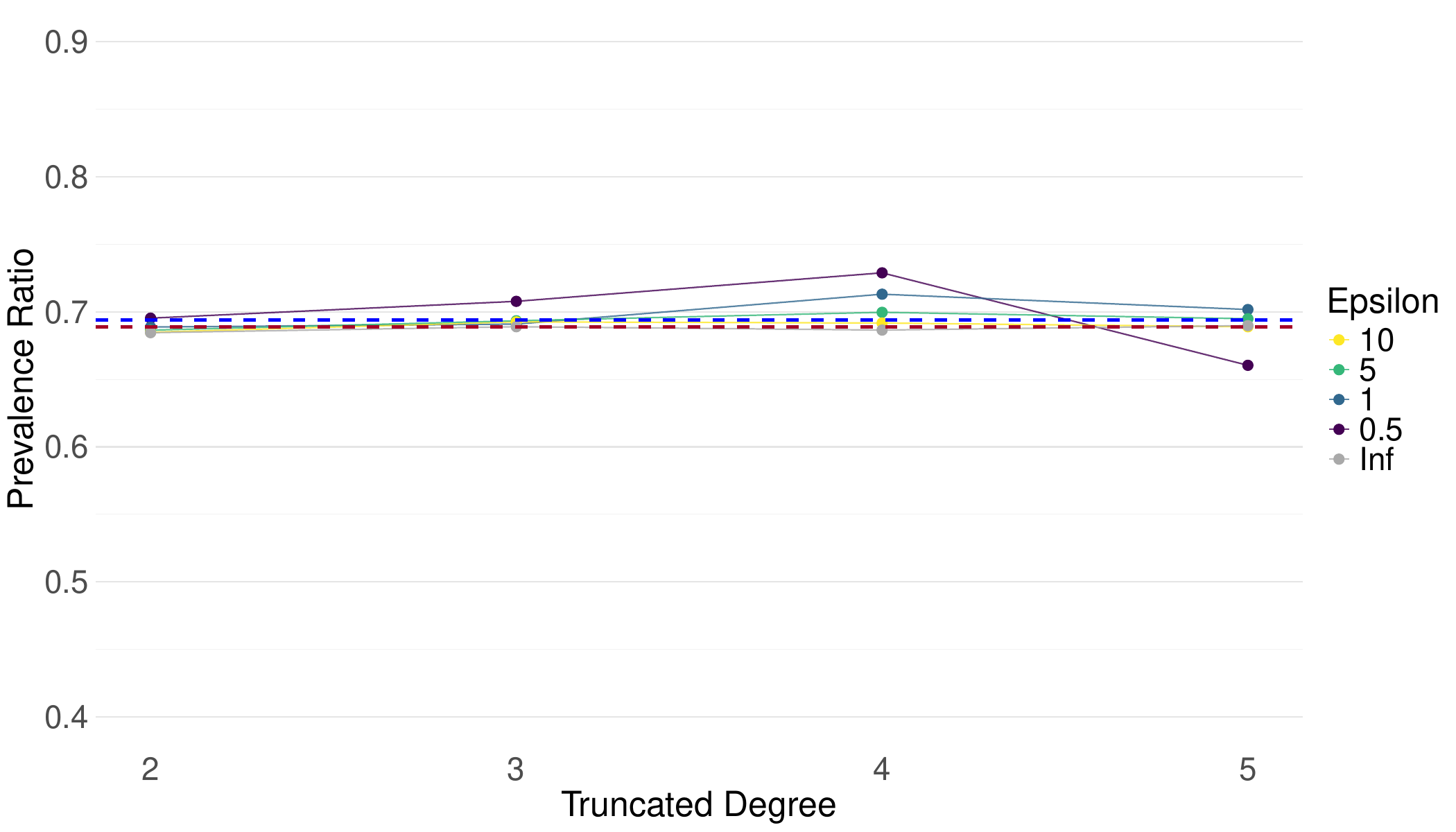}
        \caption{ERGM - High Prevalence}
    \end{subfigure}
    
    \vspace{0.5cm}
    
    \begin{subfigure}[b]{0.48\textwidth}
        \centering
        \includegraphics[width=\textwidth]{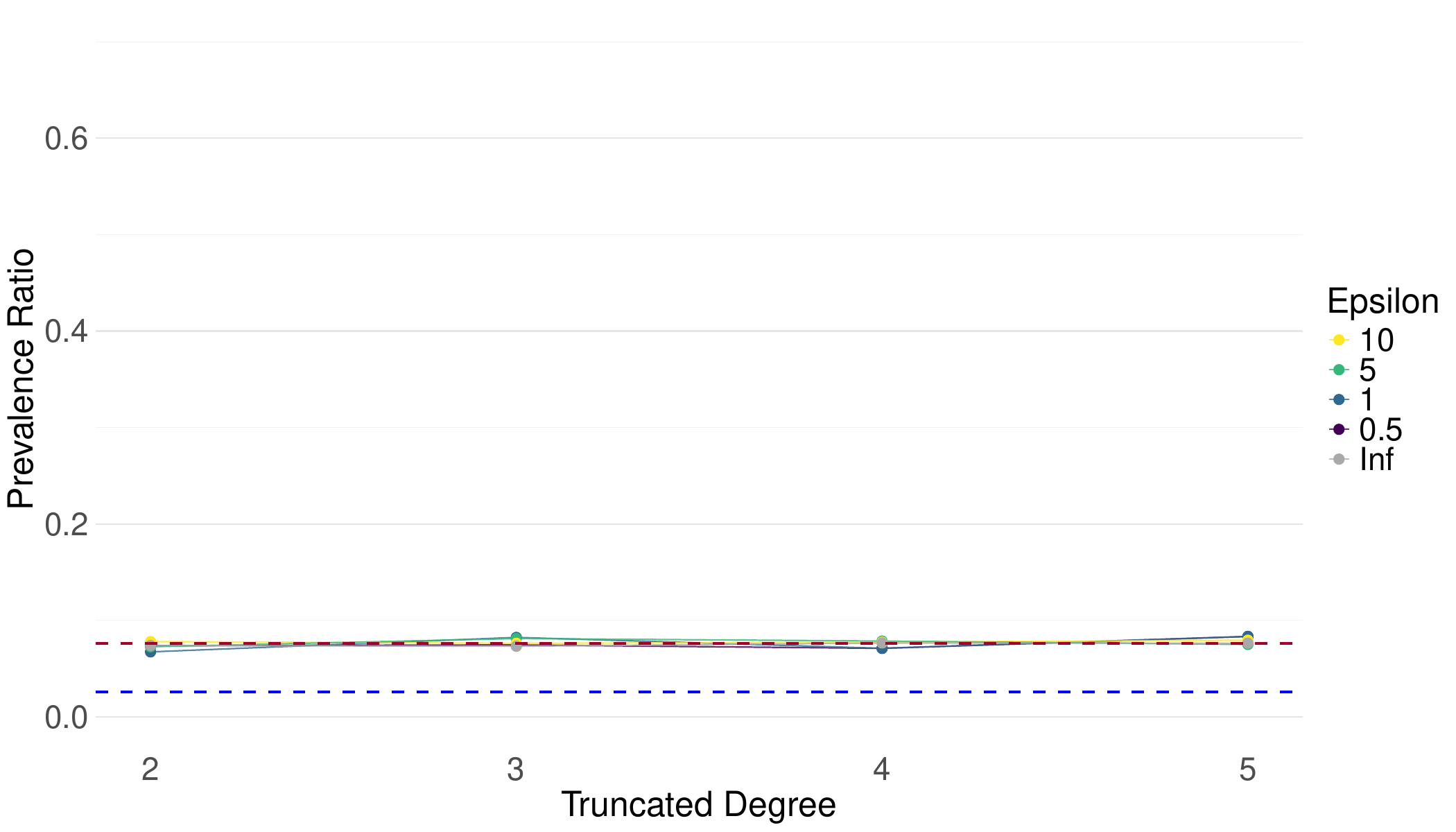}
        \caption{SBM - Low Prevalence}
    \end{subfigure}
    \hfill
    \begin{subfigure}[b]{0.48\textwidth}
        \centering
        \includegraphics[width=\textwidth]{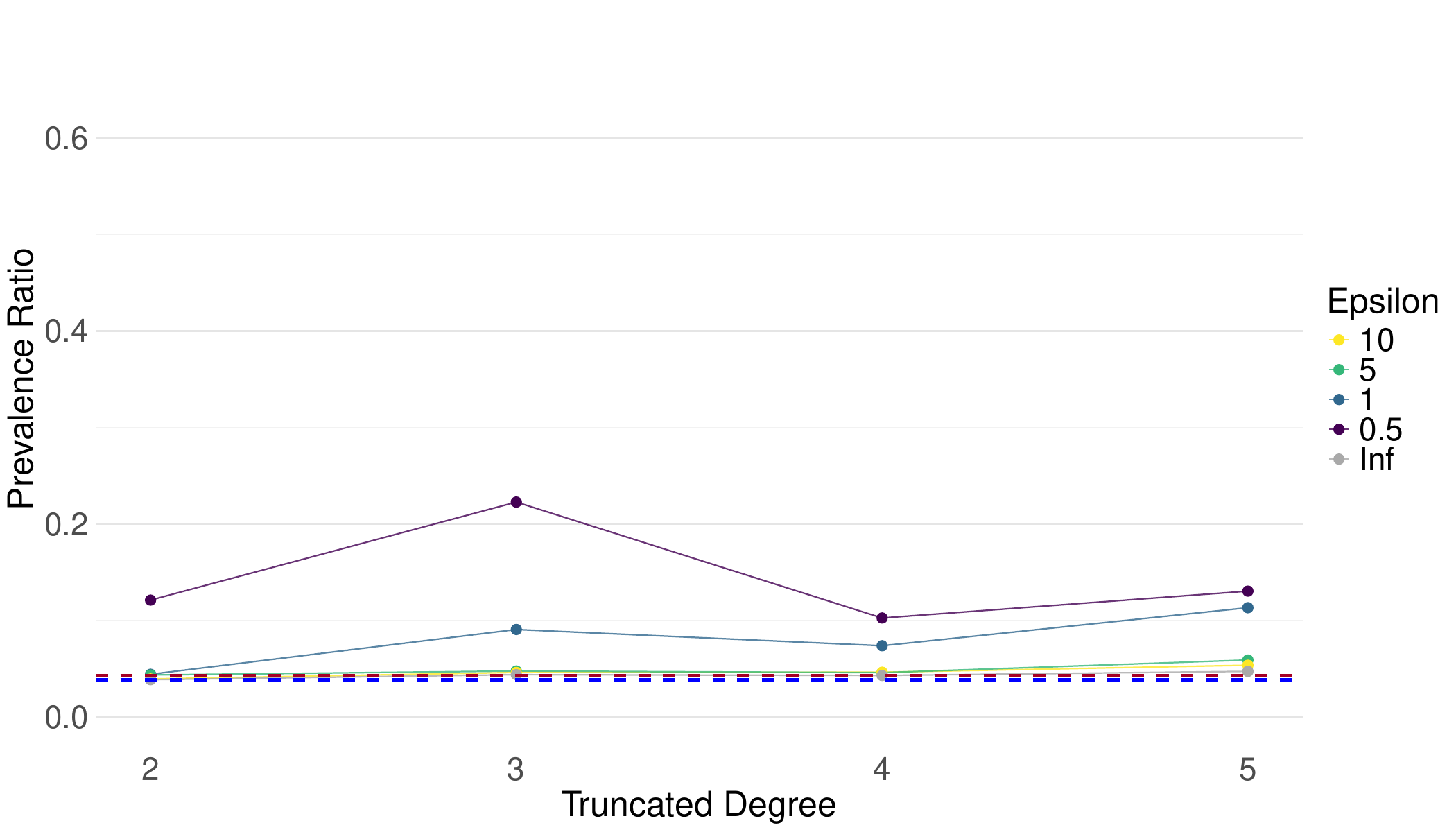}
        \caption{ERGM - Low Prevalence}
    \end{subfigure}

    \caption{Prevalence ratio across different maximum degree thresholds for Black population across all four modeling scenarios.}
    \label{fig:demographic_race_black}
\end{figure}

\clearpage

\end{document}